\def\dosth#1{\ifx###1##\else\dofirst#1\anytoken\fi}
\def\doagain#1\anytoken{\dosth{#1}}
\def\payoffpairs#1#2#3{\m=#1\multiply\m by 4 \advance\m by -1 \n=1
  \def\dofirst##1{\put(\n,-\m){\makebox(0,0){\strut##1}}\advance\n by 4 \doagain}%
  \dosth{#2\strut}%
  \m=#1\multiply\m by 4 \advance\m by -3 \n=3 \dosth{#3\strut}}
\def\singlepayoffs#1#2{\m=#1\multiply\m by 4 \advance\m by -2 \n=2
  \def\dofirst##1{\put(\n,-\m){\makebox(0,0){\strut##1}}\advance\n by 4 \doagain}%
  {\large\dosth{#2\strut}}}
\newcommand{\bimatrixgame}[8]{%
\setlength{\unitlength}{#1}%
\newcount\rows
\newcount\cols
\rows=#2
\cols=#3
\newcount\rowcoord
\newcount\colcoord
\rowcoord=\rows
\colcoord=\cols
\multiply\rowcoord by 4
\multiply\colcoord by 4
\newcount\m
\newcount\n
\m=\rowcoord
\n=\colcoord
\advance\m by 2 
\advance\n by 2 
\begin{picture}(\n,\m)(-2,-\rowcoord)
\m=\rows
\n=\cols
\advance\m by 1
\advance\n by 1 
\thinlines
\multiput(0,0)(0,-4){\m}{\line(1,0){\colcoord}}
\multiput(0,0)(4,0){\n}{\line(0,-1){\rowcoord}}
\put(0,0){\line(-1,1){2}}
\put(-1.5,0.5){\makebox(0,0)[r]{#4}}  
\put(-.7,1.7){\makebox(0,0)[l]{#5}}   
\n=2
\def\dofirst##1{\put(-0.8,-\n){\makebox(0,0)[r]{\strut##1}}\advance\n by 4
   \doagain}
\dosth{#6\strut} 
\n=2
\def\dofirst##1{\put(\n,1.0){\makebox(0,0){\strut##1}}\advance\n by 4
   \doagain}
\dosth{#7\strut}#8%
\end{picture}}
\theoremstyle{definition}
\newtheorem{definition}{Definition}
\theoremstyle{plain}
\newtheorem{theorem}{Theorem}[section]
\newtheorem{lemma}[theorem]{Lemma}
\newtheorem{corollary}[theorem]{Corollary}
\theoremstyle{remark}
\newtheorem{remark}{Remark}
\Crefname{claim}{Claim}{Claims}
\def\fp/{\textup{\textsf{FP}}}
\def\p/{\textup{\textsf{P}}}
\def\np/{\textup{\textsf{NP}}}
\def\conp/{\textup{\textsf{co-NP}}}
\def\fnp/{\textup{\textsf{FNP}}}
\def\tfnp/{\textup{\textsf{TFNP}}}
\def\ptfnp/{\textup{\textsf{PTFNP}}}
\def\ppa/{\textup{\textsf{PPA}}}
\def\ppad/{\textup{\textsf{PPAD}}}
\def\ppads/{\textup{\textsf{PPADS}}}
\def\ppp/{\textup{\textsf{PPP}}}
\def\pwpp/{\textup{\textsf{PWPP}}}
\def\pls/{\textup{\textsf{PLS}}}
\def\cls/{\textup{\textsf{CLS}}}
\def\ppadpls/{\textup{$\textsf{PPAD} \cap \textsf{PLS}$}}
\def\ppapls/{\textup{$\textsf{PPA} \cap \textsf{PLS}$}}
\def\eopl/{\textup{\textsf{EOPL}}}
\def\sopl/{\textup{\textsf{SOPL}}}
\def\ueopl/{\textup{\textsf{UEOPL}}}
\def\fixp/{\textup{\textsf{FIXP}}}
\def\bu/{\textup{\textsf{BU}}}
\def\bbu/{\textup{\textsf{BBU}}}
\def\linearfixp/{\textup{\textsf{Linear-FIXP}}}
\def\pspace/{\textup{\textsf{PSPACE}}}
\def\ssperner{\ensuremath{\textup{\textsc{StrongSperner}}}\xspace}
\def\gcircuit{\textup{\textsc{GCircuit}}\xspace}
\def\eol{\textup{\textsc{End-of-Line}}\xspace}
\def\pcircuit{\textup{\textsc{Pure-Circuit}}\xspace}
\def\graphical{\textup{\textsc{GraphicalWSNE}}\xspace}
\def\graphicalne{\textup{\textsc{GraphicalNE}}\xspace}
\def\polymatrix{\textup{\textsc{PolymatrixWSNE}}\xspace}
\def\polymatrixne{\textup{\textsc{PolymatrixNE}}\xspace}
\def\threshold{\textup{\textsc{ThresholdGameNE}}\xspace}
\newcommand{\garbo}{\ensuremath{\bot}\xspace}
\newcommand{\val}[1]{\boldsymbol{\mathrm{x}}[#1]}
\newcommand{\valonly}{\boldsymbol{\mathrm{x}}}
\newcommand{\valtwo}[1]{\boldsymbol{\mathrm{x'}}[#1]}
\newcommand{\valtwoonly}{\boldsymbol{\mathrm{x'}}}
\newcommand{\PURE}{\textup{\textsf{PURIFY}}\xspace}
\newcommand{\NOR}{\textup{\textsf{NOR}}\xspace}
\newcommand{\NAND}{\textup{\textsf{NAND}}\xspace}
\newcommand{\NOT}{\textup{\textsf{NOT}}\xspace}
\newcommand{\OR}{\textup{\textsf{OR}}\xspace}
\newcommand{\AND}{\textup{\textsf{AND}}\xspace}
\newcommand{\COPY}{\textup{\textsf{COPY}}\xspace}
\newcommand{\X}{\textup{\textsf{X}}\xspace}
\newcommand{\Y}{\textup{\textsf{Y}}\xspace}
\newcommand{\supp}{\textup{\textsf{supp}}\xspace}
\newcommand{\nash}[1]{\ensuremath{#1}-NE\xspace}
\newcommand{\wn}[1]{\ensuremath{#1}-WSNE\xspace}
\newcommand{\relwn}[1]{relative \ensuremath{#1}-WSNE\xspace}
\newcommand{\zero}{\textup{\textsf{zero}}\xspace}
\newcommand{\one}{\textup{\textsf{one}}\xspace}
\newcommand{\discr}{cutoff\xspace}
\newcommand{\innei}[1]{\ensuremath{N^{-}(#1)}\xspace}
\newcommand{\outnei}[1]{\ensuremath{N^{+}(#1)}\xspace}
\newcommand{\nei}[1]{\ensuremath{N(#1)}\xspace}
\newcommand{\wid}{M}
\newcommand{\dims}{N}
\newcommand{\cop}{K}
\newcommand{\vinsymbol}{u}
\newcommand{\vin}[2]{\vinsymbol_{#1,#2}}
\newcommand{\vinc}[3]{\vin{#1}{#2}^{(#3)}}
\newcommand{\vincdimension}[2]{\vinsymbol_{#1}^{(#2)}}
\newcommand{\vinctotal}[1]{\vinsymbol^{(#1)}}
\newcommand{\vout}[2]{v_{#1}^{(#2)}}
\newcommand{\vsort}[2]{w_{#1}^{(#2)}}
\newcommand{\goodcopies}{G}
\newcommand{\eps}{\ensuremath{\varepsilon}\xspace}
\newcommand{\del}{\ensuremath{\delta}\xspace}
\newcommand{\vba}{\ensuremath{\vb{a}}\xspace}
\newcommand{\vbai}{\ensuremath{\vb{a}_{-i}}\xspace}
\newcommand{\vbs}{\ensuremath{\vb{s}}\xspace}
\newcommand{\vbsi}{\ensuremath{\vb{s}_{-i}}\xspace}
\newcommand{\br}{\ensuremath{\text{br}}\xspace}
\title{\pcircuit:\\ Tight Inapproximability for \ppad/ \footnotetext{This paper combines the results of two prior conference papers:  
\textit{Pure-Circuit: Strong Inapproximability for PPAD} published in FOCS 2022 \cite{DFHM22}, and
\textit{Tight Inapproximability for Graphical Games} published in AAAI 2023~\cite{DFHM22b}.
}}
\author{
\begin{tabular}{cc}
& \\
\textbf{Argyrios Deligkas} & \textbf{John Fearnley}\\
\small{Royal Holloway, United Kingdom} & \small{University of Liverpool, United Kingdom}\\
\href{mailto:argyrios.deligkas@rhul.ac.uk}{\small{\texttt{argyrios.deligkas@rhul.ac.uk}}} & \href{mailto:john.fearnley@liverpool.ac.uk}{\small{\texttt{john.fearnley@liverpool.ac.uk}}}\\
& \\
\textbf{Alexandros Hollender} & \textbf{Themistoklis Melissourgos}\\
\small{University of Oxford, United Kingdom} & \small{University of Essex, United Kingdom}\\
\href{mailto:alexandros.hollender@cs.ox.ac.uk}{\small{\texttt{alexandros.hollender@cs.ox.ac.uk}}} & \href{mailto:themistoklis.melissourgos@essex.ac.uk}{\small{\texttt{themistoklis.melissourgos@essex.ac.uk}}}\\
& \\
\end{tabular}
}
\date{}
\begin{document}

\maketitle
\thispagestyle{empty}

\begin{abstract}
The current state-of-the-art methods for showing inapproximability in \ppad/ arise from the $\eps$-Generalized-Circuit ($\eps$-\gcircuit) problem. Rubinstein (2018) showed that there exists a small unknown constant $\eps$ for which $\eps$-\gcircuit is \ppad/-hard, and subsequent work has shown hardness results for other problems in \ppad/ by using $\eps$-\gcircuit as an intermediate problem.

We introduce \pcircuit, a new intermediate problem for \ppad/, which can be
thought of as $\eps$-\gcircuit pushed to the limit as $\eps \rightarrow 1$, and
we show that the problem is \ppad/-complete. We then prove that
$\eps$-\gcircuit is \ppad/-hard for all $\eps < 1/10$ by a reduction from \pcircuit, and thus strengthen all prior work that has used \gcircuit as an intermediate problem from the existential-constant regime to the large-constant regime.

We show that stronger inapproximability results can be derived by reducing
directly from \pcircuit. In particular, we prove tight inapproximability
results for computing approximate Nash equilibria and approximate
well-supported Nash equilibria in graphical games, for finding approximate well-supported Nash equilibria in polymatrix games, and for finding approximate equilibria in threshold games. 
\end{abstract}

\newpage
\pagenumbering{arabic}

\tableofcontents

\section{Introduction}

The complexity class \ppad/ has played a central role in determining the
computational complexity of many problems arising in game theory and
economics~\cite{Papadimitriou94-TFNP-subclasses}. The celebrated
results of Daskalakis, Goldberg, and Papadimitriou~\cite{DaskalakisGP09-Nash}
and Chen, Deng, and Teng~\cite{ChenDT09-Nash} established that finding a Nash
equilibrium in a strategic form game is \ppad/-complete, and subsequent to this
breakthrough many other \ppad/-completeness results have been
shown~\cite{CodenottiSVY08-economies-games,ChenDDT09-Arrow-Debreu,VaziraniY11-market,DengQS12-cake,Daskalakis13-approximate-Nash,KintaliPRST13-fractional-PPAD,ChenDO15-anonymous-games,ChenPY17-non-monotone-markets,SchuldenzuckerSB17-financial-PPAD,Mehta2018-constant-rank,Rubinstein18-Nash-inapproximability,DeligkasFS20-tree-polymatrix,ChenKK21-pacing,ChenKK21-throttling,PapadimitriouP21-threshold-games,GoldbergH21-hairy-ball,FRGHLP23-FPA,DaskalakisSZ21-min-max,ChenCPY22-HZ-hardness}.

These celebrated results not only showed that it is \ppad/-hard to find an exact
equilibrium, but also that finding approximate solutions is \ppad/-hard. The
result of Daskalakis, Goldberg, and Papadimitriou~\cite{DaskalakisGP09-Nash}
showed that finding an $\eps$-Nash equilibrium is \ppad/-complete when $\eps$ is
exponentially small, while the result of Chen, Deng, and
Teng~\cite{ChenDT09-Nash} improved this to show hardness for polynomially small
$\eps$. This lower bound is strong enough to rule out the existence of an FPTAS
for the problem.

The main open question following these results was whether equilibrium computation problems in \ppad/ were
hard for \emph{constant}~$\eps$, which would also rule out the existence of a
PTAS. Here one must be careful, because some problems do in fact admit
approximation schemes. For example, in the case of two-player strategic-form
games, a quasipolynomial-time approximation scheme is known~\cite{LiptonMM03-Nash},
meaning that the problem cannot be hard for a constant $\eps$ unless every
problem in \ppad/ can be solved in quasipolynomial time.
But for other types of game such results are not known. This includes
\emph{polymatrix games}, which are $n$-player games with succinct representation~\cite{Janovskaja1968-polymatrix}.

In another breakthrough result,
Rubinstein~\cite{Rubinstein18-Nash-inapproximability} developed techniques for
showing constant inapproximability within \ppad/, by proving that there exists a
constant $\eps$ such that finding an $\eps$-well-supported Nash equilibrium in a
polymatrix game is \ppad/-complete. This lower bound is obtained by first showing
constant inapproximability for the \emph{$\eps$-Generalized-Circuit} ($\eps$-\gcircuit) problem introduced by 
Chen, Deng, and Teng~\cite{ChenDT09-Nash}, and then utilizing the known reduction from \gcircuit to
polymatrix games~\cite{DaskalakisGP09-Nash}. 

Rubinstein's lower bound has since been used to show constant inapproximability
for other problems. Rubinstein himself showed constant
inapproximability for finding Bayesian Nash equilibria, relative approximate
Nash equilibria, and approximate Nash equilibria in non-monotone
markets~\cite{Rubinstein18-Nash-inapproximability}. Subsequent work has shown
constant inapproximability for finding clearing payments in financial networks
with credit default swaps~\cite{SchuldenzuckerSB17-financial-PPAD}, finding equilibria in first-price auctions with subjective priors~\cite{FRGHLP23-FPA}, finding
throttling equilibria in auction markets~\cite{ChenKK21-throttling}, 
finding equilibria in public goods games on directed
networks~\cite{PapadimitriouP21-threshold-games}, and finding consensus-halving solutions in fair division~\cite{FRFGZ18-consensus-hardness,GoldbergHIMS22-consensus-items}.

Rubinstein's lower bound is an \emph{existential-constant} result, meaning that
it shows that there exists
some constant $\eps$ below which the problem becomes \ppad/-hard. The fact that
such a constant exists is important, since it rules out a PTAS. On the other hand, Rubinstein does not give any concrete lower bound on the value of the constant (understandably so, since this was not the purpose of his work). One could of course deduce such a lower bound by a careful examination of his reduction, but it is clear that this would yield an extremely small constant.
Due to this, all of the other results that have 
utilized Rubinstein's lower bound are likewise 
existential-constant results, which rule out PTASs but do not give any
concrete lower bounds.

Ultimately, this means that existing work does not rule out an efficient
algorithm that finds, say, a $0.001$-approximate solution for any of these
problems, which would likely be more than enough for most practical needs. For
example, in a game where utilities are normalized to lie between $0$
and $1$, it is likely that a player would be more than happy to know that her
strategy is an optimal best-response, up to an additive loss of at most $0.001$ in her
utility value. Moreover, the existing work gives us no clue as to where the
threshold for hardness may actually lie. To address these questions one would
need to prove a \emph{large-constant} inapproximability result, giving hardness
for a known substantial constant.

Rubinstein's lower bound is the ultimate source of all of the recent
existential-constant lower bounds, so if one seeks a large-constant
lower bound, then Rubinstein's result is the bottleneck. 
Attempting to directly strengthen or optimize Rubinstein's result does not seem like a
promising direction. His proof, while ingenious, is very involved, and does not
lend itself to easy optimization. Furthermore, it consists of many moving parts,
so that even if one was able to optimize each module, the resulting
constant would still be very small.

\paragraph{\bf Our contribution.} In this paper we introduce the techniques
needed to show 
large-constant inapproximability results for problems in \ppad/.
Our key technical innovation is the introduction of a new intermediate problem,
called \pcircuit, which we show to be \ppad/-complete. 

Then, by reducing onwards from \pcircuit, we show
large-constant inapproximability results for a variety of problems in
\ppad/, some of which can be shown to be tight. In this sense, \pcircuit now
takes on the role that $\eps$-\gcircuit has taken in the past, as an important
intermediate problem from which all other results of this type are derived. 

The \pcircuit problem itself can be thought of as a version of $\eps$-\gcircuit that
is taken to its limits, and also dramatically simplified. In fact, the problem
has only two gates (or, in a different formulation, three gates), which should
be compared to $\eps$-\gcircuit, which has nine distinct gates. 
Perhaps more importantly, the gates in  
\pcircuit have very weak constraints on their outputs: the gates can be thought of as taking inputs
in $[0, 1]$, and producing outputs in $[0, 1]$, but 
the gates themselves 
essentially only
care about the values $0$ and $1$, with all other values being considered to be
``bad'' or ``garbage'' values (which we will later simply denote by ``$\garbo$'', instead of using values in $(0,1)$). This should be compared to
$\eps$-\gcircuit gates, where, for example, one must output a value in $[0,1]$ that is within
$\eps$ of the sum of two inputs. 

Combined, these properties make \pcircuit a very attractive problem to reduce from when
showing a hardness result, since one only has to implement two (or three) gates,
and the constraints that one must simulate are very loose, making them easy to
implement. We formally introduce \pcircuit, and compare it to $\eps$-\gcircuit, in
\cref{sec:pcircuitdef}.

Our main result is to show that the \pcircuit problem is \ppad/-complete. It
is worth noting that there is no $\eps$ in this result, and in fact the
\pcircuit problem does not even take a parameter~$\eps$ in its definition. This is because, in
some sense, the \pcircuit problem can be viewed as a variant of $\eps$-\gcircuit in
which we have taken the limit $\eps \to 1$. We give further justification of this idea in
\cref{sec:pcircuitdef}, but at a high level, this means that there is
no loss of $\eps$ in our main hardness result, with the only losses coming when
one reduces onwards from \pcircuit. The proof of our main result is presented in \cref{sec:main-proof}, but we present a brief exposition of the main ideas in \cref{sec:intro-proof-overview}.

Finally, in \cref{sec:applications} we present a number of new large-constant hardness results for problems in \ppad/, all of which are shown via
reductions from \pcircuit. We begin by showing that $\eps$-\gcircuit is
\ppad/-hard for all $\eps < 1/10$, giving a direct strengthening of Rubinstein's
lower bound. This also implies large-constant inapproximability results for all
of the problems that currently have existential-constant lower bounds proved via
\gcircuit. However, to determine the constant, one would need to determine the
amount of $\eps$ that is lost in each of the onward reductions, and these
reductions often did not optimize this, since they were proving existential-constant lower bounds.

We argue that the way forward now is providing \emph{direct} reductions from
\pcircuit in order to get the best possible hardness results. As evidence of
this, we present the first \emph{tight} inapproximability results for additive
approximate equilibria. We show the following results.
\begin{itemize}
\item It is \ppad/-hard to compute an $\eps$-well-supported Nash equilibrium
in a two-action polymatrix game for all $\eps < 1/3$.

\item It is \ppad/-hard to compute an $\eps$-well-supported Nash equilibrium in a two-action graphical game for all $\eps < 1$. 

\item It is \ppad/-hard to compute an $\eps$-Nash equilibrium
in a two-action graphical game for all $\eps < 1/2$.

\item It is \ppad/-hard to compute an $\eps$-approximate equilibrium
in a threshold game for all $\eps < 1/6$.
\end{itemize}
Each of these results completely characterizes the computational complexity of
the respective problem: in each case we provide a polynomial-time algorithm
that can find an $\eps$-approximate solution of the problem for the values of $\eps$ not covered by the hardness result.

Furthermore, each of the hardness results are shown via a direct, and usually
relatively straightforward, reduction from \pcircuit. We view this as evidence
that \pcircuit is the ``correct'' intermediate problem for showing hardness
results within \ppad/: once one has shown that it is \ppad/-hard to solve
\pcircuit, it is then relatively easy to obtain tight lower bounds for a
variety of approximation problems within \ppad/, and we expect further such
results to be shown via \pcircuit in the future.

We summarize the hardness results obtained through a direct reduction from \pcircuit in the table below.
\begin{center}
\begin{tabular}{l|r}
Problem & Hardness Threshold \\ \hline
\gcircuit & 1/10 \\
\polymatrix & 1/3 \\
\polymatrixne & 0.088 \\
\graphical & 1 \\
\graphicalne & 1/2 \\
\threshold & 1/6
\end{tabular}
\end{center}
We note that \polymatrix and \threshold
have themselves both been used as intermediate problems for showing
other constant inapproximability results in
\ppad/~\cite{Rubinstein18-Nash-inapproximability,PapadimitriouP21-threshold-games,ChenKK21-throttling,ChenCPY22-HZ-hardness}, and thus our lower bounds potentially strengthen those results too. We provide an example in \cref{app:sec:bimatrix}, where we show that computing a relative $\eps$-WSNE in a bimatrix game with non-negative payoffs is \ppad/-complete for any $\eps \leq 1/57$, by using an improved version of a reduction from \polymatrix due to Rubinstein~\cite{Rubinstein18-Nash-inapproximability}.

\paragraph{\bf Open Questions.}
Below we identify two immediate research questions that arise from our work, which deserve further research.
\begin{itemize}
	\item What is the intractability threshold for \eps-NE in graphical games with more than two actions? We show that $1/2$ is the threshold for two-action games, and 
	we conjecture that $1/2$ is the correct answer for the multi-action case as well. Hence, we view the main open problem as finding a polynomial-time algorithm that can compute a $1/2$-NE in any graphical game. 
	We note that such an algorithm is already known for the special case of polymatrix games~\cite{DeligkasFSS17-eps-Nash-polymatrix}.
	\item What is the intractability threshold for \eps-NE and \eps-WSNE in polymatrix games? 
	For \eps-NE our understanding is far from complete, even in the two-action case, since there is a substantial gap
	between the $0.088$ lower bound and the $1/3$ upper bound.
	For \eps-WSNE, although the problem is completely resolved for the two-action case, the gap in multi-action polymatrix games is still large, and it seems that improving either the lower bound of $1/3$, or the trivial upper bound of $1$, would require significantly new ideas.
\end{itemize}

\subsection{Proof Overview for Our Main Result}\label{sec:intro-proof-overview}

We begin this proof overview by defining a very weak version of \pcircuit. An instance of the problem consists of a Boolean circuit using the standard gates \NOT, \AND, and \OR, but with the following tweak: the circuit is allowed to have \emph{cycles}. A solution to the problem is an assignment of values to each node of the circuit, so that all gates are satisfied. If we are only allowed to assign values in $\{0,1\}$ to the nodes, then it is easy to see that the problem is not a total search problem, i.e., some instances do not have a solution. For example, there is no way to assign consistent values to a cycle of three consecutive \NOT gates.

In order to ensure that the problem is total (and can thus be used to prove \ppad/-hardness results), we make the value space continuous by extending it to $[0,1]$. We extend the definition of the logical gates \NOT, \AND, and \OR to non-Boolean inputs in the most permissive way: if at least one input to the gate is not a pure bit (i.e., not in $\{0,1\}$), then the gate is allowed to output any value in $[0,1]$. The attentive reader might observe that this problem is now trivial to solve: just assign arbitrary values in $(0,1)$ to all the gates.

It is thus clear that the definition of the problem needs to be extended, by adding extra gates or by strengthening existing gates, so that the problem becomes \ppad/-hard. However, in order to discover the least amount of additional structure needed to make the problem hard, it is instructive to proceed with this definition for now, and attempt to prove hardness.

In order to prove the \ppad/-hardness of the problem, we cannot follow Rubinstein's approach, which goes through the construction of a continuous Brouwer function, because \pcircuit only offers very weak gates. Instead, we proceed via a direct reduction from the \ssperner problem, a discrete problem that is a computational version of Sperner's Lemma. The problem was shown to be \ppad/-hard by Daskalakis, Skoulakis, and
Zampetakis~\cite{DaskalakisSZ21-min-max} (who called it the \textsc{HighD-BiSperner} problem), and is the ``\ppad/-analogue'' of the \textsc{StrongTucker} problem which was recently used to prove \ppa/-hardness results~\cite{DeligkasFHM2022-CH-constant-eps}. This approach completely bypasses the continuous aspect of all such existing hardness reductions and enables us to work with the very weak gates that \pcircuit offers.

At a high level, our hardness construction works as follows: the \pcircuit
instance implements the evaluation of the \ssperner labeling on some input point
$x$ (represented in unary by multiple nodes) and then uses a feedback mechanism to ensure that the circuit is only
satisfied if $x$ is a solution to the \ssperner instance. The full reduction is
presented in \cref{sec:main-proof}, but we mention here the two main obstacles
when trying to implement this idea, and how to overcome them.
\begin{enumerate}
    \item \textbf{The input point $\boldsymbol{x}$ might not be represented by a valid bitstring.} Indeed, since the gates take values in $[0,1]$ (and values in $(0,1)$ essentially do not carry any information), there is no guarantee that the input $x$ will be represented by bits $\{0,1\}$. But then the implementation of the \ssperner labeling (which is given as a Boolean circuit) will also fail. To resolve this issue, we introduce a new gate, the \PURE gate, which, on any input, outputs two values, with the guarantee that at least one of them is a ``pure'' bit, i.e., $0$ or $1$. If the input is already a pure bit, then both outputs are guaranteed to be copies of the input. Using a binary tree of \PURE gates, we can now create many copies of $x$, such that most of them consist only of pure bits, and then use the logical gates to compute the \ssperner labeling correctly on these good copies.
    
    \item \textbf{How to implement the feedback mechanism?} Given the outputs of the \ssperner labeling at all the copies of $x$, we now need to provide some kind of feedback to $x$, so that $x$ is forced to change if it is not a solution of \ssperner. It turns out that this step can be performed if we have access to \emph{sorting}: given a list of values in $[0,1]$, sort them from smallest to largest. Unfortunately, this is impossible to achieve with the gates at our disposal, namely standard logical gates and the \PURE gate. We circumvent this obstacle by observing that: (i) it is sufficient to be able to perform some kind of ``weak sorting'' (essentially, we only care about pure bits being sorted correctly), and (ii) this weak sorting can be achieved if we make our logical gates \emph{robust}. For example, the robust version of the \AND gate outputs $0$, whenever at least one of its inputs is $0$, irrespective of whether the other input is a pure bit or not.
\end{enumerate}
With these two extensions in hand---namely, the \PURE gate and the robustness of the logical gates--- it is now possible to prove \ppad/-hardness of the problem. A very natural question to ask is: Is it really necessary to add both extensions for the problem to be hard? In \cref{app:sec:pcircuit-discussion} we show that any attempt to weaken the gate-constraints makes the problem polynomial-time solvable. In particular, the introduction of the \PURE gate is not enough by itself to make the problem \ppad/-hard; the robustness of the logical gates is also needed.

The robustness of, say, the \AND gate seems like a very natural constraint to impose. It is consistent with the meaning of the logical \AND operation, but we also observe in our applications that this ``robustness'' seems to always be automatically satisfied by the gadgets that we construct to simulate the \AND gate in our reductions from \pcircuit. On the other hand, the \PURE gate, which might look a bit unnatural or artificial at first, actually corresponds to the simplest possible version of a bit decoder, a crucial tool in all prior works. As mentioned above, we show in \cref{app:sec:pcircuit-discussion} that these are the minimal gate-constraints that are needed for the problem to be \ppad/-hard. In that sense, we argue that \pcircuit captures the essence of \ppad/-hardness: it consists of the minimal set of ingredients that are needed for a problem to be \ppad/-hard.

The attentive reader might have noticed that our gates do not distinguish between different values in $(0,1)$. For this reason, it will be more convenient to use a single symbol to denote such values in the definition of \pcircuit (\cref{sec:pcircuitdef}) and in the rest of this paper. As explained in more detail in \cref{sec:pcircuitdef}, the symbol ``$\garbo$'' will be used to denote these ``garbage'' values. In other words, the nodes of the circuit will take values in $\{0,1,\garbo\}$ instead of $[0,1]$.

\section{The \pcircuit Problem}
\label{sec:pcircuitdef}

In this section we define our new problem \pcircuit and state our main result, namely its \ppad/-completeness. Before defining \pcircuit, we begin by explaining the intuition behind its definition, and how it relates to the Generalized-Circuit (\gcircuit) problem.

\paragraph{\bf The Generalized-Circuit problem.}
In the Generalized-Circuit (\gcircuit) problem (formally defined in \cref{sec:gcircuit}) we are given a circuit and the goal is to assign a value to each node of the circuit so that each gate is computed correctly. Importantly, the circuit is a \emph{generalized} circuit, meaning that cycles are allowed. If cycles were not allowed, then it would be easy to find values satisfying all gates: just pick arbitrary values for the input gates, and then evaluate the circuit on those inputs.

Every node of \gcircuit must be assigned a value in $[0,1]$, and the gates are arithmetic gates, such as addition, subtraction, multiplication by a
constant (with output truncated to lie in $[0,1]$), and suitably defined logical gates. Reducing from \gcircuit is very useful for obtaining hardness of approximation results, because the problem remains \ppad/-hard, even when we allow some error at every gate. In the $\eps$-\gcircuit problem, the goal is to assign a value in $[0,1]$ to each node of the circuit, so that each gate is computed correctly, up to an additive error of $\pm \eps$.

The problem was first defined by Chen et al.~\cite{ChenDT09-Nash}, who proved
that it is \ppad/-hard for inverse polynomial $\eps$, and who used it to prove
\ppad/-hardness of finding Nash equilibria in bimatrix games. Prior to that, Daskalakis et al.~\cite{DaskalakisGP09-Nash} had implicitly proved that it is \ppad/-hard for inverse exponential $\eps$. Rubinstein's~\cite{Rubinstein18-Nash-inapproximability} breakthrough result proved that there exists some constant $\eps > 0$ such that $\eps$-\gcircuit remains \ppad/-hard.

\paragraph{\bf Taking the limit $\boldsymbol{\eps \to 1}$.}
In order to get strong inapproximability results, it seems necessary to prove hardness of $\eps$-\gcircuit for large, explicit, values of $\eps$. Ideally, we would like to obtain hardness for the largest possible $\eps$. While it is unclear what that value is for \gcircuit, in theory, as long as $\eps < 1$ the output of a gate still carries some information. Namely a gate whose actual output should be $0$ cannot take the value $1$.

This observation leads us to define a problem to essentially capture the setting
$\eps \to 1$. In that case, a node carries information only if its value is $0$
or $1$. Otherwise, its value is irrelevant. As a result, the natural operations
to consider in this setting are simple Boolean operations, such as \NOT, \AND,
\OR, \NAND, and \NOR. We only require these gates to output the correct result when their input is relevant, i.e., $0$ or $1$. For example, the \NOT gate should output $1$ on input $0$, and output $0$ on input $1$, but there is no constraint on its output when the input lies in $(0,1)$.

Since values in $(0,1)$ do not carry any information, and are as such interchangeable (e.g., a value $1/2$ can be replaced by $1/3$ without impacting any of the gates), we will instead use the symbol ``$\garbo$'' to denote any and all values in $(0,1)$. In other words, instead of assigning a value in $[0,1]$ to each gate, we will assign a value in $\{0,1,\garbo\}$, where $\garbo$ is interpreted as a ``garbage'' value, i.e., not corresponding to a pure bit value $0$ or $1$. With this new notation, the updated description of the \NOT gate would be that it must output $1$ on input $0$, it must output $0$ on input $1$, and it can output anything (namely, $0$, $1$, or $\garbo$) on input $\garbo$.

Unfortunately, if we only allow these logical gates, then the problem is
trivial to solve: assigning the ``garbage'' value $\garbo$ (or any value in $(0,1)$ if we use the old notation) to every node will satisfy all gates. Thus, we
need a gate that makes this impossible.

\paragraph{\bf The \PURE gate.}
To achieve this, we introduce the \PURE gate: a gate with one input and two outputs, which, intuitively, ``purifies'' its input. When fed with an actual pure bit, the \PURE gate outputs two copies of the input bit. However, when the input is not a pure bit, the gate still ensures that at least one of its two outputs is a pure bit. In more detail:
\begin{itemize}
    \item If the input is $0$, then both outputs are $0$.
    \item If the input is $1$, then both outputs are $1$.
    \item If the input is $\garbo$, then at least one of the outputs is a pure bit, i.e., $0$ or $1$.
\end{itemize}
Note that the gate is quite ``under-defined''. For example, we do not specify
which pure bit the gate should output when the input is $\garbo$, nor do we
specify the output on which this bit appears. This is actually an advantage,
because it makes it easier to reduce from the problem, since the less constrained the gates are, the easier it is to simulate them in the target application problem.

\paragraph{\bf Robustness of the logical gates.}
The introduction of the \PURE gate makes the problem non-trivial: if a \PURE gate appears in the circuit, then assigning the ``garbage'' value $\garbo$ to all nodes is no longer a solution. However, it turns out that one more modification is needed to make the problem \ppad/-hard: we have to make the logical gates \emph{robust}. For the \AND gate, this means the following: if one of its two inputs is $0$, then the output is $0$, no matter what the other input is (even if it is not a pure bit, i.e., if it is $\garbo$). Similarly, for the \OR gate we require that the output be $1$ when at least one of the two inputs is $1$. Robustness is defined analogously for \NAND and \NOR.

We show that introducing the \PURE gate and making the logical gates robust is enough to make the problem \ppad/-complete. Next, we define the problem formally and state our main result.

\paragraph{\bf Formal definition.}
In the definition below, we use the \PURE and \NOR gates, because these two gates are enough for the problem to already be \ppad/-complete. However, the problem remains hard for various other combinations of gates and restrictions on the interactions between nodes, as we detail in \cref{cor:pcircuit-gates,cor:pcircuit-restricted}. In \cref{app:sec:pcircuit-discussion} we discuss the definition in more detail, and explain why any attempt at relaxing the definition (in particular, removing the robustness) makes the problem polynomial-time solvable.

\begin{definition}[\pcircuit]\label{def:pcircuit}
An instance of \pcircuit is given by a vertex set $V=[n]$ and a set $G$ of gate-constraints (or just \emph{gates}). Each gate $g \in G$ is of the form $g = (T,u,v,w)$ where $u,v,w \in V$ are distinct nodes and $T \in \{\NOR, \PURE\}$ is the type of the gate, with the following interpretation.
\begin{itemize}
    \item If $T=\NOR$, then $u$ and $v$ are the inputs of the gate, and $w$ is its output.
    \item If $T=\PURE$, then $u$ is the input of the gate, and $v$ and $w$ are its outputs.
\end{itemize}
We require that each node is the output of exactly one gate.

A solution to instance $(V,G)$ is an assignment $\valonly: V \to \{0,1,\garbo\}$ that satisfies all the gates, i.e., for each gate $g=(T,u,v,w) \in G$ we have:
\begin{itemize}
    \item if $T=\NOR$, then $\valonly$ satisfies (left: mathematically; right: truth table)
    
    \begin{minipage}{0.45\textwidth}
    \begin{center}
        \begin{align*}
        \val{u} = \val{v} = 0 \implies \val{w} = 1\\
        (\val{u} = 1) \lor (\val{v} = 1) \implies \val{w} = 0
        \end{align*}
        \end{center}
    \end{minipage}
    \begin{minipage}{0.45\textwidth}
        \begin{center}
            \begin{tabular}{c|c||c}
                $u$ & $v$ & $w$ \\ \hline
                0 & 0 & 1 \\
                1 & $\{0,1,\garbo\}$ & 0 \\
                $\{0,1,\garbo\}$ & 1 & 0 \\
                \multicolumn{2}{c||}{Else} & $\{0,1,\garbo\}$
            \end{tabular}
        \end{center}
    \end{minipage}
    
    \item if $T=\PURE$, then $\valonly$ satisfies
    
    \begin{minipage}{0.45\textwidth}
    \begin{center}
        \begin{align*}
        \{\val{v}, \val{w}\} \cap \{0,1\} \neq \emptyset\\
        \val{u} \in \{0,1\} \implies \val{v} = \val{w} = \val{u}
        \end{align*}
        \end{center}
    \end{minipage}
    \begin{minipage}{0.45\textwidth}
    \begin{center}
        \begin{tabular}{c||c|c}
            $u$ & \phantom{xx}$v$\phantom{xx}  & $w$ \\ \hline
            $0$ & $0$ & $0$ \\
            $1$ & $1$ & $1$ \\
            \multirow{2}{*}{$\garbo$} & \multicolumn{2}{c}{At least one} \\
            & \multicolumn{2}{c}{output in $\{0,1\}$}
        \end{tabular}
        \end{center}
    \end{minipage}
    
\end{itemize}
\end{definition}

\bigskip

\noindent The following theorem is our main technical result and is proved in \cref{sec:main-proof}.

\begin{theorem}\label{thm:pcircuit-hard}
The \pcircuit problem is \ppad/-complete.
\end{theorem}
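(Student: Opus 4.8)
First I would verify that \pcircuit is in \ppad/, which is routine. Realise a \NOR gate with inputs $u,v$ and output $w$ by the piecewise‑linear map $x_w = \max(0,\min(1,\,1-x_u-x_v))$, and a \PURE gate with input $u$ and outputs $v,w$ by $x_v = \min(1,3x_u)$ and $x_w = \max(0,3x_u-2)$. Since every node is the output of exactly one gate, this defines a continuous self‑map $F\colon[0,1]^n\to[0,1]^n$ given by a polynomial‑size arithmetic circuit over $\{+,-,\max,\min\}$ and rational scalars, so computing an exact Brouwer fixed point of $F$ lies in $\linearfixp/=\ppad/$ by Etessami--Yannakakis (equivalently one reduces to \gcircuit or \eol). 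At a fixed point, replacing every coordinate in $(0,1)$ by $\garbo$ yields a \pcircuit solution: for a \NOR gate, $x_u=x_v=0$ forces $x_w=1$ and $x_u=1$ or $x_v=1$ forces $x_w=0$; for a \PURE gate, $x_u\in\{0,1\}$ forces $x_v=x_w=x_u$, while $x_u\in(0,1)$ forces $x_v=1$ (if $x_u\ge 1/3$) or $x_w=0$ (if $x_u\le 2/3$), so at least one output is pure.

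\textbf{Hardness --- setup.} For \ppad/-hardness the plan is to reduce from \ssperner, which is \ppad/-hard by Daskalakis, Skoulakis and Zampetakis; it is a grid‑based Sperner‑type problem in which one is given, as a Boolean circuit, a labelling of a high‑dimensional grid $\{0,\dots,N\}^d$ together with boundary conditions that force the existence of a small subcube realising all labels, equivalently a cell at which a natural ``follow‑the‑label'' update is stuck in every coordinate. The \pcircuit instance will hold a block $X$ of $\Theta(Nd)$ nodes encoding the current candidate cell in coordinate‑wise unary, and close a feedback loop in which the values written back onto $X$ are obtained from $X$ by evaluating the relevant copies of the labelling and applying the update; a satisfying assignment is then precisely one in which the update changes nothing, i.e.\ a \ssperner solution.

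\textbf{Hardness --- gadgets.} Three pieces are needed. (i) \emph{Purification}: feed each node of $X$ into the root of a depth‑$k$ binary tree of \PURE gates with $k=\Theta(\log(Nd))$; by the defining property of \PURE, inside each tree the nodes carrying $\garbo$ form a single root‑to‑node path, so at most one leaf per tree is impure, whence all but at most $Nd$ of the $2^k$ copies of $X$ consist entirely of pure bits, and the impure copies can be kept a small minority. (ii) \emph{Labelling}: on each copy evaluate the labelling circuit, using \NOR gates for logic (functionally complete) and \PURE gates for fan‑out; on a pure copy this computes the correct label, and an impure copy simply contributes garbage. (iii) \emph{Aggregation and feedback}: per coordinate, combine the copies' labels into one push signal and write the updated unary string back onto $X$. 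The combination --- a robust majority across copies together with a re‑monotonisation within a copy --- is realised by a sorting network of \emph{robust} \AND/\OR comparators (each a $(\min,\max)$ pair, hence expressible with \NOR); robustness yields a \emph{weak sort}: on a list containing at most one $\garbo$, the pure entries exit correctly sorted with the $\garbo$ (if it survives) sandwiched between the $0$s and the $1$s, which is exactly enough for the aggregated push to depend only on the pure information.

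\textbf{The main obstacle.} The crux is the correctness of (iii): showing that \emph{every} solution of the constructed instance, not merely the intended one, yields a \ssperner solution. Two facts must be established together. First, no solution can cheat by planting $\garbo$'s in $X$: one has to verify that the feedback, routed through the purification trees and the weak‑sorting aggregator, reproduces a pure value at every node of $X$, so that any solution pins $X$ down to an honest unary encoding of a cell $a$. Second, once $X$ is clean all copies decode to $a$, all evaluated labels are correct, and the fixed‑point (satisfied‑feedback) condition becomes equivalent to the follow‑the‑label update being stuck in every coordinate --- a \ssperner solution --- with the \ssperner boundary conditions used to keep $a$ within range. Designing the feedback and the weak sort so that both statements hold using only \PURE and robust logical gates (and, conversely, checking that dropping robustness or \PURE makes \pcircuit polynomial‑time solvable) is the technical heart of the proof.
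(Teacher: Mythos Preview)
Your membership argument and the high-level architecture of the hardness reduction (purification trees, evaluate the labelling on many copies, weak-sort via robust \AND/\OR comparators, feedback to $X$) match the paper closely. The gap is in your description of ``the main obstacle'': you assert that one must show the feedback ``reproduces a pure value at every node of $X$'', so that $X$ encodes a single honest point $a$, and then argue that the fixed-point condition pins down a solution at $a$. This is not what the paper proves, and in fact it is \emph{false}: in any \pcircuit solution of the constructed instance, $X$ will generically still contain $\garbo$'s. If $X$ were fully pure, every copy would decode to the \emph{same} point $a$, every circuit would output the same label $\lambda(a)$, and the feedback would overwrite each unary block of $X$ with the constant string $[\lambda(a)]_i^{\wid}$; the \ssperner boundary conditions then force a contradiction in every coordinate, so no fully-pure $X$ can be a solution. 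In particular, a single clean point $a$ can never ``cover all labels'', which is what \ssperner actually asks for.

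The paper's analysis embraces the residual impurity rather than eliminating it. The selection stage picks $\wid$ entries from the sorted list that are $2\dims\wid$ apart; since at most $\dims\wid$ of the $\cop$ circuit outputs are non-pure, the Sorting Lemma confines all $\garbo$'s to a window of length $\dims\wid$, so \emph{at most one} of the $\wid$ selected feedback bits per coordinate is $\garbo$. That single possibly-impure bit in $X$ is precisely what lets the good copies differ: part~2 of the Purification Lemma then guarantees that any two good copies disagree in at most one bit per coordinate, i.e.\ $\|\vinctotal{k}-\vinctotal{k'}\|_\infty\le 1$. The \ssperner solution is extracted not from $X$ but from the \emph{set of good copies} $\{\vinctotal{k}:k\in\goodcopies\}$, and the Solution Lemma shows (by the boundary-condition contradiction sketched above, applied coordinatewise) that these close-together points must cover all labels. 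Your proposal should replace ``prove $X$ is pure and read off a single point'' with ``prove each coordinate of $X$ has at most one $\garbo$, hence the good copies are mutually $1$-close, and those copies are the solution''.
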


The most important part of this statement is of course the \ppad/-hardness of \pcircuit, but let us briefly discuss the other part, namely the \ppad/-membership. This is obtained as a byproduct of our results in \cref{sec:applications}, where we reduce \pcircuit to various problems that are known to lie in \ppad/. However, there is also a more direct way to prove membership in \ppad/, and in particular to establish the existence of a solution, and we briefly sketch it here. Indeed, the \pcircuit problem can be reduced to the problem of finding a Brouwer fixed point of a continuous function $F$, a problem known to lie in \ppad/~\cite{Papadimitriou94-TFNP-subclasses,EtessamiY10-FIXP}. Given an instance of \pcircuit with $n$ nodes, the function $F: [0,1]^n \to [0,1]^n$ is constructed by letting $x \in [0,1]^n$ represent an assignment of values to the $n$ nodes, and by defining $F_i(x) \in [0,1]$ as a continuous function that outputs a valid value for the $i$th node, given that the other nodes have values according to assignment $x$ (where any value in $(0,1)$ is interpreted as ``$\garbo$''). For every type of gate, it is not hard to construct a continuous piecewise-linear function $F_i$ that satisfies the constraints of that type of gate. For \PURE gates, which have two outputs, we require two such functions $F_i$ and $F_j$, which can for example be taken to be $F_i(x) = \max\{0,2x_k-1\}$ and $F_j(x) = \min\{1,2x_k\}$, where node $k$ is the input of the gate.

\begin{remark}
Note that the definition of our logical gates essentially follows Kleene's strong logic of indeterminacy~\cite{Kleene52-metamathematics}, except that undetermined outputs are not required to take value $\garbo$, but instead any value in $\{0,1,\garbo\}$ can be used. This makes it easier to argue about reductions from \pcircuit, since the gadgets implementing the gates have to enforce fewer constraints. As a result of this connection to Kleene's logic, these gates can in particular be used to implement hazard-free circuits~\cite{IkenmeyerKLLMS19-hazard-free}.
\end{remark}

\subsection{Alternative Gates and Further Restrictions}\label{sec:pcircuit-more-gates}

In this section, we present various versions of the problem that remain \ppad/-complete, in particular versions that use alternative gates and have additional restrictions.

\paragraph{\bf More gates.}
We define the following additional gates.
\begin{itemize}
    \item If $T=\COPY$ in $g=(T,u,v)$, then $\valonly$ satisfies
    
    \begin{minipage}{0.45\textwidth}
    \begin{center}
    \begin{align*}
        \val{u} = 0 \implies \val{v} = 0\\
        \val{u} = 1 \implies \val{v} = 1
    \end{align*}
        \end{center}
    \end{minipage}
    \begin{minipage}{0.45\textwidth}
        \begin{center}
            \begin{tabular}{c||c}
                $u$ & $v$ \\ \hline
                0 & 0 \\
                1 & 1 \\
                $\garbo$ & $\{0,1,\garbo\}$
            \end{tabular}
        \end{center}
    \end{minipage}

    \item If $T=\NOT$ in $g=(T,u,v)$, then $\valonly$ satisfies
    
    \begin{minipage}{0.45\textwidth}
    \begin{center}
    \begin{align*}
        \val{u} = 0 \implies \val{v} = 1\\
        \val{u} = 1 \implies \val{v} = 0
    \end{align*}
        \end{center}
    \end{minipage}
    \begin{minipage}{0.45\textwidth}
        \begin{center}
            \begin{tabular}{c||c}
                $u$ & $v$ \\ \hline
                0 & 1 \\
                1 & 0 \\
                $\garbo$ & $\{0,1,\garbo\}$
            \end{tabular}
        \end{center}
    \end{minipage}

    \item If $T=\OR$ in $g = (T,u,v,w)$, then $\valonly$ satisfies
    
    \begin{minipage}{0.45\textwidth}
    \begin{center}
    \begin{align*}
        \val{u} = \val{v} = 0 \implies \val{w} = 0\\
        (\val{u} = 1) \lor (\val{v} = 1) \implies \val{w} = 1
    \end{align*}
        \end{center}
    \end{minipage}
    \begin{minipage}{0.45\textwidth}
        \begin{center}
            \begin{tabular}{c|c||c}
                $u$ & $v$ & $w$ \\ \hline
                0 & 0 & 0 \\
                1 & $\{0,1,\garbo\}$ & 1 \\
                $\{0,1,\garbo\}$ & 1 & 1 \\
                \multicolumn{2}{c||}{Else} & $\{0,1,\garbo\}$
            \end{tabular}
        \end{center}
    \end{minipage}

    \item If $T=\AND$ in $g = (T,u,v,w)$, then $\valonly$ satisfies
    
    \begin{minipage}{0.45\textwidth}
    \begin{center}
    \begin{align*}
        \val{u} = \val{v} = 1 \implies \val{w} = 1\\
        (\val{u} = 0) \lor (\val{v} = 0) \implies \val{w} = 0
    \end{align*}
        \end{center}
    \end{minipage}
    \begin{minipage}{0.45\textwidth}
        \begin{center}
            \begin{tabular}{c|c||c}
                $u$ & $v$ & $w$ \\ \hline
                1 & 1 & 1 \\
                0 & $\{0,1,\garbo\}$ & 0 \\
                $\{0,1,\garbo\}$ & 0 & 0 \\
                \multicolumn{2}{c||}{Else} & $\{0,1,\garbo\}$
            \end{tabular}
        \end{center}
    \end{minipage}

    \item If $T=\NAND$ in $g = (T,u,v,w)$, then $\valonly$ satisfies
    
    \begin{minipage}{0.45\textwidth}
    \begin{center}
    \begin{align*}
        \val{u} = \val{v} = 1 \implies \val{w} = 0\\
        (\val{u} = 0) \lor (\val{v} = 0) \implies \val{w} = 1
    \end{align*}
        \end{center}
    \end{minipage}
    \begin{minipage}{0.45\textwidth}
        \begin{center}
            \begin{tabular}{c|c||c}
                $u$ & $v$ & $w$ \\ \hline
                1 & 1 & 0 \\
                0 & $\{0,1,\garbo\}$ & 1 \\
                $\{0,1,\garbo\}$ & 0 & 1 \\
                \multicolumn{2}{c||}{Else} & $\{0,1,\garbo\}$
            \end{tabular}
        \end{center}
    \end{minipage}
    
\end{itemize}

\bigskip

\begin{corollary}\label{cor:pcircuit-gates}
The \pcircuit problem is \ppad/-complete, for any of the following choices of gate types:
\begin{itemize}
    \item \PURE and at least one of $\{\NOR, \NAND\}$;
    \item \PURE, \NOT, and at least one of $\{\OR, \AND\}$.
\end{itemize}
\end{corollary}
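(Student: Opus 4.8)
The plan is to derive \cref{cor:pcircuit-gates} from \cref{thm:pcircuit-hard} by a handful of elementary gate simulations, plus the \ppad/-membership argument already sketched. Membership in \ppad/ for each of these variants follows from the Brouwer fixed-point argument described right after \cref{thm:pcircuit-hard}: for each gate type $\COPY, \NOT, \OR, \AND, \NAND$ one writes down a continuous piecewise-linear function $F_i$ of the input node values that outputs a valid value for the gate's output node (and for \PURE a coordinated pair $F_i, F_j$, e.g.\ $F_i(t)=\max\{0,2t-1\}$, $F_j(t)=\min\{1,2t\}$, so that at least one output is a pure bit), hence every such instance reduces to finding a Brouwer fixed point, a problem in \ppad/. (Alternatively: \PURE and \NOR together simulate every other gate — see below — so each variant reduces to the \PURE/\NOR version of \pcircuit, which is in \ppad/ by \cref{thm:pcircuit-hard}.) The remaining work is the \ppad/-hardness.

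For hardness we reduce from the \PURE/\NOR version of \pcircuit, which is \ppad/-hard by \cref{thm:pcircuit-hard}. It suffices to show that, for each gate-set family in the statement, a single \NOR gate can be \emph{simulated} by a small gadget built only from the permitted gates: we then replace every \NOR gate of the source instance by such a gadget, leave the \PURE gates untouched, and add fresh internal nodes, each the output of exactly one new gate, so the ``each node is the output of exactly one gate'' requirement is preserved (fan-out being unconstrained). Here ``simulates'' means: (i) for any values of the gadget's two input nodes and its output node that satisfy the \NOR truth table, the internal nodes can be completed, in polynomial time by forward evaluation, so that all gadget gates hold; and (ii) every satisfying assignment of the gadget restricts to a \NOR-consistent assignment of its three boundary nodes. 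Given this, solutions of the source and target instances correspond, and the reduction is correct. The simulations are De Morgan-style: a \PURE gate gives a ``duplicate'' gadget that respects robustness (fed a node $u$, it produces $u',u''$ with $u'=u''=u$ when $u\in\{0,1\}$, and at least one of them a pure bit otherwise); combining this with a \NAND gate yields a \NOT gate, namely $\NAND(u',u'')$ with $(u',u'')=\PURE(u)$, since on input $0$ or $1$ it outputs $1$ or $0$, and on input $\garbo$ the robustness of \NAND still produces a valid output. Thus \PURE with \NAND yields \NOT, and hence \NOR, via $\NOT$ applied to $\NAND(\NOT u, \NOT v)$; this settles the first family (if \NOR is available there is nothing to do). For the second family, \NOT is directly available, and \NOR is an \OR gate followed by a \NOT gate, or an \AND gate fed with $\NOT u$ and $\NOT v$. (For the membership alternative, the reverse simulations are equally easy: \PURE with \NOR gives \NOT as above with \NOR in place of \NAND, then \OR as $\NOT\circ\NOR$, \AND as $\NOR(\NOT u,\NOT v)$, \NAND as $\NOT\circ\AND$, and \COPY as $\NOT\circ\NOT$.)

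The only part requiring genuine care — as opposed to being purely mechanical — is the robustness bookkeeping together with the completability property. One must verify, by a short finite case analysis over the values $\{0,1,\garbo\}$ of the boundary nodes, that (i) whenever the source assignment forces an input of an internal gate to be a pure bit, the robustness (``$0$ anywhere $\Rightarrow 0$ out'', etc.) of \OR/\AND/\NAND propagates exactly the pure value that the \NOR truth table demands, and (ii) whenever a boundary input of the gadget is $\garbo$, the internal gates are loose enough — thanks to their ``Else'' rows — to realize on the output node whichever of $\{0,1,\garbo\}$ the source \NOR gate produced, so that the target instance stays satisfiable. I expect no conceptual obstacle beyond keeping straight that every intermediate \NOT in the \PURE/\NAND (and \PURE/\NOR) case must be routed through the \PURE-duplication gadget, because \NAND, \NOR, \OR, \AND all require three \emph{distinct} nodes and one cannot literally wire a node to itself.
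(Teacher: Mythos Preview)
Your proposal is correct and follows essentially the same route as the paper: reduce from the \PURE/\NOR version by simulating \NOR with the available gates via De~Morgan-style constructions, using the \PURE gate to duplicate a node so that \NAND (or \NOR) on the two outputs yields \NOT. You are more explicit than the paper about the robustness bookkeeping, the distinct-nodes requirement, and the two directions of the solution correspondence (though for hardness only direction (ii) is strictly needed, since the target instance is total anyway).
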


\begin{proof}
This follows from \cref{thm:pcircuit-hard} by observing that a \NOR gate can always be simulated with the given set of gates. Clearly, \NOR can be simulated by first using an \OR gate and then a \NOT gate. Furthermore, \OR can be simulated by \NOT and \AND by applying De Morgan's laws. Finally, \AND can easily be obtained from \NOT and \NAND, and \NOT can be obtained from \NAND and \PURE as follows: first apply a \PURE gate, and then use its two outputs as the two inputs to a \NAND gate.
\end{proof}

\paragraph{\bf More structure.}
The hardness result is also robust with respect to restrictions applied to the \emph{interaction graph}. This graph is constructed on the vertex set $V = [n]$ by adding a directed edge from node $u$ to node $v$ whenever $v$ is the output of a gate with input $u$. For example, a \NOR gate with inputs $u,v$ and output $w$ yields the two edges $(u,w)$ and $(v,w)$. On the other hand, a \PURE gate with input $u$ and outputs $v,w$ gives the edges $(u,v)$ and $(u,w)$. Since any given node is the output of at most one gate, it immediately follows that the in-degree of every node is at most $2$. However, the out-degree of a node can \emph{a priori} be arbitrarily large. It is quite easy to show that the problem remains \ppad/-complete, even if we severely restrict the interaction graph. The proof of the following corollary can be found in \cref{sec:app:proof-pcircuit-restricted}.

\begin{corollary}\label{cor:pcircuit-restricted}
The \pcircuit problem remains \ppad/-complete, for any choice of gates $\{\PURE, \X, \Y\}$, where $(\X,\Y) \in \{\NOT\} \times \{\OR, \AND, \NOR, \NAND\}$ or $(\X,\Y) \in \{\COPY\} \times \{\NOR, \NAND\}$ and even if we also \emph{simultaneously} have all of the following restrictions.
\begin{enumerate}
    \item Every node is the input of exactly one gate.
    \item In the interaction graph, the total degree of every node is at most 3. More specifically, for every node, the in- and out-degrees, $d_{in}$ and $d_{out}$, satisfy $(d_{in},d_{out}) \in \{(1,1), (2,1), (1,2)\}$.
    \item The interaction graph is bipartite.
\end{enumerate}
\end{corollary}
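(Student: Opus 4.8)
The plan is to take the \pcircuit instances produced by \cref{thm:pcircuit-hard} and reshape them into the restricted form through a short sequence of \emph{local} gadget replacements, verifying at each stage that \ppad/-hardness is preserved and that the structural properties already obtained survive. By \cref{cor:pcircuit-gates} I may assume at the outset that the instance uses only gates from the target set $\{\PURE,\X,\Y\}$. Moreover, exactly as in the proof of \cref{cor:pcircuit-gates}, each of \NOT, \NOR, \NAND and \COPY is simulated by a constant-size sub-circuit built from \PURE together with the two available gates (for example a \COPY is a \PURE gate with one output discarded into a small ``sink-absorbing'' gadget), so I may treat all of $\PURE,\COPY,\NOT,\NOR$ as available building blocks, designed once and for all in restricted form, and re-expand them into $\{\PURE,\X,\Y\}$ at the end.

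First I would bound the out-degree. For every node $v$ read by $k\ge 2$ gates, I replace those reads by the leaves of a binary tree of \PURE gates rooted at $v$ (with a couple of spare leaves so each read gets its own leaf). The only fact needed for hardness is that \PURE propagates pure bits downward: if the root is assigned a bit $b\in\{0,1\}$ then so is every descendant, whereas if it is assigned $\garbo$ then it does not matter what the leaves are, since in the original instance any gate reading a $\garbo$ input is unconstrained on its output; thus from a solution of the new instance one recovers a solution of the original by reading off, for each original node, the value at the root of its tree. After this every node has out-degree at most $2$, with out-degree exactly $2$ only at the input of a \PURE gate. Next I enforce that each node is the input of exactly one gate: a node of out-degree $1$ already is, one of out-degree $2$ is the sole input of its \PURE gate, and the remaining out-degree-$0$ sinks (including spare leaves and the discarded outputs of \PURE-as-\COPY) are combined pairwise by \NOR gates into one node $d^\star$, to which I attach the fixed, always-satisfiable two-node gadget $c_0=\NOR(d^\star,c_1)$, $c_1=\NOT(c_0)$, in which every node is the input of, and the output of, exactly one gate. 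To also get $(d_{in},d_{out})\in\{(1,1),(2,1),(1,2)\}$, note that in-degree $2$ occurs exactly at outputs of binary gates and out-degree $2$ exactly at inputs of \PURE gates, so the only forbidden pattern is a binary-gate output feeding a \PURE input (in particular the root of a fan-out tree above a \NOR gate), and I break every such adjacency by a \COPY gate, whose pattern is $(1,1)$; I likewise separate the two inputs of a binary gate by \COPY gates if they happen to coincide.

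For the last restriction I subdivide the edges of the interaction graph so that every cycle becomes even-length, which makes the graph bipartite (with, in the simplest set-up, the original nodes in one part and the newly inserted nodes in the other). For the gate choices $\{\COPY\}\times\{\NOR,\NAND\}$ this is clean: an edge out of an ordinary node is subdivided by a \COPY gate, while the two edges out of a \PURE input $x$ must be subdivided \emph{jointly}, replacing $(\PURE,x,v,w)$ by $(\PURE,x,c,c')$ on fresh $c,c'$ followed by $v=\COPY(c)$ and $w=\COPY(c')$, so that $x$ stays the input of a single gate; subdivision leaves the in/out-degrees of old nodes unchanged and gives each new node the pattern $(1,1)$, so the degree bounds survive. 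For $\{\NOT\}\times\{\OR,\AND,\NOR,\NAND\}$ there is no direct \COPY and a single \NOT connector negates its wire, so in addition I carry out a global De Morgan rewrite: each binary gate is replaced by its dual ($\Y$ toggling within $\{\OR,\NAND\}$ or within $\{\AND,\NOR\}$), each \NOT gate becomes an identity, \PURE is handled by the joint subdivision above, and the subdivided edges are padded to a common length so that all cycle lengths remain even. Afterwards I re-expand the borrowed \COPY, \NOT, \NOR, \NAND into $\{\PURE,\X,\Y\}$ using the (restricted-form) constant-size gadgets, and re-check the three conditions.

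The hardness-preservation is routine throughout, thanks to the slack of the gates (``$\garbo$ in $\Rightarrow$ anything out''). The real difficulty, and the step I expect to be the main obstacle, is that the three restrictions must hold \emph{simultaneously}, and the natural gadget enforcing one tends to spoil another: the obvious way to subdivide edges would turn an out-degree-$2$ node into the input of two gates, violating condition~(1), which is exactly why \PURE inputs need the joint-subdivision device; keeping the total degree at $3$ forbids a binary-gate output from feeding a \PURE input and forces the extra \COPY gates; and in the \NOT-only case bipartiteness cannot be obtained without a negation somewhere, so the global De Morgan rewrite and the length-uniformization are genuinely needed, and these in turn reintroduce binary-gate outputs and fan-out that must again be repaired. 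Carefully verifying that, after all of these interleaved local surgeries and the final re-expansion, every node simultaneously satisfies $(d_{in},d_{out})\in\{(1,1),(2,1),(1,2)\}$, is the input of exactly one gate, and the interaction graph is bipartite is the crux of the argument.
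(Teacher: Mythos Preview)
Your fan-out and sink-handling steps match the paper's Steps~1 and~3, but there are two problems.

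First, a small but real error in the correctness of the fan-out: the claim that ``any gate reading a $\garbo$ input is unconstrained on its output'' is false for robust gates (e.g.\ $\AND(\garbo,0)$ must output~$0$). The correct argument is that every constraint-triggering clause on an input requires that input to be a \emph{pure} bit, and pure bits propagate from root to leaf; the other input is handled independently by the same observation.

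Second, and more seriously, your bipartiteness step in the $\X=\NOT$ case does not close. Subdividing each edge with one \NOT makes the graph bipartite, but the De~Morgan rewrite sends each \NOT to \COPY and each $\Y$ to its dual, neither of which is in $\{\PURE,\NOT,\Y\}$. When you re-expand---say $\NAND(m_1,m_2)=\OR(\NOT(m_1),\NOT(m_2))$ for $\Y=\OR$, or $\COPY$ as two \NOT's---each input edge becomes a path of length~$2$ from the subdivision node to the gate, so the endpoints land on the \emph{same} side of the bipartition and bipartiteness is lost (equivalently, each original edge becomes a path of length~$3$ and odd cycles stay odd). ``Padding to a common length'' does not help, since any value-preserving chain in $\{\PURE,\NOT,\Y\}$ either has even length (two \NOT's) or creates new sinks (\PURE), and you are back where you started. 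This is the gap your last paragraph anticipates but does not resolve.

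The paper avoids this entirely by replacing \emph{nodes} rather than subdividing edges: every node $u$ is replaced by a gadget with an ``output end'' $u_a$ and an ``input end'' $u_b$ satisfying $\val{u_a}\in\{0,1\}\Rightarrow\val{u_b}=\val{u_a}$. For $\X=\COPY$ the gadget is a single \COPY; for $\X=\NOT$ it is the chain $\NOT(u_a,v),\ \PURE(v,w,w'),\ \NOT(w,u_b)$, which uses only available gates and has an even number of negations. Putting all $u_a$'s on one side and all $u_b$'s on the other gives a global bipartition in which every inter-gadget edge goes from some $u_b$ to some $v_a$; this simultaneously yields bipartiteness and the degree pattern $(d_{in},d_{out})\in\{(1,1),(2,1),(1,2)\}$, with no De~Morgan rewrite and no re-expansion to audit.
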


\begin{remark}
Using \cref{cor:pcircuit-restricted}, it is also possible to show that \pcircuit with only two gates (namely, \PURE and one of $\{\NOR, \NAND\}$) remains \ppad/-complete even if the total degree of every node is at most 4 in the interaction graph. Indeed, \NOT gates can be implemented by first using a \PURE gate and then a \NOR/\NAND gate. The structural properties of \cref{cor:pcircuit-restricted} ensure that this yields an interaction graph where the total degree is at most 4 for each node. This can be further reduced to degree 3, if one modifies the definition of \pcircuit (\cref{def:pcircuit}) so that the two inputs to a \NOR/\NAND gate are no longer required to be two \emph{distinct} nodes $u$ and $v$, but can possibly be the same node $u=v$. However, if the definition is modified in that way, then one must be careful when reducing from \pcircuit to make sure to take into account the possibility that $u=v$ when constructing the gadget for a \NOR/\NAND gate.
\end{remark}

\section{\ppad/-completeness of \pcircuit}\label{sec:main-proof}

This section proves our main technical result, namely that \pcircuit is \ppad/-complete (\cref{thm:pcircuit-hard}). We note that membership in \ppad/ follows immediately from the reduction of the problem to \gcircuit in \cref{sec:gcircuit}. In order to establish the \ppad/-hardness, we present a polynomial-time reduction from a \ppad/-complete problem to \pcircuit. The canonical \ppad/-complete problem is the \eol problem, but, as is usually the case, we do not reduce directly from \eol, but from a problem with topological structure instead, which we introduce next.

\subsection{The \ssperner Problem}

We will reduce from the \ssperner problem, which is based on a variant of Sperner's lemma~\cite{Sperner28-lemma}. This problem is in essence the same as the \textsc{HighD-BiSperner} problem defined by Daskalakis et al.~\cite{DaskalakisSZ21-min-max} and used to prove \ppad/-hardness of a problem related to constrained min-max optimization. Furthermore, the corresponding ``strong'' variant of Tucker's lemma was used by Deligkas et al.~\cite{DeligkasFHM2022-CH-constant-eps} to provide improved \ppa/-hardness results for the consensus-halving problem in fair division.

\begin{definition}
The \ssperner problem:
\begin{description}
    \item[Input:] A Boolean circuit computing a labeling $\lambda: [\wid]^\dims \to \{-1,+1\}^\dims$ satisfying the following boundary conditions for every $i \in [\dims]$:
\begin{itemize}
    \item if $x_i = 1$, then $[\lambda(x)]_i = +1$;
    \item if $x_i = \wid$, then $[\lambda(x)]_i = -1$.
\end{itemize}
\item[Output:] Points $x^{(1)}, \dots, x^{(\dims)} \in [\wid]^\dims$ that satisfy $\|x^{(i)} - x^{(j)}\|_\infty \leq 1$ for all $i,j \in [\dims]$, and such that $\lambda(x^{(1)}), \dots, \lambda(x^{(\dims)})$ \emph{cover all labels}, i.e., for all $i \in [\dims]$ and $\ell \in \{-1,+1\}$ there exists $j \in [\dims]$ with $[\lambda(x^{(j)})]_i = \ell$.
\end{description}
\end{definition}

Note that the requirement that a solution should consist of exactly $\dims$ points is without loss of generality. If we find less than $\dims$ points that cover all labels, then we can simply re-use the same points multiple times to obtain a list of $\dims$ points that cover all labels (there is no requirement on them being distinct). If we find more than $\dims$ points that cover all labels, then it is easy to see that we can extract a subset of $\dims$ points that still cover all labels in polynomial time~\cite[Lemma~3.1]{DeligkasFHM2022-CH-constant-eps}.

\begin{theorem}[\cite{DaskalakisSZ21-min-max}]
\ssperner is \ppad/-hard, even when $\wid$ is only polynomially large (or equivalently, it is given in unary in the input).
\end{theorem}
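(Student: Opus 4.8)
The plan is to give a polynomial-time many-one reduction from a ``circuit-encoded'' discrete Brouwer problem on the grid $[\wid]^\dims$: an instance is a Boolean circuit computing a direction field $\Delta \colon [\wid]^\dims \to \{-1,0,+1\}^\dims$ such that $x + \Delta(x)$ always stays in $[\wid]^\dims$ (so $\Delta$ encodes a self-map with unit steps), with the boundary behaviour $[\Delta(x)]_i = +1$ whenever $x_i = 1$ and $[\Delta(x)]_i = -1$ whenever $x_i = \wid$, and with the promise that the cells where $\Delta$ reverses sign in every coordinate --- the discrete near-fixed-points --- encode, in polynomial time, the endpoints of an underlying \eol line. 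This problem is \ppad/-hard already for polynomially bounded $\wid$: it is exactly what the snake-embedding constructions underlying high-dimensional discrete Brouwer hardness produce (Hirsch--Papadimitriou--Vavasis / Chen--Deng style), the key point being that an \eol line on $n$ bits can be routed through $\Theta(n)$ coordinates each taking only a constant (or polynomial) number of values, rather than through a single exponentially fine axis.

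Given such an instance, I define the \ssperner labeling by thresholding the direction field coordinate-wise: $[\lambda(x)]_i := +1$ if $[\Delta(x)]_i \ge 0$, and $[\lambda(x)]_i := -1$ if $[\Delta(x)]_i < 0$. The circuit for $\lambda$ is the circuit for $\Delta$ followed by a trivial sign gate, so the reduction runs in polynomial time and keeps $\wid$ in unary. The boundary conditions of \ssperner are then immediate: on the face $x_i = 1$ we have $[\Delta(x)]_i = +1 \ge 0$, so $[\lambda(x)]_i = +1$, and on the face $x_i = \wid$ we have $[\Delta(x)]_i = -1 < 0$, so $[\lambda(x)]_i = -1$. (Totality of \ssperner --- the existence of a panchromatic unit subcube --- is guaranteed by the corresponding variant of Sperner's lemma, matching totality of \eol; this is not needed for the reduction itself.)

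The heart of the argument is soundness. Let $x^{(1)}, \dots, x^{(\dims)}$ be a solution of the produced \ssperner instance: they lie within $\ell_\infty$-distance $1$ of one another, hence inside a common unit subcube $Q$ of the grid, and together they cover all $2\dims$ labels. Fixing a coordinate $i$, some $x^{(j)}$ has $[\lambda(x^{(j)})]_i = +1$ and some $x^{(k)}$ has $[\lambda(x^{(k)})]_i = -1$, so by definition of $\lambda$ we get $[\Delta(x^{(j)})]_i \ge 0$ and $[\Delta(x^{(k)})]_i < 0$. Thus $Q$ is a cell on which $\Delta$ reverses sign in every coordinate, i.e., a discrete near-fixed-point, and by the promise on the source instance we read off from $Q$, in polynomial time, an endpoint of the \eol line, i.e., a valid \eol solution. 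Since the discrete Brouwer problem above is \ppad/-hard via \eol, this establishes \ppad/-hardness of \ssperner with $\wid$ polynomial.

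I expect the one genuinely delicate step to be the design of the direction field $\Delta$ itself --- ensuring simultaneously that it (i) is computable by a small circuit, (ii) is an honest self-map with strictly inward-pointing boundary, (iii) needs only $\wid = \mathrm{poly}$ values per coordinate, and, most importantly, (iv) is ``tight'', having no spurious cells on which $\Delta$ reverses in all coordinates other than those flagging a genuine \eol endpoint. This is precisely the difficulty at the core of all discrete-Brouwer hardness proofs: one routes the \eol line as a non-self-intersecting axis-aligned path, orients $\Delta$ to flow along it (and to flow back toward the path off of it), and then verifies by a local case analysis around each path segment and each grid cell that all-coordinate sign reversals occur only adjacent to the ends of the path. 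Once $\Delta$ is in hand, everything above --- the labeling, the boundary conditions, polynomial-time computability, and the recovery of \eol solutions --- is routine.
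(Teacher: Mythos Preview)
The paper does not itself prove this theorem: it cites Daskalakis, Skoulakis, and Zampetakis, and in a remark sketches two routes --- the original one via Rubinstein's \textsc{SuccinctBrouwer}, and an alternative via \eol $\to$ 2D Sperner (Chen--Deng) followed by a snake embedding into high dimension with small $\wid$. Your outline is essentially the second route, phrased through an intermediate direction-field problem, and is correct; in particular you correctly locate all of the real work in the snake-embedding construction (your ``genuinely delicate'' step), on top of which the coordinate-wise thresholding you describe is sound and matches the boundary conditions exactly.
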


\begin{remark}
This was proven by Daskalakis et al.~\cite{DaskalakisSZ21-min-max} by reducing from the \textsc{SuccinctBrouwer} problem, which had been proven \ppad/-hard by Rubinstein~\cite{Rubinstein16-two-player}. The \ppad/-hardness can also be proved by a more direct reduction from \eol. Indeed, \eol can be reduced to \ssperner with $\dims = 2$ and exponentially large $\wid$ by using the techniques of Chen and Deng~\cite{ChenD09-2D-Brouwer}. Then, a snake embedding technique~\cite{ChenDT09-Nash,DeligkasFHM2022-CH-constant-eps} can be used to obtain hardness for the high-dimensional version with small $\wid$, in fact, even for constant $\wid$. For our purposes, the hardness for polynomially large $\wid$ is sufficient.
\end{remark}

\subsection{Reduction from \ssperner to \pcircuit}

Consider an instance $\lambda: [\wid]^\dims \to \{-1,+1\}^\dims$ of \ssperner, where $\lambda$ is given as a Boolean circuit and $\wid$ is only polynomially large (i.e., given in unary). We will now show how to construct an instance of \pcircuit in polynomial time such that from any correct assignment to the nodes, we can extract a solution to the \ssperner instance in polynomial time. We will make use of the gates \PURE, \AND, \OR, \NOT, \COPY. All these gates can easily be simulated using the two gates \PURE and \NOR, by the arguments in the proof of \cref{cor:pcircuit-gates}.

We begin the construction of the \pcircuit instance by creating nodes $\vin{i}{1}, \dots, \vin{i}{\wid}$ for each $i \in [\dims]$. We call these nodes the \emph{original inputs}, and we think of $\vin{i}{1}, \dots, \vin{i}{\wid}$ as being a unary representation of an element in $[\wid]$. Namely, when all these nodes happen to have a pure value, i.e., $0$ or $1$, we can interpret the number of $1$'s as representing a number in $[\wid]$. (For convenience, we use the convention that if all the nodes have value $0$, then this corresponds to $1 \in [\wid]$.) Note that these unary representations are not unique, since we do not care about the order of the bits. Of course, all of this only makes sense when all these nodes are indeed assigned pure bit values. In general, this will not be the case, but this encoding will be useful in other parts of the construction. The rest of the instance can be divided into four parts: the {\em purification} stage, the {\em circuit} stage, the {\em sorting} stage, and the {\em selection} stage.

We begin with a brief overview of the purpose of each stage and how they interact with each other.
The purification stage uses the \PURE gate to create multiple ``copies'' of the original $\vin{i}{j}$ nodes, while ensuring that most of the copies have pure bit values (Purification Lemma, \cref{lem:purification}). Then, the circuit stage evaluates the circuit $\lambda$ on these copies of the original inputs. Since the purification stage ensures that most copies have pure bits, the circuit stage outputs the correct labels for most copies, and, in particular, most outputs are pure bits (Circuit Lemma, \cref{lem:circuit}). Next, for each $i \in [\dims],$ the sorting stage ``sorts'' the list of all $i$th output values computed in the circuit stage (Sorting Lemma, \cref{lem:sorting}). Since most of the $i$th output values are pure bits, the sorting stage ensures that all non-pure values are close to each other in the sorted list. The selection stage then proceeds to select $\wid$ values from the sorted list, but in a careful way, namely such that they are all far away from each other. This ensures that at most one of the $\wid$ selected values is not a pure bit. The $\wid$ selected values are then fed back into the original inputs $\vin{i}{1}, \dots, \vin{i}{\wid}$. As a result, the original input $\vin{i}{1}, \dots, \vin{i}{\wid}$ contains at most one non-pure bit, and thus the purification stage ensures that all the (pure) copies of $\vin{i}{1}, \dots, \vin{i}{\wid}$ correspond to unary numbers that differ by at most $1$. This means that these copies represent points in the \ssperner domain that are within $\ell_\infty$-distance $1$ (Selection Lemma, \cref{lem:selection}). Finally, using the boundary conditions of the \ssperner instance, we argue that these points must cover all the labels (Solution Lemma, \cref{lem:solution}).

We now describe each of the stages in more detail. We let $\cop$ denote the number of copies that we make. It will be enough to pick $\cop = 3 \dims \wid^2$. In what follows, $\valonly$ always denotes an arbitrary solution to the \pcircuit instance we construct.

\paragraph{\bf Step 1: Purification stage.}
For each $(i,j) \in [\dims] \times [\wid]$, we construct a binary tree of \PURE gates that is rooted at $\vin{i}{j}$ and has leaves $\vinc{i}{j}{1}, \dots, \vinc{i}{j}{\cop}$.

We say that $k \in [\cop]$ is a good copy, if $\val{\vinc{i}{j}{k}}$ is a pure bit for all $(i,j) \in [\dims] \times [\wid]$. We denote the set of all good copies by $\goodcopies$, i.e.,
\[\goodcopies := \left\{k \in [\cop] : \val{\vinc{i}{j}{k}} \in \{0,1\} \quad \forall (i,j) \in [\dims] \times [\wid]\right\}.\]
For a good copy $k \in \goodcopies$ and any $i \in [\dims]$, we can interpret the bitstring $(\val{\vinc{i}{j}{k}})_{j \in [\wid]} \in \{0,1\}^{\wid}$ as representing a number in $[\wid]$ in unary, which we denote by $\vincdimension{i}{k} \in [\wid]$. In other words, $\vincdimension{i}{k}$ corresponds to the number of $1$'s in the bit-string $(\val{\vinc{i}{j}{k}})_{j \in [\wid]}$. As already mentioned above, for notational convenience we let the all-zero bit string $0^{\wid}$ correspond to $1$ as well, i.e., if $\val{\vinc{i}{j}{k}} = 0$ for all $j \in [\wid]$, then $\vincdimension{i}{k} = 1$. (Alternatively, we could also have used $\wid-1$ bits instead of $\wid$.) Finally, we let $\vinctotal{k} \in [\wid]^{\dims}$ denote the vector $(\vincdimension{1}{k}, \dots, \vincdimension{\dims}{k})$.

\begin{lemma}[Purification Lemma]\label{lem:purification}
The following hold.
\begin{enumerate}
    \item There are at least $\cop - \dims \wid$ good copies, i.e., $|\goodcopies| \geq \cop - \dims \wid$.
    \item If for some $(i,j) \in [\dims] \times [\wid]$ the original input $\val{\vin{i}{j}}$ is a pure bit, then all copies have that same bit, i.e., $\val{\vinc{i}{j}{k}} = \val{\vin{i}{j}}$ for all $k \in [\cop]$.
\end{enumerate}
\end{lemma}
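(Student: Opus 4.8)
The plan is to analyze the binary tree of \PURE gates rooted at each $\vin{i}{j}$ independently, and then combine the resulting bounds by a union bound to count good copies. The key structural fact I would isolate first is a single-gate claim: for any \PURE gate $(T,u,v,w)$ satisfying its constraints in a solution $\valonly$, if $\val{u} \in \{0,1\}$ then $\val{v} = \val{w} = \val{u}$ (both outputs pure and equal to the input), while if $\val{u} = \garbo$ then at most one of $\val{v}, \val{w}$ is non-pure — in fact, by the constraint $\{\val{v},\val{w}\} \cap \{0,1\} \neq \emptyset$, at least one output is pure. In particular, in \emph{every} case, at most one of the two outputs of a \PURE gate is non-pure, and if the input is pure then \emph{zero} outputs are non-pure.

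For part (2), I would induct on the depth of the tree rooted at $\vin{i}{j}$. If $\val{\vin{i}{j}}$ is a pure bit, then the root \PURE gate forces both its outputs to equal that bit; applying the single-gate claim repeatedly down every root-to-leaf path, every internal node of the tree — and hence every leaf $\vinc{i}{j}{k}$ — carries the same pure bit. This is a clean induction with no real obstacle.

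For part (1), fix $(i,j)$ and consider the tree $\mathcal{T}_{i,j}$ of \PURE gates with $\cop$ leaves. I want to bound the number of leaves $k$ with $\val{\vinc{i}{j}{k}}$ non-pure. Call a node of the tree ``tainted'' if its assigned value is non-pure. By the single-gate claim, a \PURE gate with a pure input has no tainted outputs, and a \PURE gate with a tainted input has at most one tainted output. Hence tainted nodes, restricted to any single tree, form a subgraph in which each tainted node has at most one tainted child; following the unique tainted child at each step, the tainted nodes reachable from the root lie along a single path from the root downward. Since the root $\vin{i}{j}$ is itself a single node, and along any such descending path each node has at most one tainted child, the set of tainted \emph{leaves} of $\mathcal{T}_{i,j}$ has size at most $1$. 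Therefore the number of copies $k \in [\cop]$ for which $\val{\vinc{i}{j}{k}}$ is non-pure is at most $1$ for each $(i,j)$. Taking the union over all $(i,j) \in [\dims] \times [\wid]$, at most $\dims \wid$ copies fail to be good, so $|\goodcopies| \geq \cop - \dims \wid$, as claimed.

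The one place that needs care — and which I expect to be the main (minor) obstacle — is making the ``tainted nodes form a single descending path'' argument fully rigorous when a \PURE gate's tainted input could in principle have \emph{both} outputs tainted; but this is exactly ruled out by the constraint $\{\val{v},\val{w}\} \cap \{0,1\} \neq \emptyset$, which guarantees at least one output is pure, hence at most one is tainted. Once that observation is in place, the counting is immediate: each tree contributes at most one bad leaf, and the union bound over the $\dims \wid$ trees finishes part (1). I would present the single-gate claim as a short preliminary observation, then give the two inductions for parts (2) and (1) in that order.
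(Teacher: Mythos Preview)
Your proposal is correct and follows essentially the same approach as the paper: the paper also observes that a pure value at any node propagates to its entire subtree (yielding part~2 immediately), and that in each tree at most one leaf can be non-pure (yielding part~1 via a union bound over the $\dims\wid$ trees). Your ``tainted nodes form a single descending path'' argument is just a slightly more explicit articulation of the paper's one-line claim that ``all leaves, except at most one, have a pure bit value.''
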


\begin{proof}
Observe that in a binary tree of \PURE gates, if some node has some pure value $b \in \{0,1\}$, then all nodes in the subtree rooted at this node also have value $b$. Applying this observation at the root of the tree rooted at $\vin{i}{j}$, we immediately obtain part 2 of the statement.

For part 1, note that for any $(i,j) \in [\dims] \times [\wid]$, in the binary tree of \PURE gates rooted at $\vin{i}{j}$, all leaves, except at most one, have a pure bit value. This follows from the definition of the \PURE gate and the observation about subtrees made in the previous paragraph. As a result, all values $(\val{\vinc{i}{j}{k}})_{(i,j,k) \in [\dims] \times [\wid] \times [\cop]}$ are pure bits, except for at most $\dims \wid$ of them. But this means that there are at least $\cop - \dims \wid$ values of $k \in [\cop]$ such that $\val{\vinc{i}{j}{k}} \in \{0,1\}$ for all $\forall (i,j) \in [\dims] \times [\wid]$. In other words, $|\goodcopies| \geq \cop - \dims \wid$.
\end{proof}

\paragraph{\bf Step 2: Circuit stage.}
We assume, without loss of generality, that $\lambda$ is given as a Boolean circuit $C: (\{0,1\}^{\wid})^{\dims} \to \{0,1\}^{\dims}$ using gates \AND, \OR, \NOT, and, on input $z \in (\{0,1\}^{\wid})^{\dims}$:
\begin{itemize}
    \item For each $i \in [\dims]$, the $i$th block of input bits $z_i \in \{0,1\}^{\wid}$ is interpreted by $C$ as representing a number $\overline{z_i} \in [\wid]$ in unary, in the exact same way as $\vincdimension{i}{k} \in [\wid]$ is obtained from the bitstring $(\val{\vinc{i}{j}{k}})_{j \in [\wid]}$.
    \item The circuit $C$ outputs $\lambda(\overline{z_1}, \dots, \overline{z_{\dims}}) \in \{-1,+1\}^{\dims}$, where a $-1$ output is represented by a $0$, and a $+1$ output by a $1$. To keep things simple, in the rest of this exposition we will abuse notation and think of $\lambda$ as outputting labels in $\{0,1\}^{\dims}$.
\end{itemize}
If the circuit is not originally in this form, then it can be brought in this form in polynomial time.

In the circuit stage, we construct $\cop$ separate copies of the circuit $C$, using the \AND, \OR, and \NOT gates. For each $k \in [\cop]$, the $k$th copy $C_k$ takes as input the nodes $(\vinc{i}{j}{k})_{(i,j) \in [N] \times [M]}$ and we denote its output nodes by $\vout{1}{k}, \dots, \vout{\dims}{k}$. Since the gates always have correct output when the inputs are pure bits, we immediately obtain the following.

\begin{lemma}[Circuit Lemma]\label{lem:circuit}
For all good copies $k \in \goodcopies$, the output of circuit $C_k$ is correct, i.e., $\val{\vout{i}{k}} = [\lambda(\vinctotal{k})]_i$ for all $i \in [\dims]$.
\end{lemma}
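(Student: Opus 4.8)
The plan is to argue by a straightforward structural induction on the acyclic circuit $C_k$, exploiting the fact that the \pcircuit gates \AND, \OR, \NOT all compute the correct Boolean function whenever their inputs are pure bits. First I would fix a good copy $k \in \goodcopies$. By the definition of $\goodcopies$, every input node of $C_k$, namely each $\vinc{i}{j}{k}$ with $(i,j) \in [\dims] \times [\wid]$, is assigned a pure bit value $\val{\vinc{i}{j}{k}} \in \{0,1\}$.

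Next I would recall that $C_k$ is, by construction, an isomorphic copy of the Boolean circuit $C$, and in particular its gate-graph is a directed acyclic graph: the only cycles in the whole \pcircuit instance are introduced later, via the feedback from the selection stage into the original inputs $\vin{i}{j}$, and such a cycle does not pass through any single copy $C_k$. Processing the gates of $C_k$ in topological order, I would prove by induction that every node of $C_k$ is assigned a pure bit value, and moreover that this value equals the value the ordinary Boolean circuit $C$ computes at that node on the pure-bit input $(\val{\vinc{i}{j}{k}})_{(i,j) \in [\dims] \times [\wid]}$. The base case is exactly the previous paragraph. For the inductive step, consider a gate of $C_k$ of type \AND, \OR, or \NOT; by the induction hypothesis its input node(s) carry pure bit values matching the corresponding Boolean values in $C$. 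Inspecting the truth tables in \cref{def:pcircuit} and \cref{sec:pcircuit-more-gates}, one sees that on pure-bit inputs each of these gates is \emph{forced} to output the unique correct Boolean value (e.g., \AND on inputs $(1,1)$ must output $1$, and on any input containing a $0$ must output $0$), so its output is again a pure bit and agrees with $C$.

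Finally I would read off the conclusion at the output nodes: $\vout{1}{k}, \dots, \vout{\dims}{k}$ hold exactly the bits that $C$ outputs on input $(\val{\vinc{i}{j}{k}})_{(i,j)}$. Since $C$ interprets its $i$th block of input bits as the unary number $\vincdimension{i}{k} \in [\wid]$ in precisely the same way that $\vincdimension{i}{k}$ was defined, and $C$ outputs $\lambda(\vincdimension{1}{k}, \dots, \vincdimension{\dims}{k}) = \lambda(\vinctotal{k})$ under the $\{-1,+1\}/\{0,1\}$ identification, we obtain $\val{\vout{i}{k}} = [\lambda(\vinctotal{k})]_i$ for all $i \in [\dims]$, as claimed.

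There is no real obstacle here: the lemma is essentially definitional, being an immediate consequence of the fact that the \pcircuit gates behave like their classical Boolean counterparts on pure-bit inputs. The only point requiring a moment's care is the observation that $C_k$ is acyclic, which is what makes the topological-order induction well-founded and what separates this "circuit stage" from the genuinely cyclic parts of the \pcircuit instance.
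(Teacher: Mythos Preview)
Your proposal is correct and matches the paper's approach: the paper treats this lemma as immediate, simply noting that ``the gates always have correct output when the inputs are pure bits,'' and your structural induction on the acyclic circuit $C_k$ is exactly the standard way to spell out that one-line observation. The only extra care you take---remarking that $C_k$ is acyclic so the induction is well-founded---is accurate and implicit in the paper's treatment.
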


\paragraph{\bf Step 3: Sorting stage.} In this stage, for each $i \in [\dims]$, we would like to have a gadget that takes as input the list of nodes $\vout{i}{1}, \dots, \vout{i}{\cop}$ (namely, the list of $i$th outputs of the circuits $C_1, \dots, C_K$) and outputs the nodes $\vsort{i}{1}, \dots, \vsort{i}{\cop}$, such that these output nodes are a sorted list of the values of the input nodes (where we think of the values as being ordered $0 < \garbo < 1$). Unfortunately, this is not possible given the gates we have at our disposal. However, it turns out that we can do some kind of ``weak'' sorting by using the robustness of the \AND and \OR gates (i.e., the fact that \AND on input $0$ and $s$, always outputs $0$, no matter what $s \in \{0,1,\garbo\}$ is).

For now assume that we consider values in $[0,1]$ (instead of $\{0,1,\garbo\}$) and that we have access to a comparator gate that takes two inputs $s_1$ and $s_2$ and outputs $t_1$ and $t_2$, such that $t_1, t_2$ is the sorted list $s_1, s_2$. Formally, we can write this as $t_1 := \min\{s_1,s_2\}$ and $t_2 := \max\{s_1,s_2\}$. Using comparator gates, it is easy to construct a circuit that takes $\cop$ inputs and outputs them in sorted order. Indeed, we can directly implement a sorting network~\cite{Knuth98-book-vol3-sorting}, for example. Even a very naive approach will yield such a circuit of polynomial size, which is all we need. We implement this circuit with inputs $\vout{i}{1}, \dots, \vout{i}{\cop}$ and outputs $\vsort{i}{1}, \dots, \vsort{i}{\cop}$ in our \pcircuit instance, by replacing every comparator gate by \AND and \OR gates. Namely, to implement a comparator gate with inputs $s_1, s_2$ and outputs $t_1, t_2$, we use an \AND gate with inputs $s_1, s_2$ and output $t_1$, and an \OR gate with inputs $s_1, s_2$ and output $t_2$. The robustness of the \AND and \OR gates allows us to prove that this sorting gadget sorts the pure bit values correctly, in the following sense.

\begin{lemma}[Sorting Lemma]\label{lem:sorting}
Let $\cop_0$ and $\cop_1$ denote the number of zeroes and ones that the $i$th sorting gadget gets as input, i.e., $\cop_b := |\{k \in [\cop] : \val{\vout{i}{k}} = b\}|$. Then, the first $\cop_0$ outputs of the gadget are zeroes, and the last $K_1$ outputs are ones. Formally, $\val{\vsort{i}{k}} = 0$ for all $k \in [\cop_0]$, and $\val{\vsort{i}{\cop + 1 - k}} = 1$ for all $k \in [\cop_1]$.
\end{lemma}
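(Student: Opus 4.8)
The plan is to prove the two halves of the statement ($0$'s pushed to the front, $1$'s pushed to the back) by reducing, via two one-sided Boolean ``projections'' of the configuration, to the trivial fact that a comparator network sorts every $0$-$1$ vector into ascending order.

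First I would record the two robustness facts about the single comparator gadget, which consists of an \AND gate feeding the ``low'' output wire and an \OR gate feeding the ``high'' output wire. Writing $x_\alpha, x_\beta$ for the two incoming values and $x'_\alpha = \AND(x_\alpha,x_\beta)$, $x'_\beta = \OR(x_\alpha,x_\beta)$ for the outgoing values: robustness of \AND gives that $x_\alpha = 0$ or $x_\beta = 0$ forces $x'_\alpha = 0$, and $x_\alpha = x_\beta = 0$ additionally forces $x'_\beta = 0$; dually, robustness of \OR gives that $x_\alpha = 1$ or $x_\beta = 1$ forces $x'_\beta = 1$, and $x_\alpha = x_\beta = 1$ additionally forces $x'_\alpha = 1$. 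Note that on garbage inputs the gadget may produce spurious $0$'s or $1$'s, so it does \emph{not} faithfully simulate a comparator; working around this is the main point of the argument, and it is why one cannot simply invoke the $0$-$1$ principle as a black box.

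The key step is to introduce the two Boolean indicators $n(v) := [\,v \neq 0\,]$ and $o(v) := [\,v = 1\,]$. The robustness facts above translate into the clean one-sided inequalities: for the comparator gadget, $n(x'_\alpha) \le \min(n(x_\alpha), n(x_\beta))$ and $n(x'_\beta) \le \max(n(x_\alpha), n(x_\beta))$, while $o(x'_\alpha) \ge \min(o(x_\alpha), o(x_\beta))$ and $o(x'_\beta) \ge \max(o(x_\alpha), o(x_\beta))$. In words, the $n$-indicators pass through the gadget ``below'' what a genuine comparator (routing $\min$ to the low wire, $\max$ to the high wire) would do, and the $o$-indicators pass ``above'' it. Now fix the underlying sorting network $\mathcal S$ (say a naive bubble-sort network, each comparator routing $\min$ to a designated low wire and $\max$ to the high wire), process its comparators in topological order, and let $a^{(t)} \in \{0,1,\garbo\}^{\cop}$ be the vector of wire values of our \pcircuit gadget after $t$ comparator gadgets, so $a^{(0)} = (\val{\vout{i}{1}},\dots,\val{\vout{i}{\cop}})$ and $a^{(m)} = (\val{\vsort{i}{1}},\dots,\val{\vsort{i}{\cop}})$. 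A straightforward induction on $t$, using monotonicity of $\min$ and $\max$ to propagate the inequalities, yields $n(a^{(m)}) \le \mathcal S\bigl(n(a^{(0)})\bigr)$ and $o(a^{(m)}) \ge \mathcal S\bigl(o(a^{(0)})\bigr)$ pointwise, where $\mathcal S(\cdot)$ denotes the output of the sorting network on a $0$-$1$ input vector.

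Finally I would plug in the two relevant Boolean inputs. The vector $n(a^{(0)})$ has exactly $\cop - \cop_0$ ones (one per non-zero input value), so $\mathcal S$ sorts it to $(0^{\cop_0}, 1^{\cop-\cop_0})$; hence $n(a^{(m)})_j = 0$, i.e.\ $\val{\vsort{i}{j}} = 0$, for every $j \in [\cop_0]$. Symmetrically, $o(a^{(0)})$ has exactly $\cop_1$ ones, so $\mathcal S\bigl(o(a^{(0)})\bigr) = (0^{\cop-\cop_1}, 1^{\cop_1})$, forcing $o(a^{(m)})_{\cop+1-k} = 1$, i.e.\ $\val{\vsort{i}{\cop+1-k}} = 1$, for every $k \in [\cop_1]$. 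This gives the lemma. The only remotely delicate point is the bookkeeping in the induction (tracking which wire is the low/high output of each comparator gadget), but that is routine; the entire conceptual content is the one-sided translation of \AND/\OR robustness into the two monotone indicators $n$ and $o$.
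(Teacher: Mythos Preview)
Your proof is correct and is essentially the same argument as the paper's: the paper runs a single ``ideal'' comparator circuit on the inputs with $\garbo$ mapped to $1/2$ and proves by induction that any pure bit in the ideal run is preserved in the actual \pcircuit run, which is exactly what your two one-sided monotone indicators $n$ and $o$ encode (your $\mathcal{S}(n(a^{(0)}))$ and $\mathcal{S}(o(a^{(0)}))$ are precisely the $0$-positions and $1$-positions of the paper's ideal run). The only difference is packaging: you split the invariant into two Boolean dominations, whereas the paper keeps it as one three-valued reference computation.
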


\begin{proof}
Consider the ideal sorting circuit (that uses comparator gates) with input $f(\val{\vout{i}{1}}),\allowbreak \dots,\allowbreak f(\val{\vout{i}{\cop}})$, where $f$ maps $0$ to $0$, $1$ to $1$, and $\garbo$ to $1/2$. In other words, we imagine running the comparator circuit on our list of values, except that the ``garbage'' value $\garbo$ is replaced by $1/2$. Since the comparator circuit correctly sorts the list, its output satisfies the desired property: the first $\cop_0$ outputs are $0$, and the last $\cop_1$ outputs are $1$. Thus, in order to prove the lemma, it suffices to prove the following claim: \emph{if a node in the ideal circuit has a pure bit value $b \in \{0,1\}$, then the corresponding node in our \pcircuit instance must also have value $b$.}

We prove the claim by induction. Clearly, all input nodes satisfy the claim. Now consider some node $t_1$ that is the $\min$-output of a comparator gate with inputs $s_1$ and $s_2$, that both satisfy the claim. Recall that this gate will be implemented in the \pcircuit by an \AND gate with inputs $s_1, s_2$ and output $t_1$. If the ideal circuit assigns value $1/2$ to $t_1$, then the claim trivially holds for $t_1$. If the ideal circuit assigns value $1$ to $t_1$, then both $s_1$ and $s_2$ must have value $1$ in the ideal circuit. Since the claim holds for $s_1$ and $s_2$, they also have value $1$ in \pcircuit, and so the \AND gate will ensure that $t_1$ also has value $1$, thus satisfying the claim. Finally, if the ideal circuit assigns value $0$ to $t_1$, then it must also have assigned value $0$ to at least one of $s_1$ or $s_2$. But then, by the claim, \pcircuit also assigns value $0$ to at least one of $s_1$ or $s_2$, and the robustness of the \AND gate ensures that $t_1$ also has value $0$. The same argument also works with $\max$ and \OR instead.
\end{proof}

Note that \cref{lem:sorting} only guarantees a ``weak'' type of sorting: some parts of the output list might not be correctly ordered, and the list of output values might not be a permutation of the input values (namely, it can happen that there are more $0$'s and/or $1$'s in the output list than in the input list). However, this ``weak'' sorting will be enough for our needs as we will see below.

\paragraph{\bf Step 4: Selection stage.}
Since the list $\vsort{i}{1}, \dots, \vsort{i}{\cop}$, is now ``sorted'', we can select $\wid$ nodes from it in such a way that at most one node does not have a pure value. Indeed, this can be achieved by selecting nodes that are sufficiently far apart from each other. We thus select the nodes $(\vsort{i}{j \cdot 2 \dims \wid})_{j \in [\wid]}$ and copy their values onto the original input nodes $(\vin{i}{j})_{j \in [\wid]}$. Namely, for each $j \in [\wid]$ we introduce a \COPY gate with input $\vsort{i}{j \cdot 2 \dims \wid}$ and output $\vin{i}{j}$. Recall that $\cop = 3 \dims \wid^2 \geq \wid \cdot 2 \dims \wid$, so this is well defined.

This selection procedure ensures that two nice properties hold. First, as mentioned above, at most one of the selected nodes does not have a pure value. This is due to the fact that we select nodes that are far apart from each other in the ``sorted'' list, and all nodes with non-pure values lie close together in that list. Second, if all good copies agree that the $i$th output is, say, $1$, then all of the selected nodes will have value $1$. This is due to the fact that the ``sorted'' list will contain very few values that are not $1$, and these values will lie at the very beginning of the list. Since we do not select any nodes from the beginning of the list (the first node that is selected is at index $2NM$), we will thus only select nodes with value $1$. A similar argument also applies to the case where all good copies agree that the $i$th output is $0$. In that case, we use the fact that we do not select any nodes that lie close to the end of the list. We formalize these arguments in the proof of the following lemma.

\begin{lemma}[Selection Lemma]\label{lem:selection}
We have:
\begin{enumerate}
    \item All good copies are close to each other: $\|\vinctotal{k} - \vinctotal{k'}\|_\infty \leq 1$ for all $k, k' \in \goodcopies$.
    \item If all good copies agree that the $i$th output is $b \in \{0,1\}$, i.e., $\val{\vout{i}{k}} = b$ for all $k \in \goodcopies$, then the original input satisfies $\val{\vin{i}{j}} = b$ for all $j \in [\wid]$.
\end{enumerate}
\end{lemma}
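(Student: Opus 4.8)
The plan is to prove the two parts separately, in both cases by combining the Purification Lemma, the Sorting Lemma, and the way the selection indices $j \cdot 2\dims\wid$ are spaced. The key quantitative fact I would use throughout is that, by part 1 of the Purification Lemma, there are at most $\dims\wid$ copies $k \in [\cop]$ that are \emph{not} good, i.e., $|[\cop] \setminus \goodcopies| \le \dims\wid$. Since all good copies $k \in \goodcopies$ have $\val{\vout{i}{k}}$ a pure bit (by the Circuit Lemma), in the notation of the Sorting Lemma we have $\cop - \cop_0 - \cop_1 \le \dims\wid$: at most $\dims\wid$ of the inputs to the $i$th sorting gadget are non-pure. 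By the Sorting Lemma, in the sorted list $\vsort{i}{1}, \dots, \vsort{i}{\cop}$ the first $\cop_0$ entries are $0$ and the last $\cop_1$ entries are $1$; hence every index $k$ with $\cop_0 < k < \cop + 1 - \cop_1$ is the only place a non-pure value can sit, and there are at most $\dims\wid$ such indices, all consecutive.

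For part 2, suppose all good copies agree that $\val{\vout{i}{k}} = b$; take $b = 1$ (the case $b=0$ is symmetric, using the ``beginning of the list'' instead of the ``end''). Then $\cop_1 \ge |\goodcopies| \ge \cop - \dims\wid$, so $\cop + 1 - \cop_1 \le \dims\wid + 1$. Every selected index has the form $j \cdot 2\dims\wid$ with $j \ge 1$, hence is at least $2\dims\wid > \dims\wid$, so every selected index lies in the range $[\cop + 1 - \cop_1, \cop]$ where the sorted value is $1$. Therefore $\val{\vsort{i}{j \cdot 2\dims\wid}} = 1$ for all $j \in [\wid]$, and the \COPY gate forces $\val{\vin{i}{j}} = 1$ for all $j \in [\wid]$. (For $b=0$: $\cop_0 \ge \cop - \dims\wid$; but the largest selected index is $\wid \cdot 2\dims\wid = 2\dims\wid^2 \le \cop - \dims\wid^2 < \cop_0$, so all selected nodes fall in the all-zero prefix.)

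For part 1, I would argue that the selected nodes are pairwise far enough apart in the sorted list that at most one can be non-pure, which then forces the good copies' unary values to differ by at most $1$. Concretely, the non-pure positions in $\vsort{i}{1}, \dots, \vsort{i}{\cop}$ all lie in an interval of length at most $\dims\wid$, while consecutive selected indices differ by exactly $2\dims\wid > \dims\wid$; hence at most one selected index $j \cdot 2\dims\wid$ can hit that interval. So for each $i$, among $\vin{i}{1}, \dots, \vin{i}{\wid}$ at most one node is non-pure (the rest being copies of sorted nodes that are $0$ or $1$), and moreover — crucially — since the sorted list is monotone on its pure prefix and suffix, once we strip out the at-most-one bad coordinate the remaining bits of $(\vin{i}{j})_{j}$ are monotone: all the $0$'s come before all the $1$'s. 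Now feed this into the Purification Lemma: by part 2 of that lemma, every good copy $k$ has $\val{\vinc{i}{j}{k}} = \val{\vin{i}{j}}$ at every coordinate $j$ where $\val{\vin{i}{j}}$ is pure, i.e., at all but at most one $j$. Two good copies $k, k'$ can therefore differ only in that single non-pure coordinate of $\vin{i}{\cdot}$, and because the pure coordinates form a monotone $0^\ast 1^\ast$ pattern, flipping the one remaining bit changes the number of $1$'s by at most $1$; hence $|\vincdimension{i}{k} - \vincdimension{i}{k'}| \le 1$. Taking the max over $i \in [\dims]$ gives $\|\vinctotal{k} - \vinctotal{k'}\|_\infty \le 1$.

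The main obstacle I anticipate is part 1, and specifically the bookkeeping needed to conclude that two good copies differ in at most one \emph{coordinate} and that this coordinate is positioned so that the unary count moves by at most $1$. The subtle point is that ``$\vin{i}{\cdot}$ has at most one non-pure bit'' is not by itself enough — one must also use that the pure bits are sorted ($0$'s then $1$'s), which comes from the Sorting Lemma plus the monotone spacing of the selection indices, so that the lone non-pure coordinate sits between the zero-block and the one-block, and the unary value is therefore within $1$ of both of its possible resolutions. I would make sure the constant $\cop = 3\dims\wid^2$ is comfortably large enough to absorb all these $\dims\wid$ and $2\dims\wid^2$ terms (it is: $3\dims\wid^2 \ge 2\dims\wid^2 + \dims\wid^2$ and $3\dims\wid^2 \ge \wid \cdot 2\dims\wid$), and state these inequalities explicitly rather than leaving them implicit.
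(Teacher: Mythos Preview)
Your proof is correct and follows essentially the same route as the paper: use the Purification Lemma to bound the number of bad copies, the Circuit Lemma to ensure good copies give pure outputs, the Sorting Lemma to localize the non-pure entries in a short interval, and the $2\dims\wid$ spacing of the selection indices to ensure at most one selected node is non-pure.

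One simplification: your ``subtle point'' in part~1 about needing the monotone $0^\ast 1^\ast$ pattern is unnecessary. Recall that $\vincdimension{i}{k}$ is defined simply as the \emph{number of $1$'s} in the bitstring $(\val{\vinc{i}{j}{k}})_{j \in [\wid]}$. Once you know that at most one original input $\vin{i}{j}$ is non-pure, part~2 of the Purification Lemma forces any two good copies $k,k'$ to agree on all but that one coordinate; two bitstrings differing in at most one position have $1$-counts differing by at most $1$, full stop, regardless of where that position sits. The paper's proof uses exactly this simpler argument and never invokes monotonicity of the selected bits.
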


\begin{proof}
We begin by proving part 1 of the statement. First of all, note that it suffices to show that, for all $i \in [\dims]$, at most one of the original inputs $\val{\vin{i}{1}}, \dots, \val{\vin{i}{\wid}}$ is not a pure bit. Indeed, in that case, by part 2 of the Purification Lemma (\cref{lem:purification}), it follows that for any $i \in [\dims]$ and any $k,k' \in \goodcopies$, the bitstrings $(\val{\vinc{i}{j}{k}})_{j \in [\wid]}$ and $(\val{\vinc{i}{j}{k'}})_{j \in [\wid]}$ differ in at most one bit. But this implies that $|\vincdimension{i}{k} - \vincdimension{i}{k'}| \leq 1$ for all $i \in [\dims]$, and thus $\|\vinctotal{k} - \vinctotal{k'}\|_\infty \leq 1$.

Now consider any $i \in [\dims]$. By part 1 of the Purification Lemma (\cref{lem:purification}), we have that the number of good copies $|\goodcopies| \geq \cop - \dims \wid$. By the Circuit Lemma (\cref{lem:circuit}) we know that the corresponding outputs are pure bits, i.e., $\val{\vout{i}{k}} \in \{0,1\}$ for all $k \in \goodcopies$. Thus, the list $\val{\vout{i}{1}}, \dots, \val{\vout{i}{\cop}}$ contains at least $\cop - \dims \wid$ pure bits. Using the notation from the Sorting Lemma (\cref{lem:sorting}), this means that $\cop_0 + \cop_1 \geq \cop - \dims \wid$. As a result, by applying the Sorting Lemma, it follows that in the list obtained after sorting, $\val{\vsort{i}{1}}, \dots, \val{\vsort{i}{\cop}}$, all the non-pure bits are contained in an interval of length $\dims \wid$. Since the selected nodes $(\vsort{i}{j \cdot 2 \dims \wid})_{j \in [\wid]}$ are sufficiently far apart (namely $2 \dims \wid$), at most one such node can fall in the ``bad'' interval. This means that all selected nodes, except at most one, are pure bits, and since the selected nodes are copied into the original inputs, this also holds for them.

It remains to prove part 2 of the statement. If for some $i \in [\dims]$, $\val{\vout{i}{k}} = 0$ for all $k \in \goodcopies$, then this means that $\cop_0 \geq |\goodcopies| \geq \cop - \dims \wid$. By the Sorting Lemma (\cref{lem:sorting}), it follows that $\val{\vsort{i}{k}} = 0$ for all $k \in [\cop - \dims \wid]$. In particular, since $\cop - \dims \wid = 3 \dims \wid^2 - \dims \wid \geq \wid \cdot 2 \dims \wid$, this means that for all $j \in [\wid]$, $\val{\vsort{i}{j \cdot 2 \dims \wid}} = 0$. But these are the nodes we select, and we copy their value into the original inputs, so the statement follows. The case where $\val{\vout{i}{k}} = 1$ for all $k \in \goodcopies$ is handled analogously.
\end{proof}

\paragraph{\bf Correctness.}
The description of the reduction is now complete. We have constructed a valid instance of \pcircuit in polynomial time. In particular, note that every node is the output of exactly one gate. To complete the proof, it remains to prove that from any solution of the \pcircuit instance we can extract a solution to \ssperner in polynomial time. We do this in the following final lemma.

\begin{lemma}[Solution Lemma]\label{lem:solution}
The points $\{\vinctotal{k} : k \in \goodcopies\} \subseteq [\wid]^{\dims}$ yield a solution to the \ssperner instance $\lambda$.
\end{lemma}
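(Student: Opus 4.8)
The plan is to check the two defining conditions of an \ssperner solution for the point set $S := \{\vinctotal{k} : k \in \goodcopies\}$. First, $S$ is nonempty: part 1 of the Purification Lemma (\cref{lem:purification}) gives $|\goodcopies| \geq \cop - \dims\wid = 3\dims\wid^2 - \dims\wid > 0$. Second, the pairwise $\ell_\infty$-closeness condition is exactly part 1 of the Selection Lemma (\cref{lem:selection}). So the only real work is to show that the labels of the points in $S$ cover all labels, and for this I would argue by contradiction, using the feedback built into the construction together with the boundary conditions of $\lambda$.

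So suppose $S$ does not cover all labels: there exist a coordinate $i \in [\dims]$ and a sign $\ell \in \{-1,+1\}$ with $[\lambda(\vinctotal{k})]_i \neq \ell$, hence $[\lambda(\vinctotal{k})]_i = -\ell$, for every $k \in \goodcopies$. Passing to the $\{0,1\}$ encoding used inside the circuit stage (with $-1$ encoded by $0$ and $+1$ by $1$), let $b \in \{0,1\}$ be the bit encoding $-\ell$. By the Circuit Lemma (\cref{lem:circuit}), $\val{\vout{i}{k}} = [\lambda(\vinctotal{k})]_i = b$ for all good copies $k$, i.e.\ all good copies agree that the $i$th output is $b$. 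Part 2 of the Selection Lemma (\cref{lem:selection}) then forces the original inputs to satisfy $\val{\vin{i}{j}} = b$ for all $j \in [\wid]$. Since the $i$th original inputs are now all the pure bit $b$, part 2 of the Purification Lemma (\cref{lem:purification}) yields $\val{\vinc{i}{j}{k}} = b$ for every copy $k$ and every $j \in [\wid]$, and therefore $\vincdimension{i}{k}$ equals the extreme value $\wid$ when $b = 1$, and equals $1$ when $b = 0$ (using the convention that the all-zero bitstring represents $1$).

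The boundary conditions of the \ssperner instance now close the loop. If $b = 1$, then $\vincdimension{i}{k} = \wid$ for all $k \in \goodcopies$, so the boundary condition gives $[\lambda(\vinctotal{k})]_i = -1$, i.e.\ $\val{\vout{i}{k}} = 0$; but $\val{\vout{i}{k}} = b = 1$, a contradiction. Symmetrically, if $b = 0$, then $\vincdimension{i}{k} = 1$, so $[\lambda(\vinctotal{k})]_i = +1$, i.e.\ $\val{\vout{i}{k}} = 1$; but $\val{\vout{i}{k}} = b = 0$, again a contradiction (and such a $k$ exists since $\goodcopies \neq \emptyset$). Hence $S$ covers all labels.

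Finally, $S$ is a set of at most $\cop$ points satisfying both \ssperner requirements, while the definition asks for exactly $\dims$ points; as noted right after the definition of \ssperner, this is without loss of generality — if $|S| < \dims$ we repeat points, and if $|S| > \dims$ we extract in polynomial time a sub-list of $\dims$ points that still covers all labels \cite[Lemma~3.1]{DeligkasFHM2022-CH-constant-eps}. This gives a solution to $\lambda$, extractable in polynomial time from any solution $\valonly$ of the \pcircuit instance. I do not expect a genuine obstacle here: the structural heavy lifting is already done by the Purification, Circuit, Sorting, and Selection Lemmas, and the only point requiring care is keeping the two encodings consistent — the $\{-1,+1\}$ labels versus their $\{0,1\}$ representation, and the all-zero convention for the unary value $1$.
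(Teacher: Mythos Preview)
Your proof is correct and follows essentially the same approach as the paper's: establish pairwise closeness via the Selection Lemma, then argue by contradiction that all labels are covered by chaining the Circuit Lemma, part 2 of the Selection Lemma, part 2 of the Purification Lemma, and the \ssperner boundary conditions. You are slightly more explicit than the paper about nonemptiness of $\goodcopies$ and about reducing to exactly $\dims$ points, but these are the same minor bookkeeping points the paper handles in passing.
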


\begin{proof}
First of all, by part 1 of the Selection Lemma (\cref{lem:selection}), we know that the points $\vinctotal{k}$, $k \in \goodcopies$, are all within $\ell_\infty$-distance $1$. Thus, it suffices to prove that they cover all labels with respect to $\lambda$. Note that we can efficiently extract these points from a solution $\valonly$, since, for each $k \in [K]$, we can easily decide whether $k$ lies in $\goodcopies$ or not, and then extract the point $\vinctotal{k}$.

We prove by contradiction that the points $\vinctotal{k}$, $k \in \goodcopies$, must cover all labels. Assume, on the contrary, that there exists $i \in [\dims]$ and $b \in \{0,1\}$ such that $[\lambda(\vinctotal{k})]_i = b$ for all $k \in \goodcopies$. Then, by the Circuit Lemma (\cref{lem:circuit}), all the corresponding circuits must output $b$, i.e., $\val{\vout{i}{k}} = b$ for all $k \in \goodcopies$. By part 2 of the Selection Lemma (\cref{lem:selection}), it follows that the original input satisfies $\val{\vin{i}{j}} = b$ for all $j \in [\wid]$. Finally, by part 2 of the Purification Lemma (\cref{lem:purification}), it must be that for all copies $k \in \goodcopies$, all $\wid$ bits $\val{\vinc{i}{1}{k}}, \dots, \val{\vinc{i}{\wid}{k}}$ are equal to $b$. Now, if $b=1$, then this means that $\vincdimension{i}{k} = \wid$, and thus $[\lambda(\vinctotal{k})]_i = 0 \neq b$ by the \ssperner boundary conditions, which contradicts the original assumption. Similarly, if $b=0$, then $\vincdimension{i}{k} = 1$, and thus $[\lambda(\vinctotal{k})]_i = 1 \neq b$ by the \ssperner boundary conditions, which is again a contradiction. (We recall here that we have renamed the labels $\{-1,+1\}$ to $\{0,1\}$, respectively, for the purpose of this proof.)
\end{proof}

\section{Applications}
\label{sec:applications}

In this section, we derive strong inapproximability lower
bounds for \ppad/-complete problems, by reducing from \pcircuit.

\subsection{Generalized Circuit}\label{sec:gcircuit}

The $\eps$-\gcircuit problem was introduced by Chen, Deng, and
Teng~\cite{ChenDT09-Nash}.  In this section, we show that $\eps$-\gcircuit is
\ppad/-hard for all $\eps < 1/10$.

\paragraph{\bf The $\boldsymbol{\eps}$-\gcircuit problem.}
The problem is defined as follows.

\begin{definition}[Generalized Circuit~\cite{ChenDT09-Nash}]
A generalized circuit is a tuple $(V, T)$, where $V$ is a set of nodes, and $T$
is a set of gates. Each gate $t \in T$ is a five-tuple $(G, u, v, w, c)$,
where $G$ is a gate type from the set $\{G_c, G_{\times c}, G_=, G_+, G_-, G_<,
G_\lor, G_\land, G_\lnot\}$, $u, v \in V \cup \{\textsf{nil}\}$ are input
variables, $w \in V$ is an output variable, and $c \in [0, 1]
\cup \{\textsf{nil}\}$ is a rational constant.

The following requirements must be satisfied for each gate 
$(G, u, v, w, c) \in T$.
\begin{itemize}
\item 
$G_c$ gates take no input variables and use a constant in $[0, 1]$. So $u = v = \textsf{nil}$
and $c \in [0, 1]$ whenever $G = G_c$.

\item

$G_{\times c}$ gates take one input variable and a constant. So $u \in V$, $v =
\textsf{nil}$, and $c \in [0, 1]$ whenever $G = G_{\times c}$. 

\item 
$G_{=}$ and $G_{\lnot}$ gates take one input variable and do not use a constant. So $u \in V$,
$v = c = \textsf{nil}$, whenever $G \in \{G_{=}, G_{\lnot}\}$. 

\item All other gates take two input variables and do not use a constant. So 
$u \in
V$, $v \in V$, and $c = \textsf{nil}$ whenever $G \notin \{G_c, G_{\times c},
G_{=}, G_{\lnot}\}$. 

\item Every variable in $V$ is the output variable for exactly one gate. More
formally, for each variable $w \in V$, there is exactly one gate $t \in T$ such
that $t = (G, u, v, w, c)$.  
\end{itemize}
\end{definition}

The $\eps$-\gcircuit problem is defined as follows. Given a generalized circuit
$(V, T)$, find a vector $\valonly \in [0, 1]^{|V|}$ such that for each gate 
in $T$ the following constraints are satisfied.

\begin{center}
\begin{tabular}{l|l}
Gate & Constraint \\ \hline
$(G_c, \textsf{nil}, \textsf{nil}, w, c)$ & $\val{w} = c \pm \eps$ \\
$(G_{\times c}, u, \textsf{nil}, w, c)$ & $\val{w} = \val{u} \cdot c \pm \eps$ \\
$(G_{=}, u, \textsf{nil}, w, \textsf{nil})$ & $\val{w} = \val{u} \pm \eps$ \\
$(G_{+}, u, v, w, \textsf{nil})$ & $\val{w} = \min(\val{u} + \val{v}, 1) \pm \eps$ \\
$(G_{-}, u, v, w, \textsf{nil})$ & $\val{w} = \max(\val{u} - \val{v}, 0) \pm \eps$ \\
$(G_{<}, u, v, w, \textsf{nil})$ & $\val{w} = 
    \begin{cases} 
    1 \pm \eps & \text{if } \val{u} < \val{v} - \eps \\ 
    0 \pm \eps & \text{if } \val{u} > \val{v} + \eps 
    \end{cases}$ \\
$(G_{\lor}, u, v, w, \textsf{nil})$ & $\val{w} = 
    \begin{cases} 
    1 \pm \eps & \text{if } \val{u} \geq 1 - \eps \text{ or } \val{v} \geq 1 - \eps \\ 
    0 \pm \eps & \text{if } \val{u} \leq \eps \text{ and } \val{v} \leq \eps
    \end{cases}$ \\
$(G_{\land}, u, v, w, \textsf{nil})$ & $\val{w} = 
    \begin{cases} 
    1 \pm \eps & \text{if } \val{u} \geq 1 - \eps \text{ and } \val{v} \geq 1 - \eps \\ 
    0 \pm \eps & \text{if } \val{u} \leq \eps \text{ or } \val{v} \leq \eps
    \end{cases}$ \\
$(G_{\lnot}, u, \textsf{nil}, w, \textsf{nil})$ & $\val{w} = 
    \begin{cases} 
    1 \pm \eps & \text{if } \val{u} \leq \eps \\
    0 \pm \eps & \text{if } \val{u} \geq 1 - \eps
    \end{cases}$ \\
\end{tabular}
\end{center}
Here the notation $a = b \pm \eps$ is used as a shorthand for $a \in [b - \eps, b + \eps]$.
We will also make use of gates of type $(G_{>}, u, v, w, \textsf{nil})$ which enforce the constraint
\begin{equation*}
\val{w} = \begin{cases} 
    0 \pm \eps & \text{if } \val{u} < \val{v} - \eps \\ 
    1 \pm \eps & \text{if } \val{u} > \val{v} + \eps 
    \end{cases}
\end{equation*}
Gates of type $G_{>}$ can be easily built by using a $G_{\lnot}$ gate to negate
the output of a $G_{<}$ gate.

In the remainder of this section, we prove the following result.

\begin{theorem}
\label{thm:gcircuithard}
\eps-\gcircuit is \ppad/-hard for every $\eps < 1/10$.
\end{theorem}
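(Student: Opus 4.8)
The plan is to reduce from \pcircuit, which is \ppad/-hard by \cref{thm:pcircuit-hard}; since $\eps$-\gcircuit lies in \ppad/, this yields the theorem. Fix $\eps < 0.1$. I would encode the three \pcircuit values as bands of $[0,1]$: a \gcircuit value is \emph{decoded} as the bit $0$ if it is $\le \eps$, as the bit $1$ if it is $\ge 1-\eps$, and as $\garbo$ otherwise. This decoding is respected by the $\eps$-\gcircuit gates we will use, since whenever a $G_{\lnot}$, $G_{\lor}$, $G_{<}$ or $G_{>}$ gate is required to output the ideal value $0$ (resp.\ $1$) its value must lie in $[0,\eps]$ (resp.\ $[1-\eps,1]$). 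The reduction creates one \gcircuit node per \pcircuit node and replaces each gate by an $O(1)$-size gadget of logical/comparison gates and $G_c$ constant nodes; from any $\eps$-\gcircuit solution $\valonly$ we read off the \pcircuit assignment via the decoding, and the task is to check that every \pcircuit gate is then satisfied. Since each gadget has constant size and fan-out is free in both problems, the reduction is clearly polynomial time, so the content is entirely in the gadgets.

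For a \NOR gate I would first \emph{sharpen} each of its two inputs by feeding it into a $G_{>}$ gate that compares it with the constant $1/2$: this produces a value in $[0,\eps]$ when the input decodes to $0$ and a value in $[1-\eps,1]$ when it decodes to $1$ (the comparison is unambiguous because $\eps < 1/6$), while on a $\garbo$ input the sharpened value is arbitrary, which is harmless since \NOR imposes no constraint on its output in that case. The two sharpened values are then routed through a $G_{\lor}$ gate and then a $G_{\lnot}$ gate. If at least one \NOR input decodes to $1$, its sharpened value is $\ge 1-\eps$, forcing the $G_{\lor}$ output into $[1-\eps,1]$ and hence the $G_{\lnot}$ output into $[0,\eps]$, decoding to $0$; if both \NOR inputs decode to $0$, the $G_{\lor}$ output lies in $[0,\eps]$ and the $G_{\lnot}$ output in $[1-\eps,1]$, decoding to $1$. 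This is exactly the robust \NOR truth table of \cref{def:pcircuit}.

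The hard part is the \PURE gate, whose ``at least one output is a pure bit'' requirement has no direct analogue among the standard \gcircuit gates. I would implement it with two comparison gates. Pick rational constants $\gamma_2 < \gamma_1$ with $3\eps < \gamma_2$, with $\gamma_1 < 1-3\eps$, and with $\gamma_1 - \gamma_2 > 4\eps$; such constants exist precisely because $\eps < 0.1$, as the interval $(3\eps,\,1-3\eps)$ then has length $1-6\eps > 4\eps$ (concretely $\gamma_2 = 3\eps + \tfrac{1-10\eps}{3}$ and $\gamma_1 = 1-3\eps - \tfrac{1-10\eps}{3}$ work). The \PURE gadget on input node $u$ produces, via two $G_{>}$ gates against $G_c$ constant nodes, the result of comparing $\val{u}$ with $\gamma_1$ and the result of comparing $\val{u}$ with $\gamma_2$, and declares these to be its two outputs. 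A $G_{>}$ gate comparing with the constant $\gamma$ has its value forced into $[0,\eps]$ for inputs below $\gamma-2\eps$, forced into $[1-\eps,1]$ for inputs above $\gamma+2\eps$, and unconstrained only for inputs inside the window $[\gamma-2\eps,\,\gamma+2\eps]$. Hence the first output can decode to $\garbo$ only if $\val{u} \in [\gamma_1 - 2\eps,\,\gamma_1 + 2\eps]$ and the second only if $\val{u} \in [\gamma_2 - 2\eps,\,\gamma_2 + 2\eps]$; by $\gamma_1 - \gamma_2 > 4\eps$ these windows are disjoint, so in every solution at least one output is a pure bit. Moreover, if $\val{u} \le \eps$ then $\val{u} < \gamma_2 - 2\eps$, so both outputs lie in $[0,\eps]$ and decode to $0$; symmetrically, if $\val{u} \ge 1-\eps$ both outputs decode to $1$. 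This matches the \PURE truth table, and the inequality $1-6\eps > 4\eps$ is exactly what pins the threshold at $0.1$.

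I expect the \PURE gadget together with this window-and-threshold bookkeeping to be the only delicate point; the \NOR gadget and the final verification that the decoded assignment satisfies every gate (the boundary cases $\val{u} = \eps$ or $\val{u} = 1-\eps$ being absorbed into the ``$\le$''/``$\ge$'' in the decoding) are routine. It remains only to note that the constructed instance is a legal generalized circuit: each \gcircuit node is the output of exactly one gate, because the gadget outputs are the images of the \pcircuit gate outputs (each \pcircuit node being the output of exactly one gate by \cref{def:pcircuit}), and every internal node --- sharpened inputs, $G_{\lor}$ outputs, and constant nodes --- lives inside a single gadget.
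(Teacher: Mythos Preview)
Your proposal is correct and follows essentially the same approach as the paper: the same encoding of $\{0,1,\garbo\}$ as bands of $[0,1]$, and the same \PURE gadget built from two $G_>$ comparisons against distinct thresholds, with the disjoint-window argument pinning the bound at $\eps<0.1$ (the paper simply picks the concrete thresholds $0.3$ and $0.7$). The only cosmetic difference is the \NOR gadget: the paper computes $\val{u}+\val{v}$ with $G_+$ and compares against the constant $5/9$, whereas you use $G_\lor$ followed by $G_\lnot$; both are valid, and in fact your sharpening step is unnecessary since $G_\lor$ already has the required robustness (your sentence ``harmless since \NOR imposes no constraint \dots\ in that case'' is slightly misleading, but the subsequent case analysis does correctly handle the robust case where one input is $1$ and the other is $\garbo$).
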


\begin{proof}
We will reduce from the \pcircuit problem that uses the gates \NOR and
\PURE, which we showed to be \ppad/-hard in \cref{thm:pcircuit-hard}. 
We will encode 0 values in the \pcircuit problem as values in the range
$[0, \eps]$ in \gcircuit, while 1 values will be encoded as values in the range
$[1 - \eps, 1]$. Then, each gate from
\pcircuit will be simulated by a combination of gates in the \gcircuit
instance.

\paragraph{\bf \NOR gates.}
A \NOR gate $(\NOR, u, v, w)$ will be simulated by \gcircuit gates that compute
$$\val{u} + \val{v} < 5/9,$$ 
which requires us to use $G_+$, $G_<$ and $G_c$. We claim that this gate works
for any $\eps < 1/9$. 

The idea is that if both inputs lie in the range $[0, \eps]$, and thus both inputs
encode zeros in \pcircuit, we will have 
$\val{u} + \val{v} \le 3 \eps < 3/9$, where the extra \eps error is introduced by $G_+$. 
The $G_c$ gate outputting 
$5/9$ may output a value as small as $4/9$ after the error is taken into
account. Thus, the $G_<$ gate will compare these two values, and provided that
$\eps < 1/9$, the comparison will succeed, so the gate will output a value
greater than or equal to $1 - \eps$, which corresponds to a 1 in the \pcircuit
instance, as required.

If, on the other hand, at least one input is in the range $[1-\eps, 1]$, then we
will have $\val{u} + \val{v} \ge 1 - 2\eps > 7/9$, where again $G_+$ introduces an extra
\eps error. The $G_c$ gate outputting $5/9$ may output a value as large as $6/9$
after the error is taken into account. Thus, the $G_<$ will compare these two
values and provided that $\eps < 1/9$, the comparison will succeed, so the
gate will output a value less than or equal to $\eps$, which corresponds to a 0
in the \pcircuit instance, which is again as required.

\paragraph{\bf \PURE gates.}
A \PURE gate $(\PURE, u, v, w)$ will be simulated by two \gcircuit gadgets. The
first will set $\val{v}$ equal to $\val{u} > 3/10$, while the second will set
$\val{w}$ equal
to $\val{u} > 7/10$, where $G_c$ and $G_>$ gates are used to implement these
operations. We claim that this construction works for any $\eps < 1/10$. 

We begin by considering the first gadget, which sets $\val{v}$ equal to
$\val{u} > 3/10$. 
If $\val{u} \le \eps < 1/10$, then note that the $G_c$ gate outputting 3/10 may output
a value as small as $2/10$ once errors are taken into account. Thus, since $\eps
< 1/10$, the comparison made by the
$G_>$ gate will succeed, and so $\val{v}$ will be set to a value less than or
equal to
$\eps$. On the other hand, if $\val{u} \ge 1/2$, then note that the 
$G_c$ gate outputting 3/10 may output
a value as large as $4/10$ once errors are taken into account. So, the 
comparison made by the
$G_>$ gate will again succeed, and $\val{v}$ will be set to a value greater
than or equal to $1 - \eps$.

One can repeat the analysis above for the second gadget to conclude that
$\val{w}$
will be set to a value less than or equal to $\eps$ when $\val{u} \le 1/2$, and a
value greater than or equal to $1-\eps$ when $\val{u} \ge 1 - \eps$. So we can
verify that the conditions of the \PURE gate are correctly simulated.

\begin{itemize}
\item If $\val{u} \le \eps$, meaning that the input encodes a zero, then
$\val{v}$ and
$\val{w}$ will be set to values that are less than or equal to $\eps$, and so both
outputs encode zeros.

\item If $\val{u} \ge 1- \eps$, meaning that the input encodes a one, then $\val{v}$ and
$\val{w}$ will be set to values that are greater than or equal to $1 -\eps$, and so
both outputs encode ones.

\item No matter what value $\val{u}$ takes, at least one output will be set to a value
that encodes a zero or a one. Specifically, if $\val{u} \le 1/2$, then the second
comparison gate will set $\val{w} \le \eps$, while if $\val{u} \ge 1/2$, then the first
comparison gate will set $\val{v} \ge 1 - \eps$.  
\end{itemize}
Thus, the construction correctly simulates a \PURE gate. Note that $\eps$ cannot
be increased beyond $1/10$ in this construction, since then there would be no guarantee that an
encoding of a zero or a one would be produced when $\val{u} = 1/2$.

\paragraph{\bf The lower bound.}
Given a \pcircuit instance defined over variables $V$,  we produce a \gcircuit
instance by replacing each gate in the \pcircuit with the constructions given
above. Then, given a solution $\valonly$ to the $1/10$-\gcircuit instance, we can produce
a solution $\valtwoonly$ to \pcircuit in the following way. For each $v \in V$
\begin{itemize}
\item if $\val{v} \le \eps$, then we set $\valtwo{v} = 0$,
\item if $\val{v} \ge 1 - \eps$, then we set $\valtwo{v} = 1$, and
\item if $\val{v} > \eps$ and $\val{v} < 1 - \eps$, then we set $\valtwo{v} =
\garbo$. 
\end{itemize}
By the arguments given above, we have that $\valtwoonly$ is indeed a solution to
\pcircuit, and thus \cref{thm:gcircuithard} is proved.
\end{proof}

\subsection{Definitions for Polymatrix and Graphical Games}
\label{sec:gamedefs}

In Sections~\ref{sec:polymatrix} and~\ref{sec:graphical} we give results for
polymatrix games and graphical games, respectively. These two types of game are
defined in very similar ways, so in this section we give formal definitions for
both polymatrix and graphical games. 

\paragraph{\bf Graphical games.}

An $n$-player $m$-action \emph{graphical game} is defined by a directed graph $G = (V,E)$, where
$|V|=n$, each node of $G$ corresponds to a player, and the maximum number of actions per player is $m$.
We define the in-neighbourhood of node $i$ to be $\innei{i} := \{ j \in V  :  (j,i) \in E \}$, and similarly its out-neighbourhood to be $\outnei{i} := \{ j \in V : (i,j) \in E \}$. We also define the neighbourhood of $i$ as $\nei{i} := \innei{i} \cup \outnei{i} \cup \{i\}$. We include $i$ in its own neighbourhood for notational convenience. 

In a graphical game, each player $i$ participates in a normal form game $G_i$ whose player set is $\nei{i}$, but she affects only the payoffs of her out-neighbours. Player $i$ has $m_i$ \emph{actions}, or \emph{pure strategies}, and her payoffs are represented by a function $R_{i} : [m_i] \times \prod_{j \in \innei{i}} [m_j] \mapsto [0,1]$ which will be referred to as the \emph{payoff tensor} of $i$. If the codomain of $R_i$ is $\{0,1\}$ for all $i \in V$, we have a {\em win-lose} graphical game.

\paragraph{\bf Polymatrix games.}

\emph{Polymatrix games} are
defined by an undirected graph $(V, E)$, where each vertex represents a player,
and we use $n = |V|$ to denote the number of players in the game. Like
graphical games, each player $i \in V$ has a fixed number of actions given by $m_i$. 

The difference between polymatrix games and graphical games comes in how the
payoffs for each player are defined. In a polymatrix game, for each edge $(i,
j) \in E$, there is an $m_i \times m_j$ matrix $A_{ij}$ giving the payoffs that
player $i$ obtains from their interaction with player $j$, and likewise there
is an $m_j \times m_i$ matrix $A_{ji}$ giving payoffs for player $j$'s
interaction with player $i$.

\paragraph{\bf Payoffs.}

From now on, our definitions will apply to both graphical and polymatrix games.
A \emph{mixed strategy} $s_i$ for player $i$ specifies a probability distribution over
player $i$'s actions. Thus, the set of mixed strategies for player $i$ corresponds to the $(m_i-1)$-dimensional simplex $\Delta^{m_i-1}$. 
The \emph{support}
of a mixed strategy $s_i = (s_{i}(1), s_{i}(2), \dots, s_{i}(m_i)) \in \Delta^{m_i-1}$ is given by $\supp(s_i) = \{j \in [m_i] \; : \; s_i(j) > 0\}$. In other words, it is the set of pure strategies that are played with non-zero probability in strategy $s_i$.

An action profile $\vba(H) := (a_i)_{i \in H}$ over a player set $H$ is a tuple of actions, one for each player in $H$, and so the set of these action profiles is given by $A(H) = \prod_{i \in H} [m_i]$.
Similarly, a \emph{strategy profile} $\vbs(H)$ over the same set is a tuple of mixed strategies, and so the set of these strategy profiles is given by $\prod_{i \in H} \Delta^{m_i-1}$.
We define the \emph{partial action profile} \vbai to be the tuple of all players' actions except $i$'s action, and similarly we define the \emph{partial strategy profile} \vbsi. 

In a graphical game, the expected payoff of $i$ when she plays action $k \in
[m_i]$, and all other players play according to $\vbsi$ is 
\begin{equation*}
	u_i(k, \vbsi) := \sum_{\vba \in A(\innei{i})} R_{i}(k; \vba) \cdot \prod_{{j \in \innei{i}}} s_j(a_j).
\end{equation*}
Notice that this depends only on the in-neighbours of $i$.
The expected payoff to player $i$ under $\vbs$ is therefore
\begin{equation*}
	u_i(\vbs) := \sum_{k \in [m_i]} u_i(k, \vbsi) \cdot s_i(k).
\end{equation*}

In contrast, in a polymatrix game, for each strategy profile $\vbs = (s_1, s_2,
\dots, s_n) \in \Sigma$, the payoff to player $i$ is
\begin{equation*}
u_i(\vbs) := s_i^{T} \cdot \sum_{j \; :\; (i, j) \in E} A_{ij} \cdot s_j.
\end{equation*}
In other words, the payoff obtained by player $i$ in a polymatrix game is the \emph{sum} of the payoffs obtained from the interaction of $i$ with every neighbouring player $j$.

All further definitions will apply to both graphical and polymatrix games,
with the respective definitions of $u_i$ given above. 

\paragraph{\bf Best responses.}

A pure strategy $k$ is a \emph{best response} for player $i$ against a partial strategy profile \vbsi
if it achieves the maximum payoff over all her pure strategies, that is,
\begin{equation*}
	u_i(k, \vbsi) = \max_{\ell \in [m_i]} u_i(\ell, \vbsi).
\end{equation*}
Pure strategy $k$ is an \emph{$\eps$-best response} if the payoff it yields is within $\eps$ of a
best response, meaning that 
\begin{equation*}
	u_i(k, \vbsi) \ge \max_{\ell \in [m_i]} u_i(\ell, \vbsi) - \eps.
\end{equation*}
Finally, the \emph{best response payoff} for player $i$ is 
\begin{equation*}
	\br_i(\vbs_{-i}) := \max_{\ell \in [m_i]} u_i(\ell, \vbsi).
\end{equation*}

\paragraph{\bf Approximate Nash equilibria.}

A strategy profile \vbs is a \emph{Nash equilibrium} if every player achieves 
their best response payoff, meaning that $\br_i(\vbsi) = u_i(\vbs)$ for all
players $i$. The approximation notions that will be of interest to us for
graphical games and polymatrix games are defined as follows.

\begin{definition}[\eps-NE]

A strategy profile \vbs is an \emph{$\eps$-Nash equilibrium} (\eps-NE) if every
player's payoff is within $\eps$ of their best response payoff. Formally, 
    \begin{align*}
        \forall i \in [n], \quad u_i(\vbs) \ge \br_i(\vbsi) - \eps .
    \end{align*}
\end{definition}

\begin{definition}[\eps-WSNE]
A strategy profile \vbs is an \emph{$\eps$-well-supported Nash equilibrium}
(\eps-WSNE) if every
player only plays strategies that are $\eps$-best responses, meaning that
for all $i$ we have that $\supp(s_i)$ contains only 
$\eps$-best response strategies. Formally, 
    \begin{align*}
        \forall i \in [n], \forall k \in \supp(s_i), \quad u_i(k, \vbsi) \geq \br_i(\vbsi) - \eps.
    \end{align*}
\end{definition}

\paragraph{\bf Payoff normalization in polymatrix games.}

Note that $\eps$-NE and \wn{\eps} are both additive notions of
approximation, and so values of $\eps$ cannot normally be compared across
games, since doubling all payoffs in the game would double the value of $\eps$,
for example. To deal with this, it is common to normalize the payoffs of the
game so that the maximum possible payoff is~1 and the minimum possible payoff is~0, which allows us to meaningfully compare values of~$\eps$ across
games.

We specifically need to normalize payoffs in polymatrix games. 
Payoff normalization is not needed for graphical games, because in those games all payoffs
came directly from the payoff tensors, and the tensors always contain values in
$[0, 1]$. In a polymatrix game
the overall payoff to a player is the sum of individual payoffs from different
matrices, which necessitates normalization.

We adopt the normalization scheme for polymatrix games given by Deligkas et al.\
\cite{DeligkasFSS17-eps-Nash-polymatrix}, which is a particularly restrictive scheme, but doing so will
allow our lower bounds to be compared to the $1/2 + \del$ upper bound 
for finding $\eps$-NEs given in that paper.

For each player $i$ we compute
\begin{align*}
U_i &= \max_{1 \le p \le m_i} \left( \sum_{j \; :\; (i, j) \in E} \max_{1 \le q \le m_j}
(A_{ij})_{(p, q)} \right) \\
L_i &= \min_{1 \le p \le m_i} \left( \sum_{j \; :\; (i, j) \in E} \min_{1 \le q \le m_j}
(A_{ij})_{(p, q)} \right). 
\end{align*}
Then, if $d(i)$ denotes the degree of player $i$ in $(V, E)$, we apply the following transformation to each payoff $z$ in each payoff matrix $A_{ij}$
\begin{equation*}
T(z) = \frac{1}{U_i - L_i} \cdot \left(z - \frac{L_i}{d(i)}\right).
\end{equation*}
If $U_i = L_i$, then we simply set $T(z) = 0$.
This ensures that each player's maximum possible payoff is $1$, and their minimum possible payoff is $0$. 

In our hardness reductions, we will directly build games that are
already normalized, meaning that $U_i = 1$ and $L_i = 0$ for all players $i$, so no extra normalization step is required.

\subsection{Polymatrix Games}
\label{sec:polymatrix}

In this section, we show a number of results for polymatrix games (defined in
Section~\ref{sec:gamedefs}). We first provide a simple algorithm that computes a $1/3$-well-supported Nash equilibrium (WSNE) in polynomial time when there are two actions per player (\cref{thm:polymatrix-algo}). We then show that this algorithm is in fact optimal, by proving hardness of finding $\eps$-WSNE for all $\eps < 1/3$ (\cref{thm: polymatrix wsne}). Next, we also prove hardness of computing $\eps$-Nash equilibria for all $\eps <  2\sqrt{73} - 17 \approx 0.088$ (\cref{thm: polymatrix ne}). Both of these hardness results hold for degree 3 bipartite games with at most two strategies per player. Finally, we show that
the hardness result for $\eps$-WSNE also holds for win-lose polymatrix games in
degree 7 bipartite games with at most two strategies per player (\cref{thm:win-lose-polymatrix}).

\subsubsection{A Simple Algorithm for \texorpdfstring{$\boldsymbol{1/3}$}{1/3}-WSNE in Two-Action Polymatrix Games}
\label{sec:WSNE-polymatrix-algo}

In this section we present a simple polynomial-time algorithm for computing a $1/3$-WSNE in a two-action polymatrix game. We begin with a description of the algorithm. The algorithm proceeds in two steps:

\paragraph{\bf Step 1: Eliminating players who don't need to mix.}
In the first step of the algorithm, we check if there exists some player $i$ such that one of its two actions is always a $1/3$-best response, no matter what actions the other players pick. More formally, we check if there exists a player $i$ and an action $k \in \{0,1\}$ such that $u_i(k,\vbsi) \geq \max\{u_i(0,\vbsi), u_i(1,\vbsi)\} - 1/3$ for all strategy profiles $\vbs$. If we find such a player $i$, then we fix their strategy to always play action $k$, and we remove them from the game, while also updating the payoffs of the other players to reflect the fact that player $i$ always plays action $k$. Note that no matter how we fix the strategies of the remaining players later in the algorithm, player $i$ is guaranteed to play a $1/3$-best response action in the original game.

After removing player $i$ from the game, and updating the payoffs, we again check if there exists some player $j$ with a $1/3$-best response action, and if so, fix their strategy and remove them from the game, as above. After at most $n$ iterations, we are left with a game where none of the remaining players has an action that is always a $1/3$-best response.

\paragraph{\bf Step 2: Letting the remaining players mix uniformly.}
In the second step of the algorithm, we simply fix the strategies of the remaining players so that they mix uniformly between their two actions, i.e., they play action $0$ with probability $1/2$, and action $1$ with probability $1/2$.

\bigskip

We note that the simple approach used by the algorithm is essentially the same as an algorithm given by Liu et al.~\cite{LiuLD21-sparse-winlose-polymatrix} for computation of exact equilibria in very sparse win-lose polymatrix games. We now prove the following.

\begin{theorem}\label{thm:polymatrix-algo}
The algorithm computes a $1/3$-WSNE in polynomial time in two-action polymatrix games.
\end{theorem}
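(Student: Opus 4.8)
I would prove the two halves of the statement --- polynomial running time and correctness of the output --- separately, organising both around a single object: for a player $i$ with action set $\{0,1\}$ and a partial profile $\vbsi$, the payoff difference $D_i(\vbsi) := u_i(1,\vbsi) - u_i(0,\vbsi)$. Since $u_i$ is multilinear and, in a polymatrix game, additively separable over $i$'s neighbours, $D_i$ is affine in $\vbsi$ and decomposes as a sum of one affine term $g_\ell(s_\ell)$ per neighbour $\ell$ (for us, $g_\ell(s_\ell) = (e_1-e_0)^{T} A_{i\ell} s_\ell$); hence $\max_{\vbsi} D_i$ and $\min_{\vbsi} D_i$ are attained at pure profiles and are computable greedily, neighbour by neighbour, straight from the payoff matrices. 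I would also record at the outset that, under the normalisation in force, every $u_i$ takes values in $[0,1]$, so that $D_i(\vbsi) \in [-1,1]$ for all $\vbsi$.

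For the running time, the only step that needs an argument is the test in Step~1. Action $0$ is always a $1/3$-best response for player $i$ (against the players not yet removed, with the removed ones held at their fixed actions) precisely when $\max_{\vbsi} D_i(\vbsi) \le 1/3$, and symmetrically action $1$ is always a $1/3$-best response precisely when $\min_{\vbsi} D_i(\vbsi) \ge -1/3$; so the test amounts to computing these two extrema, which is polynomial by the previous paragraph. Each iteration runs the test for at most $n$ players and, upon removing a player, adjusts the polynomially many payoff entries to absorb that player's fixed action; there are at most $n$ iterations, so the algorithm runs in polynomial time.

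For correctness I would treat the two kinds of players in the output profile $\vbs^\ast$ in turn. If player $i$ was removed and fixed to action $\sigma_i$, the test at that iteration certifies that $\sigma_i$ is a $1/3$-best response at \emph{every} profile of the then-remaining players; since removals only ever freeze players and are never undone, the restriction $\vbs^\ast_{-i}$ is one of those profiles, so $u_i(\sigma_i,\vbs^\ast_{-i}) \ge \br_i(\vbs^\ast_{-i}) - 1/3$, and as $i$ plays $\sigma_i$ purely the \wn{1/3} condition holds for $i$. Now let $j$ be a player that survives Step~1. By the termination condition neither action of $j$ is always a $1/3$-best response, so $M := \max_{\vbsi} D_j(\vbsi) > 1/3$ and $m := \min_{\vbsi} D_j(\vbsi) < -1/3$, where the extrema are taken over the profiles of the other surviving players with the removed players fixed. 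The crucial point is that evaluating $D_j$ at the all-uniform profile of the surviving players gives exactly $\tfrac12(M+m)$: writing $D_j = C + \sum_{\ell} g_\ell(s_\ell)$, where the sum runs over $j$'s still-free neighbours and $C$ collects the constant contributions of the fixed neighbours, we have $M = C + \sum_\ell \max\{g_\ell(e_0),g_\ell(e_1)\}$ and $m = C + \sum_\ell \min\{g_\ell(e_0),g_\ell(e_1)\}$, while at $s_\ell = (\tfrac12,\tfrac12)$ each summand equals $\tfrac12(g_\ell(e_0)+g_\ell(e_1))$. Since $1/3 < M \le 1$ and $-1 \le m < -1/3$, this yields $-2/3 < M+m < 2/3$, hence $|D_j(\vbs^\ast_{-j})| < 1/3$; so both of $j$'s actions are $1/3$-best responses at $\vbs^\ast$, and since $j$ mixes over both, the \wn{1/3} condition holds for $j$. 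Combining the two cases shows that $\vbs^\ast$ is a $1/3$-WSNE.

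The single load-bearing step --- and the spot where the value $1/3$ is forced --- is this last computation: one has to notice that uniform mixing evaluates the payoff difference at the midpoint of its attainable range, and that the chosen normalisation confines that range to $[-1,1]$, so that the inequality $M - m > 2/3$ left over from Step~1 turns into $|M+m| < 2/3$. The remaining ingredients --- the affine-decomposition bookkeeping, the observation that fixings only accumulate (so that the guarantee obtained when a player is removed still applies to the final profile), and the greedy extremum computation --- are routine, and I would present them briefly.
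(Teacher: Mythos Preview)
Your proposal is correct and follows essentially the same approach as the paper: both hinge on the observation that the payoff-difference function is affine, so its value at the uniform profile equals the average of its maximum and minimum, and the bounds $M>1/3$, $m<-1/3$, $M\le 1$, $m\ge -1$ then force $|D_j|<1/3$. The paper phrases the midpoint computation via ``opposite corners of the hypercube'' and linearity, while you do it via the coordinate-wise decomposition $D_j = C + \sum_\ell g_\ell(s_\ell)$ and the identity $\max\{a,b\}+\min\{a,b\}=a+b$; these are the same argument in different clothing, and your treatment of the running time and of the removed players is simply more explicit than the paper's.
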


\begin{proof}
The algorithm clearly runs in polynomial time. To prove its correctness, it suffices to show that for the remaining players (i.e., those that were not removed during Step 1) both their actions are $1/3$-best-responses, when all the other remaining players mix uniformly.

Consider some player $i$ that was not eliminated in the first step of the algorithm. Define the function $\phi: \Sigma_{-i} \to \mathbb{R}$ by letting $\phi(\vbsi) = u_i(0,\vbsi) - u_i(1,\vbsi)$, where $\Sigma_{-i} := (\Delta^1)^{n-1} \equiv [0,1]^{n-1}$, i.e., $\vbsi$ is a partial strategy profile (without the strategy of $i$). The crucial observation is that $\phi$ is a linear function, by definition of the utility function in a polymatrix game. Furthermore, by the payoff normalization it holds that $\phi(\vbsi) \in [-1,1]$ for all $\vbsi \in \Sigma_{-i}$, since $u_i(0,\vbsi), u_i(1,\vbsi) \in [0,1]$.

Since player $i$ was not eliminated in the first step of the algorithm, it must be that none of the two actions $0$ and $1$ is always a $1/3$-best response. In other words, the function $\phi$ satisfies $\max_{\vbsi} \phi(\vbsi) > 1/3$ and $\min_{\vbsi} \phi(\vbsi) < -1/3$. Now, since $\phi$ is a linear function, the maximum and minimum are achieved at two opposite corners of the domain $\Sigma_{-i} = (\Delta^1)^{n-1} \equiv [0,1]^{n-1}$. Denote these opposite corners $\ell, h \in \Sigma_{-i}$, such that $\phi(\ell) < -1/3$ and $\phi(h) > 1/3$.

Since $\ell$ and $h$ are opposite corners of the domain $\Sigma_{-i}$, it follows that $\ell/2 + h/2$ corresponds to the partial strategy profile where all players mix uniformly. As a result, it remains to show that $\phi(\ell/2+h/2) \in [-1/3,1/3]$, which exactly means that both actions $0$ and $1$ are $1/3$-best responses for player $i$, when all other players mix uniformly. By linearity of $\phi$ we can write
\begin{equation*}
\phi(\ell/2+h/2) = \phi(\ell)/2 + \phi(h)/2 \geq -1/2 +(1/3)/2 = -1/3
\end{equation*}
since $\phi(\ell) \geq -1$ and $\phi(h) > 1/3$. The inequality $\phi(\ell/2+h/2) \leq 1/3$ is similarly obtained by using $\phi(\ell) < -1/3$ and $\phi(h) \leq 1$. We have thus shown that $\phi(\ell/2+h/2) \in [-1/3,1/3]$ as desired.
\end{proof}

\subsubsection{Hardness for \texorpdfstring{$\boldsymbol{\eps}$}{ε}-WSNE}
\label{sec:WSNE-polymatrix}

In this section we prove that computing an $\eps$-WSNE is \ppad/-hard for any $\eps < 1/3$, even in two-action polymatrix games. In particular, this shows that the simple algorithm presented in the previous section is optimal.

\begin{theorem}\label{thm: polymatrix wsne}
It is \ppad/-hard to find an $\eps$-WSNE in a polymatrix game for all $\eps <
1/3$, even in degree 3 bipartite games with two strategies per player.
\end{theorem}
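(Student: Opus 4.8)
The plan is to reduce from \pcircuit, using the version with gates \{\PURE, \NOR\} (or, as convenient, \{\PURE, \NOT, \OR, \AND, \COPY\}), which is \ppad/-hard by \cref{thm:pcircuit-hard}. The high-level idea is standard in this area: associate to each node $v$ of the \pcircuit instance a two-action player $P_v$ whose mixed strategy (the probability it places on action $1$) encodes the value $\val{v} \in \{0,1,\garbo\}$ of that node, where probability $0$ encodes the bit $0$, probability $1$ encodes the bit $1$, and any intermediate probability is interpreted as $\garbo$. We then need, for each gate, a small bipartite gadget of players whose only $\eps$-WSNE behaviours are exactly the assignments permitted by that gate's truth table. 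The subtlety compared to classical \gcircuit-to-polymatrix reductions is that we want the construction to be as loose as possible so that the hardness threshold is $1/3$ rather than some tiny constant; this is precisely what \pcircuit was designed to enable, since its gates impose constraints only on pure-bit inputs/outputs.

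First I would build the gadget for the \NOR gate. The key primitive is a ``comparison'' mechanism: a player whose best response flips depending on whether the total incoming payoff from its neighbours exceeds a threshold. With careful choice of payoff matrices (all entries in $[0,1]$ after normalization, degree controlled so that $U_i=1,L_i=0$), I would arrange that an auxiliary player is forced toward action $1$ precisely when the sum of $\val{u}$ and $\val{v}$ is small, and toward action $0$ when it is large, with a generous dead-zone in between that \emph{only} guarantees the correct output when $\val{u},\val{v}$ are pure bits; on intermediate inputs the output player is allowed to do anything, which matches the $\garbo$ rows of the \NOR truth table. The point is to push the slack as far as $\eps < 1/3$: with two actions and the restrictive normalization, a player mixing uniformly can be made a $1/3$-best-response situation exactly at the boundary, so any $\eps < 1/3$ forces the ``correct'' pure action whenever the inputs are genuinely pure. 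For the \PURE gate I would use a gadget with one input player and two output players, designed so that (i) if the input is a pure bit $b$, both outputs are forced to $b$, and (ii) if the input is $\garbo$ (probability strictly between $0$ and $1$), then the two output players face complementary thresholds so that at most one of them can legally sit in its dead-zone, forcing at least one to play a pure action --- this mirrors exactly the ``at least one output in $\{0,1\}$'' requirement. Finally I would note that all gadgets are bipartite and have degree $3$ (every player interacts with at most three others), possibly using \cref{cor:pcircuit-restricted} to start from a \pcircuit instance whose interaction graph is already bipartite and degree-$\le 3$, so that the composed game inherits these structural restrictions; this is where I would invoke that corollary to get ``degree 3 bipartite games with two strategies per player'' in the statement.

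The correctness argument then has two directions. Given an $\eps$-WSNE of the constructed game with $\eps < 1/3$, I would read off an assignment $\valonly$ to the \pcircuit nodes from the players' mixing probabilities and argue gate-by-gate, using the gadget analyses above, that $\valonly$ satisfies every gate; hence it is a solution to the \pcircuit instance, from which we recover a solution efficiently. Conversely, I should check totality is not an issue (it is guaranteed because polymatrix games always have exact Nash equilibria, hence $\eps$-WSNE for every $\eps \ge 0$, so the reduction is from a total problem to a total problem and only the ``hard direction'' matters). The main obstacle I expect is the precise payoff bookkeeping in the gadgets: one must simultaneously (a) keep every gadget's payoffs normalized so that $U_i=1$, $L_i=0$ for every player --- which the paper says it will do directly --- (b) make the thresholds and slopes so tight that the hardness survives all the way up to, but not including, $\eps = 1/3$, and (c) respect the degree-$3$ bipartite structure, which limits how many auxiliary players and edges each gadget may use. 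Balancing these three constraints simultaneously, especially squeezing the \NOR and \PURE gadgets to the $1/3$ threshold without exceeding degree $3$, is the delicate part; everything else (composition, reading off the assignment, invoking \cref{thm:pcircuit-hard} and \cref{cor:pcircuit-restricted}) is routine.
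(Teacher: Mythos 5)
Your overall strategy is exactly the paper's: one two-action player per \pcircuit node, pure strategies encoding $0$ and $1$, strictly mixed strategies encoding $\garbo$, one gadget per gate, and \cref{cor:pcircuit-restricted} to inherit the degree-$3$ bipartite structure. Your description of the \PURE gadget in particular captures the paper's key idea (two output players facing complementary thresholds so that at most one can afford to mix). You also correctly note the design principle that an output player must receive payoff only from the inputs of the gate defining it, so gates can be analysed independently.

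The gap is that you never actually construct the gadgets, and the specific constant $1/3$ is asserted rather than derived; your own text flags the ``payoff bookkeeping'' as the delicate part and then stops there. That bookkeeping is where the theorem lives. Concretely, the paper's \AND gadget has the output player $w$ play two identical bimatrix games against $u$ and $v$, with payoff $\tfrac{1}{2}$ for matching $\zero$ and $\tfrac{1}{6}$ for matching $\one$ in each: when both inputs are $\one$ the gap between the two actions of $w$ is $\tfrac{1}{6}+\tfrac{1}{6}-0=\tfrac{1}{3}$, and when at least one input is $\zero$ the gap is at least $\tfrac{1}{2}-\tfrac{1}{6}=\tfrac{1}{3}$; normalization forces the two subgames to share the unit of payoff, which is precisely why $1/3$ (and not more) is achievable, matching the algorithmic upper bound of \cref{thm:polymatrix-algo}. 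The \PURE gadget similarly uses payoffs $\tfrac{1}{3}$ and $1$ arranged asymmetrically in the two output players' games so that whenever $u$ places at least (resp.\ at most) $\tfrac{1}{2}$ on $\one$, one designated output has a payoff gap of at least $\tfrac{1}{2}-\tfrac{1}{6}=\tfrac{1}{3}$ and is forced pure. Without exhibiting such matrices and verifying the gaps, the argument establishes hardness only for \emph{some} unspecified constant, not for all $\eps<1/3$. A secondary, minor point: if you reduce from the $\{\PURE,\NOR\}$ version you must also supply a \NOR gadget achieving the $1/3$ gap with degree $3$; the paper instead uses the $\{\PURE,\NOT,\AND\}$ version of \cref{cor:pcircuit-restricted}, where the \NOT gadget is trivial (it works for all $\eps<1$) and only \AND and \PURE need the tight analysis.
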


\begin{proof}
We reduce from the variant of \pcircuit that uses \NOT, \AND, and \PURE gates,
and we will specifically reduce from the hardness result given in \cref{cor:pcircuit-restricted}, since it will give us extra
properties for the hard polymatrix games.

Given an instance $(V, G)$ of \pcircuit, we will build a polymatrix game with
vertex set $V$, meaning each player in the game will simulate a variable from
\pcircuit. 
Each player $i$ will have exactly two strategies, which we will denote as
$\zero$ and $\one$. The strategy $s_i$ of player $i$ will encode the value
of the variable in the following way.
\begin{itemize}
\item If $s_i$ places all probability on $\zero$, then this corresponds to
setting variable $i$ to 0.
\item If $s_i$ places all probability on $\one$, then this corresponds to 
setting variable $i$ to 1.
\item If $s_i$ is a strictly mixed strategy, then this corresponds to setting
variable $i$ to $\garbo$.
\end{itemize}

The edges of the game will simulate the gates. The definition of \pcircuit
requires that each variable $v$ is the output of exactly one gate $g$.
An important property of our reduction is that the player representing $v$ only
receives non-zero payoffs from the inputs to the gate $g$, and receives zero
payoffs from any other gates that use $v$ as an input. 
This means that we can reason about the equilibrium condition of each of the
gates independently,
without worrying about where the values computed by those gates are used.

\paragraph{\bf \NOT gates.}

For a gate $g = (\NOT, u, v)$, the player $v$, who represents the output
variable, will play the
following bimatrix game against $u$, who represents the input variable. 

\begin{center}
\bimatrixgame{3.7mm}{2}{2}{$v$}{$u$}%
{{$\zero$}{$\one$}}%
{{$\zero$}{$\one$}}%
{
\payoffpairs{1}{{$0$}{$1$}}{{$0$}{$0$}}
\payoffpairs{2}{{$1$}{$0$}}{{$0$}{$0$}}
}
\end{center}
This depiction of a bimatrix game shows the matrices $A_{vu}$ and $A_{uv}$, with
the payoffs for $v$ being shown in the bottom-left of each cell, and the
payoffs for $u$ being shown in the top-right. 

We claim that this gate will work for all $\eps < 1$.
\begin{itemize}
\item If $u$ plays $\zero$ as a pure strategy, then $\one$ gives payoff $1$ to
$v$
and $\zero$ gives payoff $0$ to $v$. So in any $\eps$-WSNE with $\eps < 1$, only
$\one$ can be an $\eps$-best response for $v$, meaning that $v$ must play $\one$ as a pure
strategy, as required by the constraints of the \NOT gate.

\item Using identical reasoning, if 
$u$ plays $\one$ as a pure strategy, then $v$ must play $\zero$ as a pure
strategy in any $\eps$-WSNE.

\item If $u$ plays a strictly mixed strategy, then the \NOT gate places no
constraints on the output, so we need not prove anything about $v$'s strategy.
\end{itemize}

\paragraph{\bf \AND gates.}

For a gate $g = (\AND, u, v, w)$, the player $w$, who represents the output
variable, will play the
following matrix games against $u$ and $v$, who represent the input variables.
\begin{center}
\bimatrixgame{3.7mm}{2}{2}{$w$}{$u$}%
{{$\zero$}{$\one$}}%
{{$\zero$}{$\one$}}%
{
\payoffpairs{1}{{$\frac{1}{2}$}{$0$}}{{$0$}{$0$}}
\payoffpairs{2}{{$0$}{$\frac{1}{6}$}}{{$0$}{$0$}}
}
\hskip 1.5cm
\bimatrixgame{3.7mm}{2}{2}{$w$}{$v$}%
{{$\zero$}{$\one$}}%
{{$\zero$}{$\one$}}%
{
\payoffpairs{1}{{$\frac{1}{2}$}{$0$}}{{$0$}{$0$}}
\payoffpairs{2}{{$0$}{$\frac{1}{6}$}}{{$0$}{$0$}}
}
\end{center}
We claim that this gate will work for all $\eps < \frac{1}{3}$.

\begin{itemize}
\item If both $u$ and $v$ play $\one$ as a pure strategy, then the payoff to
$w$ for playing $\zero$ is $0$, and the payoff to $w$ for playing $\one$ is $1/6
+ 1/6 = 1/3$. So, if $\eps < \frac{1}{3}$, then in all $\eps$-WSNEs the only
$\eps$-best response for $w$ is $\one$, as required by the constraints
of the \AND gate.

\item If at least one of $u$ and $v$ play $\zero$ as a pure strategy, then the
payoff to $w$ for playing $\zero$ is at least $1/2$, while the payoff to $w$ for
playing $\one$ is at most $1/6$. So, if $\eps < \frac{1}{3}$, then in all
$\eps$-WSNEs the only $\eps$-best response for $w$ is $\zero$, as
required by the constraints of the \AND gate.

\item The \AND gate places no other constraints on the variable $w$, so we can
ignore all other cases, e.g., the case where both $u$ and $v$ play strictly mixed
strategies.
\end{itemize}

\paragraph{\bf \PURE gates.}

For a gate $g = (\PURE, u, v, w)$, the players $v$ and $w$, who represent the
output variables, play the following games against $u$, who represents the input
variable.
\begin{center}
\bimatrixgame{3.7mm}{2}{2}{$v$}{$u$}%
{{$\zero$}{$\one$}}%
{{$\zero$}{$\one$}}%
{
\payoffpairs{1}{{$\frac{1}{3}$}{$0$}}{{$0$}{$0$}}
\payoffpairs{2}{{$0$}{$1$}}{{$0$}{$0$}}
}
\hskip 1.5cm
\bimatrixgame{3.7mm}{2}{2}{$w$}{$u$}%
{{$\zero$}{$\one$}}%
{{$\zero$}{$\one$}}%
{
\payoffpairs{1}{{$1$}{$0$}}{{$0$}{$0$}}
\payoffpairs{2}{{$0$}{$\frac{1}{3}$}}{{$0$}{$0$}}
}
\end{center}
We claim that this gate works for any $\eps < 1/3$. We first consider player
$v$.

\begin{itemize}
\item If $u$ plays $\zero$ as a pure strategy, then strategy $\zero$ gives payoff $1/3$ to
$v$, while strategy $\one$ gives payoff $0$, so in any $\eps$-WSNE with $\eps < 1/3$ we
have that $\zero$ is the only $\eps$-best response for $v$. 

\item If $u$ places at least $1/2$ probability on $\one$, then strategy $\zero$
gives at most $1/6$ payoff to $v$, while strategy $\one$ gives at least $1/2$ payoff to $v$. So 
in any $\eps$-WSNE with $\eps < 1/3$ we
have that $\one$ is the only $\eps$-best response for $v$.
\end{itemize}
Symmetrically, for player $w$ we have the following.
\begin{itemize}
\item If $u$ places at most $1/2$ probability on $\one$, then strategy $\zero$
gives at least $1/2$ payoff to $w$, while strategy
$\one$ gives at most $1/6$ payoff to $w$. So in any $\eps$-WSNE with $\eps <
1/3$ we have that $\zero$ is the only $\eps$-best response for $w$.

\item If $u$ plays $\one$ as a pure strategy, then strategy $\one$ gives payoff $1/3$ to
$w$, while strategy $\zero$ gives payoff $0$, so in any $\eps$-WSNE with $\eps
< 1/3$ we have that $\one$ is the only $\eps$-best response for $w$.
\end{itemize}

\noindent
From these properties, we can verify that the constraints imposed by the \PURE
gate are enforced correctly.
\begin{itemize}
\item If $u$ plays $\zero$ as a pure strategy, then both $v$ and $w$ play
$\zero$ as a pure strategy.

\item If $u$ plays $\one$ as a pure strategy, then both $v$ and $w$ play
$\one$ as a pure strategy.

\item No matter how $u$ mixes, at least one of $v$ and $w$ plays a pure
strategy, with $v$ playing pure strategy $\zero$ whenever $u$ places at most
$1/2$ probability on $\one$, and $w$ playing pure strategy $\one$ whenever $u$
places at least $1/2$ probability on $\one$.
\end{itemize}

\paragraph{\bf Hardness result.}

As we have argued above, each of the gate constraints from the \pcircuit
instance are enforced correctly in any $\eps$-WSNE of the polymatrix game with $\eps < 1/3$. Thus,
given such an $\eps$-WSNE, we can produce a satisfying assignment to the
\pcircuit instance using the mapping that we defined at the start of the
reduction. All of the games that we have presented are already normalized so
that each player's maximum possible payoff is 1, and minimum possible payoff is
0, so no extra normalization step is necessary.

Since we are reducing from \cref{cor:pcircuit-restricted}, we get
some extra properties about the structure of the polymatrix game. Observe that
the interaction graph of the polymatrix game is exactly the interaction graph of
the \pcircuit instance, as defined in \cref{sec:pcircuitdef}. Thus 
\cref{cor:pcircuit-restricted} implies that the polymatrix game is
degree three and bipartite. Moreover, by construction, each player has exactly
two strategies.
\end{proof}

\subsubsection{Hardness for \texorpdfstring{$\boldsymbol{\eps}$}{ε}-NE}

We now show that computing an \nash{\eps} in polymatrix games is \ppad/-complete for any constant $\eps < 2\sqrt{73} - 17 \approx 0.088$ even in degree $3$ bipartite games with two strategies per player.
It is worth noting that, given our \ppad/-hardness result for \wn{\eps'} for all $\eps' < 1/3$ in polymatrix games (\cref{thm: polymatrix wsne}), one can effortlessly get \ppad/-hardness for \nash{\eps} for $\eps$ smaller than roughly $0.00136$ by Lemma 7.4 of \cite{Rubinstein18-Nash-inapproximability}. However, by doing a direct reduction from \pcircuit we get an upper bound for \eps that is more than 64 times greater.

\begin{theorem}\label{thm: polymatrix ne}
It is \ppad/-hard to find an \nash{\eps} in a polymatrix game for all $\eps < 2\sqrt{73} - 17 \approx 0.088$, even in degree 3 bipartite games with two strategies per player.
\end{theorem}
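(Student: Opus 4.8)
The plan is to adapt the reduction used for \cref{thm: polymatrix wsne}, again starting from the variant of \pcircuit with \NOT, \AND, and \PURE gates provided by \cref{cor:pcircuit-restricted} (which will hand us the degree-$3$, bipartite, two-action structure for free), and building a polymatrix game with one player per node of the circuit. The essential new difficulty is that $\eps$-NE is a strictly weaker solution concept than $\eps$-WSNE: a player may now place a small amount of probability on an action that is \emph{not} an $\eps$-best response, as long as the total loss stays below $\eps$. Concretely, if a player's two actions differ in payoff by a gap $G$, then in an $\eps$-NE the player places probability at most $\eps/G$ on the worse action. Hence we can no longer force a player to play purely; instead we fix two thresholds $0 < s < t < 1$ and \emph{decode} a player's strategy as $0$ if it places probability at most $s$ on \one, as $1$ if it places probability at least $t$ on \one, and as $\garbo$ otherwise. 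Each gate gadget must then be built so that, whenever its decoded inputs are pure, the payoff gap it creates at the output player is large enough (at least $\eps/s$ to force a decoded $0$, at least $\eps/(1-t)$ to force a decoded $1$) to pin that player's decoded value to the bit required by the \pcircuit constraint.

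The core of the argument is the design and analysis of the three gadgets. The \NOT gadget is a win-lose $2\times 2$ game, essentially as in \cref{thm: polymatrix wsne}, and easily creates a gap close to $1$, so it is unproblematic. For the \AND gadget, the output player plays a pair of diagonal $2\times 2$ games against its two inputs, with payoff constants chosen so that the output player's incentive to switch from \zero\ to \one\ is an affine function of the probabilities the inputs place on \one; the crossover point of this function, together with $s$ and $t$, controls the two worst cases, namely ``both inputs decoded $1$'' versus ``at least one input decoded $0$''. For the \PURE gadget, the two output players each play a $2\times 2$ game against the single input, but with \emph{different} crossover points, so that for any input the ``garbage zones'' of the two outputs are disjoint and hence at least one output is forced to a pure bit; this is the only place where the genuinely new ``at least one pure output'' guarantee of \PURE\ must be reproduced. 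In every gadget the payoffs are built so that the game is already normalized ($U_i = 1$, $L_i = 0$ for every player), which caps how steep these affine incentive functions can be — this is ultimately what bounds $\eps$.

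Putting the gadget constraints together, each one yields an inequality of the form $\eps \le (\text{rational function of } s,\, t,\, \text{and the chosen payoff constants})$, and the reduction succeeds for every $\eps$ below the minimum of these expressions. Maximizing this minimum over the free parameters is an elementary but somewhat delicate optimization; balancing the binding constraints leads to a quadratic equation, and the supremum of admissible $\eps$ comes out to $2\sqrt{73}-17$ (which is the positive root of $\eps^2 + 34\eps - 3$), giving hardness for all $\eps < 2\sqrt{73}-17$. Membership in \ppad/ is standard, since computing a Nash equilibrium of a polymatrix game lies in \ppad/, yielding \ppad/-completeness; and the structural claims are inherited directly from \cref{cor:pcircuit-restricted}, since the interaction graph of the game equals that of the \pcircuit instance, so it is bipartite with maximum degree $3$, while each player has exactly two strategies by construction. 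I expect the main obstacle to be precisely the gadget design together with this parameter optimization: one must simultaneously satisfy the ``force $0$'', ``force $1$'', and ``at least one pure'' conditions in the presence of the $\eps$-NE slack while respecting the normalization, and it is the interaction between the \AND\ gadget and the \PURE\ gadget that pins down the final constant.
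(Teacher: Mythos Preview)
Your overall reduction framework is right, and the gadget shapes you sketch are exactly those of the paper. But there is a genuine gap: as stated, your scheme cannot reach the constant $2\sqrt{73}-17$. The problem is that you require each gadget, given inputs decoded with thresholds $(s,t)$, to directly produce an output that is \emph{again} decoded correctly with the same thresholds. For the \AND gadget with symmetric cutoff~$\delta$ and optimal diagonal payoffs $a=1/2$, $b=(1+\delta)/(6-2\delta)$, the worst-case payoff gap at the output is only $(1-3\delta)/(3-\delta)$, so you would need $\eps \le \delta(1-3\delta)/(3-\delta)$; maximizing over $\delta$ gives roughly $0.029$, not $0.088$. Allowing asymmetric thresholds or tweaking the \PURE constants does not rescue this: the binding \AND constraint stays well below $2\sqrt{73}-17$.

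What the paper does to close this gap is an \emph{amplification step}: it only asks the \AND and \PURE gadgets to produce a payoff gap of at least $2\delta$ at their outputs (a much weaker requirement), and then appends to each such output a chain of $2k$ \NOT gadgets. A short monotonicity lemma shows that, in any $\eps$-NE with $\eps<\delta(1-2\delta)$, such a chain boosts an initial gap of $2\delta$ back up to $1-2\delta$, restoring the cutoff. The two resulting constraints --- $\eps<\delta(1-2\delta)$ from the \NOT chain, and $(1-3\delta)/(3-\delta)\ge 2\delta$ from the \AND gadget --- are what jointly force $\delta<(9-\sqrt{73})/4$ and yield $\sup_\delta \delta(1-2\delta)=2\sqrt{73}-17$. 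So the quadratic you are looking for does not come from balancing the raw gadget constraints, but from coupling the \NOT-amplification constraint with the weakened \AND constraint. Note also that after inserting these chains the interaction graph is no longer literally that of the \pcircuit instance; bipartiteness survives because the chains have even length, and degree stays at~$3$ because \NOT gadgets have in- and out-degree~$1$.
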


In the remainder of this section we prove the theorem.
Fix any $\eps < 2\sqrt{73} - 17$. While for $\eps$-WSNE, we used pure strategies to represent $0$ and
$1$ values in \pcircuit, for $\eps$-NE we also have to allow
mixed strategies to encode $0$s and $1$s.
Specifically, given a \emph{\discr} $\del \in (0,1/4)$ which will be specified later, we map the strategies of the players to
values in \pcircuit in the following way.

\begin{itemize}
\item If $s_i$ places probability in the range $[1-\del,1]$ on $\zero$, then this corresponds to
setting variable $i$ to 0.
\item If $s_i$ places probability in the range $[1-\del,1]$ on $\one$, then this corresponds to 
setting variable $i$ to 1.
\item Otherwise, this corresponds to setting
variable $i$ to $\garbo$.
\end{itemize}

Pick an arbitrary rational $\delta \in \left(\frac{1 - \sqrt{1 - 8 \varepsilon}}{4}, \frac{9 -\sqrt{73}}{4} \right)$. Note that this interval is non-empty since $\eps < 2\sqrt{73} - 17$. Furthermore, we have $\delta \in (0,1/4)$ and $\delta$ satisfies $\eps < \delta(1-2\delta) < 2\sqrt{73} - 17$.

\paragraph{\bf \NOT gates.}

For a gate $g = (\NOT, u, v)$, the player $v$, who represents the output variable, will play the same bimatrix game against $u$ as that in the \wn{\eps} case, namely:

\begin{center}
	\bimatrixgame{3.7mm}{2}{2}{$v$}{$u$}%
	{{$\zero$}{$\one$}}%
	{{$\zero$}{$\one$}}%
	{
		\payoffpairs{1}{{$0$}{$1$}}{{$0$}{$0$}}
		\payoffpairs{2}{{$1$}{$0$}}{{$0$}{$0$}}
	}
\end{center}

We claim that this gadget will correctly simulate the \NOT gate for the encoding we use in this reduction. Let player $u$'s strategy be $(1-q, q)$, where $q \in [0,1]$, i.e., $u$ plays \zero with probability $1-q$, and \one with probability $q$. Similarly, let $v$'s strategy be $(1-p, p)$, where $p \in [0,1]$. We need to show that in any \nash{\eps}:
\begin{enumerate}[(i)]
    \item if $q \leq \del$ then $p \geq 1-\del$;
    \item if $q \geq 1-\del$ then $p \leq \del$.
\end{enumerate}
The expected payoff of $v$ for playing pure strategy \zero is $U_{v}(0) = q$ and for pure strategy \one it is $U_{v}(1) = 1-q$. Her expected payoff when playing mixed strategy $(1-p, p)$ is $U_{v}(p) = (1-p) \cdot U_{v}(0) + p \cdot U_{v}(1) = U_{v}(0) - p \cdot (U_{v}(0) - U_{v}(1)) = U_{v}(1) - (1-p) \cdot (U_{v}(1) - U_{v}(0))$. Now, in any \nash{\eps}, we must have that $U_{v}(p) \geq U_{v}(1) - \eps$ and $U_{v}(p) \geq U_{v}(0) - \eps$, which implies that
\begin{equation}\label{eq:eps-NE-one}
(1-p) \cdot (U_{v}(1) - U_{v}(0)) \leq \eps
\end{equation}
and
\begin{equation}\label{eq:eps-NE-zero}
p \cdot (U_{v}(0) - U_{v}(1)) \leq \eps.
\end{equation}
In case (i), we have $q \leq \delta$, and thus $U_{v}(1) - U_{v}(0) = 1-2q \geq 1-2\delta$. By \eqref{eq:eps-NE-one}, it follows that $1-p \leq \eps/(1-2\delta)$. Since $\eps < \delta(1-2\delta)$, we obtain $1-p \leq \delta$, i.e., $p \geq 1 - \delta$, as desired.

In case (ii), $q \geq 1 - \delta$ implies that $U_{v}(0) - U_{v}(1) = 2q-1 \geq 1 - 2\delta$. By \eqref{eq:eps-NE-zero}, we must have $p \leq \eps/(1 - 2\delta) \leq \delta(1-2\delta)/(1-2\delta) = \delta$, since $\eps < \delta(1-2\delta)$.

\paragraph{\bf Sequence of \NOT gates.}

The following is a crucial observation: when constructing a gadget, we can assume that the input to the gadget is encoded using cutoff $\delta$ (and so the gap between encoding of $0$ and $1$ is least $1-2\delta$) and that the output of the gadget is encoded using cutoff $1/2-\delta$ (and so the gap between encoding of $0$ and $1$ is least $2\delta$). The reason for this is that we can use a sequence of consecutive \NOT gates to amplify the gap in the encoding of the output from $2\delta$ up to $1-2\delta$ (or, in other words, to decrease the cutoff from $1/2-\delta$ down to $\delta$). The following lemma formalizes this observation.

\begin{lemma}\label{lem:NOT-sequence}
There exists a constant $k \in \mathbb{N}$ (that only depends on $\eps$ and $\delta$) such that any sequence of $2k$ consecutive $\NOT$ gadgets $g_1, \dots, g_{2k}$ has the following properties at any $\eps$-NE:
\begin{itemize}
\item If player $v_1$, who is the input to gadget $g_1$, satisfies $U_{v_1}(0)-U_{v_1}(1) \geq 2\delta$, then player $v_{2k+1}$, who is the output of gadget $g_{2k}$, satisfies $U_{v_{2k+1}}(0)-U_{v_{2k+1}}(1) \geq 1 - 2\delta$. In particular, player $v_{2k+1}$ places probability at least $1- \delta$ on $\zero$.
\item If player $v_1$ satisfies $U_{v_1}(1)-U_{v_1}(0) \geq 2\delta$, then player $v_{2k+1}$ satisfies $U_{v_{2k+1}}(1)-U_{v_{2k+1}}(0) \geq 1 - 2\delta$. In particular, player $v_{2k+1}$ places probability at least $1- \delta$ on $\one$.
\end{itemize}
\end{lemma}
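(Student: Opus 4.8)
The plan is to iterate the single-gate analysis of the \NOT gadget and track how the payoff gap $U_{v}(0)-U_{v}(1)$ evolves along the chain. Recall that for a single \NOT gadget with input player $u$ playing $(1-q,q)$ and output player $v$ playing $(1-p,p)$, we have $U_v(0)=q$ and $U_v(1)=1-q$, so the output gap is $U_v(0)-U_v(1)=2q-1$, which is (up to sign) $-(1-2q)=-(U_u(0)-U_u(1))$ — wait, more carefully: the gap \emph{for $v$} is governed by $q$, which is how much probability $u$ places on $\one$. So I would first establish a clean \emph{one-step amplification claim}: if the input player $v_i$ places probability $\geq 1-r$ on one pure strategy (say $\one$, i.e.\ $q \geq 1-r$ for some $r<1/2$), then by inequality \eqref{eq:eps-NE-zero} applied to $v_{i+1}$, the output player $v_{i+1}$ places probability $p$ on $\one$ with $p \leq \eps/(U_{v_{i+1}}(0)-U_{v_{i+1}}(1)) = \eps/(2q-1) \leq \eps/(1-2r)$. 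Thus $v_{i+1}$ places probability at least $1-\eps/(1-2r)$ on $\zero$. Symmetrically, starting from probability $\geq 1-r$ on $\zero$ yields probability $\geq 1-\eps/(1-2r)$ on $\one$ at the next gadget. So two \NOT gadgets in sequence map "probability $\geq 1-r$ on $\zero$" to "probability $\geq 1 - \eps/(1-2r')$ on $\zero$" where $r' = \eps/(1-2r)$.

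Next I would show this map $r \mapsto \eps/(1-2r)$ is a contraction toward a small fixed point. Define $h(r) = \eps/(1-2r)$ on the interval $r \in [0,\delta]$. One checks $h(\delta) = \eps/(1-2\delta) < \delta$ by the choice of $\delta$ (since $\eps < \delta(1-2\delta)$), so $h$ maps $[0,\delta]$ into $[0,\delta)$, and $h$ is increasing and continuous. The unique fixed point $r^\star$ in $[0,\delta]$ solves $r^\star(1-2r^\star)=\eps$, i.e.\ $r^\star = (1-\sqrt{1-8\eps})/4$, which is strictly less than $\delta$. Since $h'(r) = 2\eps/(1-2r)^2$ and at the fixed point $h'(r^\star) = 2\eps/(1-2r^\star)^2 = 2r^\star/(1-2r^\star)$, and $r^\star < \delta < 1/4$ gives $h'(r^\star) < 1$, iterating $h$ from any starting point in $[0,\delta]$ converges geometrically to $r^\star$. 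Concretely, starting from $r_0 = 1/2-\delta$ — which corresponds to the hypothesis that $v_1$'s gap is only $\geq 2\delta$, i.e.\ $v_1$ places probability $\geq 1/2+\delta$... no: I should start the iteration at the point corresponding to a gap of exactly $2\delta$. If $U_{v_1}(1)-U_{v_1}(0)\geq 2\delta$ then by \eqref{eq:eps-NE-zero}... actually $v_1$ is the \emph{input}, so we get a bound on $v_2$: $v_2$ places probability $\geq 1-\eps/(2\delta)$ on $\zero$, i.e.\ $r_1 := \eps/(2\delta) < \delta$ since $\eps < \delta(1-2\delta) < 2\delta^2 \cdot \tfrac{1}{2\delta}$... let me just say: set $r_1 = \eps/(2\delta)$, which lies in $(0,\delta)$, and thereafter $r_{j+1} = h(r_j)$. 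After $k$ double-\NOT steps the probability on the correct pure strategy is $\geq 1-r_k$, and since $r_k \to r^\star$, for $k$ large enough (depending only on $\eps,\delta$, since $r^\star$ and the contraction rate do) we have $r_k \leq \delta$ with the strict inequality $r_k < \delta$ in fact already holding, but we want the stronger conclusion: we want the gap to reach $1-2\delta$, i.e.\ probability $\geq 1-\delta$ on one strategy. Since $r^\star < \delta$ strictly, pick $k$ with $r_k \leq \delta$; actually $r_1 < \delta$ already and the sequence stays below $\delta$, so in fact $k=1$ suffices to get probability $\geq 1-\delta$ — but we need the \emph{gap} statement $U_{v_{2k+1}}(0)-U_{v_{2k+1}}(1)\geq 1-2\delta$, which for a \NOT output player means $q_{2k+1}$ (prob on $\one$ of its input) satisfies... the gap of $v_{2k+1}$ equals $2q_{2k}-1$ where $q_{2k}$ is prob on $\one$ of $v_{2k}$; if $v_{2k}$ places prob $\geq 1-r_k$ on $\zero$ then $q_{2k}\leq r_k \leq \delta$ gives gap $U_{v_{2k+1}}(0)-U_{v_{2k+1}}(1) = 1-2q_{2k} \geq 1-2\delta$. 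Good. So I would carefully align the parity so that $v_{2k+1}$ is forced toward the strategy matching the hypothesis on $v_1$ (an even number of \NOT gates preserves the "direction"), and state $k$ explicitly as the first index with $r_k \leq \delta$, noting $r_k$ depends only on $\eps,\delta$.

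The main obstacle — really the only subtle point — is getting the bookkeeping exactly right: (a) the off-by-one between "input player has gap $\geq g$" and "next player has probability $\geq 1-\eps/g$", since the payoff-gap of a \NOT output is a function of the input player's mixing probability, not of the input player's own payoff-gap; (b) ensuring the contraction actually lands at $r^\star < \delta$ rather than merely at $\delta$, which is exactly where the hypothesis $\delta < (9-\sqrt{73})/4$ (equivalently $\eps < \delta(1-2\delta)$ with room to spare, giving $8\eps < $ something ensuring $r^\star<\delta$) gets used; and (c) handling the first step separately, since the hypothesis gives a gap of $2\delta$ (cutoff $1/2-\delta$) rather than a probability bound, whereas all subsequent steps are naturally phrased in terms of probabilities. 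Once the one-step claim and the contraction of $h$ on $[0,\delta]$ are nailed down, the lemma follows by a finite induction of length $k$, with the symmetric case obtained by swapping the roles of $\zero$ and $\one$ throughout.
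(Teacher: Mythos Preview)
Your approach is essentially the paper's, just in the dual coordinate: you iterate the map $r \mapsto h(r):=\eps/(1-2r)$ on probabilities, while the paper iterates the map $\gamma \mapsto 1-2\eps/\gamma$ on payoff gaps; these are the same recursion under $\gamma = 1-2r$. The paper then shows an additive increment $\gamma' - \gamma \geq C(\eps,\delta) := 2(\delta(1-2\delta)-\eps)/(1-2\delta) > 0$ on the range $\gamma\in[2\delta,1-2\delta]$, together with the invariant $\gamma\geq 1-2\delta \Rightarrow \gamma'\geq 1-2\delta$, which gives a clean bound $k \geq (1-4\delta)/C(\eps,\delta)$.

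Your execution, however, contains a concrete error. You assert that $r_1 := \eps/(2\delta)$ lies in $(0,\delta)$, and hence that $k=1$ suffices. That is false: $r_1 < \delta$ is equivalent to $\eps < 2\delta^2$, but for $\delta < 1/4$ one has $2\delta^2 < \delta(1-2\delta)$, so the standing hypothesis $\eps < \delta(1-2\delta)$ does \emph{not} force $\eps < 2\delta^2$. For instance with $\delta=0.1$ and $\eps=0.05$ (which satisfies $\eps < \delta(1-2\delta)=0.08$) one gets $r_1 = 0.25 > \delta$. Consequently your contraction analysis on the interval $[0,\delta]$ does not apply to the starting point, and your ``$k=1$'' claim fails.

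The fix is to run the contraction on the larger interval $[0,\tfrac12-\delta]$: from $\eps < \delta(1-2\delta)$ one gets $r_1 = \eps/(2\delta) < (1-2\delta)/2 = \tfrac12-\delta$, and $h$ maps $[0,\tfrac12-\delta]$ strictly into itself with unique fixed point $r^\star=(1-\sqrt{1-8\eps})/4 < \delta$ (the inequality $r^\star<\delta$ being equivalent to $\eps<\delta(1-2\delta)$). Since $h(r)<r$ on $(r^\star,\tfrac12-\delta]$, the iterates decrease to $r^\star$ and hence drop below $\delta$ after finitely many steps depending only on $\eps,\delta$. This repairs your argument, but the paper's additive-increment formulation on $\gamma$ avoids the domain bookkeeping altogether and yields an explicit $k$ directly.
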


\begin{proof}
Consider any $i \in [2k]$ and let $v_i$ denote the player who is the input to gadget $g_i$ and $v_{i+1}$ the player who is the output. We will show that in any $\eps$-NE:
\begin{enumerate}
\item If $U_{v_i}(0)-U_{v_i}(1) \in [2\delta,1-2\delta]$, then $U_{v_{i+1}}(1)-U_{v_{i+1}}(0) \geq U_{v_i}(0)-U_{v_i}(1) + C(\eps,\delta)$.
\item If $U_{v_i}(0)-U_{v_i}(1) \geq 1 - 2\delta$, then $U_{v_{i+1}}(1)-U_{v_{i+1}}(0) \geq 1 - 2\delta$.
\item If $U_{v_i}(1)-U_{v_i}(0) \in [2\delta,1-2\delta]$, then $U_{v_{i+1}}(0)-U_{v_{i+1}}(1) \geq U_{v_i}(1)-U_{v_i}(0) + C(\eps,\delta)$.
\item If $U_{v_i}(1)-U_{v_i}(0) \geq 1 - 2\delta$, then $U_{v_{i+1}}(0)-U_{v_{i+1}}(1) \geq 1 - 2\delta$.
\end{enumerate}
Here $C(\eps,\delta) > 0$ is a quantity that only depends on $\eps$ and $\delta$. Note that the lemma follows from this, since $2k$ is even. Indeed, it suffices to pick $k \in \mathbb{N}$ such that $2k \geq (1-2\delta-2\delta)/C(\eps,\delta)$. Furthermore, using \eqref{eq:eps-NE-one} and \eqref{eq:eps-NE-zero}, it is easy to see that $U_{v_{2k+1}}(0)-U_{v_{2k+1}}(1) \geq 1 - 2\delta$ implies that player $v_{2k+1}$ places probability at least $1- \delta$ on $\zero$, and the analogous statement for the other case.

Let us begin by proving statement 1. Let $\gamma := U_{v_i}(0)-U_{v_i}(1) \in [2\delta,1-2\delta]$ and let $p$ denote the probability that player $v_i$ plays strategy $\one$. By \eqref{eq:eps-NE-zero} we obtain that $p \leq \eps/\gamma$, and thus $U_{v_{i+1}}(1)-U_{v_{i+1}}(0) = (1-p) - p = 1-2p \geq 1-2\eps/\gamma$. We can thus write
\begin{equation*}
\begin{split}
\big(U_{v_{i+1}}(1)-U_{v_{i+1}}(0)\big) - \big(U_{v_i}(0)-U_{v_i}(1)\big) &\geq 1-\frac{2\eps}{\gamma} - \gamma\\
&= \frac{-\gamma^2 +\gamma -2\eps}{\gamma}\\
&= \frac{-\gamma^2 +\gamma -2\delta(1-2\delta)}{\gamma} + \frac{2(\delta(1-2\delta)-\eps)}{\gamma}\\
&= \frac{(\gamma-2\delta)(1-2\delta-\gamma)}{\gamma} + \frac{2(\delta(1-2\delta)-\eps)}{\gamma}\\
&\geq \frac{2(\delta(1-2\delta)-\eps)}{1-2\delta} =: C(\eps,\delta) > 0
\end{split}
\end{equation*}
where we used $\gamma \in [2\delta,1-2\delta]$ and $\eps < \delta(1-2\delta)$. Statement 3 is proved using the same arguments.

Let us now prove statement 2. Letting $\gamma := U_{v_i}(0)-U_{v_i}(1) \in [1-2\delta,1]$, we again obtain that $p \leq \eps/\gamma$, and thus
$$U_{v_{i+1}}(1)-U_{v_{i+1}}(0) = (1-p) - p = 1-2p \geq 1-\frac{2\eps}{\gamma} \geq 1-\frac{2\delta(1-2\delta)}{1-2\delta} = 1-2\delta$$
as desired. The proof of statement 4 is completely analogous.
\end{proof}

\paragraph{\bf \AND gates.}

For a gate $g = (\AND, u, v, w)$, the player $w$, who represents the output
variable, will play the
following matrix games against $u$ and $v$, who represent the input variables.
\begin{center}
	\bimatrixgame{3.7mm}{2}{2}{$w$}{$u$}%
	{{$\zero$}{$\one$}}%
	{{$\zero$}{$\one$}}%
	{
		\payoffpairs{1}{{$\frac{1}{2}$}{$0$}}{{$0$}{$0$}}
		\payoffpairs{2}{{$0$}{$\,\,\frac{1+\delta}{6-2\delta}$}}{{$0$}{$0$}}
	}
	\hskip 1.5cm
	\bimatrixgame{3.7mm}{2}{2}{$w$}{$v$}%
	{{$\zero$}{$\one$}}%
	{{$\zero$}{$\one$}}%
	{
		\payoffpairs{1}{{$\frac{1}{2}$}{$0$}}{{$0$}{$0$}}
		\payoffpairs{2}{{$0$}{$\,\,\frac{1+\delta}{6-2\delta}$}}{{$0$}{$0$}}
	}
\end{center}
We claim that this gadget correctly simulates the \AND gate (together with a sequence of \NOT gates applied to the output). Let player $u$ play strategy $(1-q_1, q_1)$, and let $v$ play strategy $(1-q_2, q_2)$, where $q_1, q_2 \in [0,1]$. By \cref{lem:NOT-sequence} it suffices to show that
\begin{enumerate}[(i)]
	\item if $q_1 \leq \del$ or $q_2 \leq \del$, then $U_{w}(0) - U_{w}(1) \geq 2\delta$,
	\item if $q_1 \geq 1-\del$ and $q_2 \geq 1-\del$, then $U_{w}(1) - U_{w}(0) \geq 2\delta$.
\end{enumerate}
where $U_{w}(0)$ and $U_{w}(1)$ are the expected payoff of player $w$ for strategy $\zero$ and $\one$ respectively. Observe that $U_{w}(0) = \frac{1}{2}(1-q_1) + \frac{1}{2}(1-q_2)$ and $U_{w}(1) = \frac{1+\delta}{6-2\delta} q_1 + \frac{1+\delta}{6-2\delta} q_2$. Below, we use the fact that $(1-3\delta)/(3-\delta) \geq 2\delta$, which follows from $\delta < (9-\sqrt{73})/4$.

In case (i), $U_{w}(0) - U_{w}(1)$ is minimal when $q_1 = \delta$ and $q_2=1$ (and also when $q_1=1$ and $q_2=\delta$, which is symmetric). Thus, we necessarily have
$$U_{w}(0) - U_{w}(1) \geq \frac{1}{2} \cdot (1-\delta) - \frac{1+\delta}{6-2\delta}  \cdot \delta - \frac{1+\delta}{6-2\delta} = \frac{1-3\delta}{3-\delta} \geq 2\delta.$$
For case (ii), we have
$$U_{w}(1) - U_{w}(0) \geq \frac{1+\delta}{6-2\delta} \cdot  (1-\delta) + \frac{1+\delta}{6-2\delta}  \cdot 
(1-\delta) - \frac{1}{2} \cdot \delta - \frac{1}{2} \cdot \delta = \frac{1-3\delta}{3-\delta} \geq 2\delta$$
and thus the gate is correctly simulated.

\paragraph{\bf \PURE gates.}

For a gate $g = (\PURE, u, v, w)$, the players $v$ and $w$, who represent the
output variables, play the following games against $u$, who represents the input
variable.
\begin{center}
	\bimatrixgame{3.7mm}{2}{2}{$v$}{$u$}%
	{{$\zero$}{$\one$}}%
	{{$\zero$}{$\one$}}%
	{
		\payoffpairs{1}{{$\,\,\frac{1+2\delta}{3-2\delta}$}{$0$}}{{$0$}{$0$}}
		\payoffpairs{2}{{$0$}{$1$}}{{$0$}{$0$}}
	}
	\hskip 1.5cm
	\bimatrixgame{3.7mm}{2}{2}{$w$}{$u$}%
	{{$\zero$}{$\one$}}%
	{{$\zero$}{$\one$}}%
	{
		\payoffpairs{1}{{$1$}{$0$}}{{$0$}{$0$}}
		\payoffpairs{2}{{$0$}{$\,\,\frac{1+2\delta}{3-2\delta}$}}{{$0$}{$0$}}
	}
\end{center}

We claim that this gadget correctly simulates the \PURE gate (together with a sequence of \NOT gates applied to the output). Let player $u$ play strategy $(1-q, q)$ where $q \in [0,1]$. By \cref{lem:NOT-sequence} it suffices to show that
\begin{enumerate}[(i)]
    \item if $q \leq \del$, then $U_{v}(0) - U_{v}(1) \geq 2\delta$ and $U_{w}(0) - U_{w}(1) \geq 2\delta$,
    \item if $q \geq 1-\del$, then $U_{v}(1) - U_{v}(0) \geq 2\delta$ and $U_{w}(1) - U_{w}(0) \geq 2\delta$,
    \item if $q \in (\del, 1 - \del)$, then $|U_{v}(0) - U_{v}(1)| \geq 2\delta$ or $|U_{w}(0) - U_{w}(1)| \geq 2\delta$.
\end{enumerate}
Observe that $U_{v}(0) = \frac{1+2\delta}{3-2\delta}(1-q)$ and $U_{v}(1) = q$, as well as $U_{w}(0) = 1-q$ and $U_{w}(1) = \frac{1+2\delta}{3-2\delta} q$. Below, we use the fact that $(1-2\delta)/(3-2\delta) \geq 2\delta$, which follows from $(1-2\delta)/(3-2\delta) \geq (1-3\delta)/(3-\delta) \geq 2\delta$ as argued above.

In case (i), we have $q \leq \delta$ and thus
$$U_{v}(0) - U_{v}(1) \geq \frac{1+2\delta}{3-2\delta} \cdot (1-\delta) - \delta = \frac{1-2\delta}{3-2\delta} \geq 2\delta$$
and also
$$U_{w}(0) - U_{w}(1) \geq 1-\delta - \frac{1+2\delta}{3-2\delta} \cdot \delta = \frac{3-6\delta}{3-2\delta} \geq \frac{1-2\delta}{3-2\delta} \geq 2\delta.$$
In case (ii), we have $q \geq 1 -\delta$ and thus
$$U_{v}(1) - U_{v}(0) \geq 1-\delta - \frac{1+2\delta}{3-2\delta} \cdot \delta = \frac{3-6\delta}{3-2\delta} \geq \frac{1-2\delta}{3-2\delta} \geq 2\delta$$
as well as
$$U_{w}(1) - U_{w}(0) \geq \frac{1+2\delta}{3-2\delta} \cdot (1-\delta) - \delta = \frac{1-2\delta}{3-2\delta} \geq 2\delta.$$
For case (iii) we consider two subcases. If $q \geq 1/2$, then
$$U_{v}(1) - U_{v}(0) \geq \frac{1}{2} - \frac{1+2\delta}{3-2\delta} \cdot \frac{1}{2} = \frac{1-2\delta}{3-2\delta}  \geq 2\delta$$
as above. On the other hand, if $q \leq 1/2$, then
$$U_{w}(0) - U_{w}(1) \geq \frac{1}{2} - \frac{1+2\delta}{3-2\delta} \cdot \frac{1}{2} = \frac{1-2\delta}{3-2\delta}  \geq 2\delta.$$

\paragraph{\bf Hardness result.}

According to the case analysis above, all gate constraints from the
\pcircuit instance are enforced correctly in any \nash{\eps} of the polymatrix
game. Thus, given an \nash{\eps} solution, we can produce a
satisfying assignment to the \pcircuit instance using the mapping that we
defined at the start of the reduction. 

Furthermore, note that the polymatrix instance that we construct is normalized and that it is degree 3, bipartite, and has two strategies per player. In particular, note that the addition of a sequence of \NOT gates inside the gadgets for \AND gates and \PURE gates (according to \cref{lem:NOT-sequence}) does not increase the degree and does not destroy the bipartiteness of the graph, since the sequence has even length.

\subsubsection{Hardness for Win-Lose Polymatrix games}
A polymatrix game is {\em win-lose} if for every player $i \in [n]$ it holds
that the entries of every payoff matrix $A_{ij}$ belong to $\{0, a_i\}$, for
some $a_i \in (0,1]$. In other words, player $i$ either ``wins'' payoff $a_i$
from the interaction with another player, or ``loses'' and gets payoff 0. Prior
work has shown that computing an \eps-WSNE is \ppad/-complete for inverse-polynomial
$\eps$ via a reduction from \gcircuit~\cite{LiuLD21-sparse-winlose-polymatrix}. In this section we
will show that hardness holds for all $\eps < 1/3$, by modifying our hardness
result for $\eps$-WSNE given earlier.

\begin{theorem}\label{thm:win-lose-polymatrix}
It is \ppad/-hard to find an $\eps$-WSNE in a win-lose polymatrix game for all $\eps <
1/3$, even in degree 7 bipartite games with two strategies per player.
\end{theorem}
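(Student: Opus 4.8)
The plan is to modify the reduction behind \cref{thm: polymatrix wsne}, keeping its skeleton intact: reduce from the restricted version of \pcircuit in \cref{cor:pcircuit-restricted} with gates $\{\PURE,\NOT,\AND\}$, create one player per node of the \pcircuit instance (two actions $\zero,\one$ encoding $0,1$ and strictly mixed strategies encoding $\garbo$), and implement each gate by a local gadget. The new constraint is that every payoff matrix must have all entries in $\{0,a_i\}$ for a single $a_i\in(0,1]$ per player. The \NOT gadget from \cref{thm: polymatrix wsne} already has this form (with $a_i=1$) and works for all $\eps<1$, so only the \AND and \PURE gadgets need to be rebuilt; the price is auxiliary ``copy'' players, which will raise the degree to at most $7$.

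The main tool is a win-lose \COPY gadget: a fresh player playing the $2\times 2$ diagonal game against its source (with $a=1$), which is forced to replicate the source's action in any $\eps$-WSNE with $\eps<1$ whenever the source plays a pure strategy. Chaining and branching these gadgets into a shallow ``copy tree'' of the right parity lets me produce, from an input player $u$, several replicas of $u$ while keeping $u$'s out-degree at $1$ per gate, keeping each internal copy node at total degree $\le 4$, and preserving bipartiteness (the path from $u$ to a leaf replica has even length, and the replica's target is on the side opposite $u$ because the \pcircuit interaction graph already has the edge from $u$ to the gate output). Each fractional payoff $1/6,1/3,1/2$ used in \cref{thm: polymatrix wsne} is then realized as a sum of $\{0,a\}$ contributions over such replicas.

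For an \AND gate $(\AND,u,v,w)$ I would let the output $w$ receive, with $a_w=1/6$, from three replicas of $u$ and three of $v$: from each input, one ``copy-like'' edge (rewarding $w$ for matching the replica's pure action) and two ``zero-detecting'' edges (rewarding $w$ for playing $\zero$ when the replica plays $\one$). Since all replicas of a pure input are themselves pure, a routine case check shows that when $u,v$ are pure the decisive payoff gap is $1/2-1/6=1/3$ in exactly the two situations the gadget needs, so the gate is enforced for all $\eps<1/3$, and the game stays normalized ($U_w=1$, $L_w=0$). Crucially, \cref{cor:pcircuit-restricted} never gives out-degree $2$ to an in-degree-$2$ node, so the blown-up output $w$ (now in-degree $6$) has out-degree $1$, for total degree $7$; the \PURE-output players only reach in-degree $3$ and total degree $\le 5$.

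The delicate step is the \PURE gadget, and I expect it to be the main obstacle. Feeding the two outputs $v,w$ from \emph{independent} replicas of $u$ breaks the gadget: when $u$ mixes near $1/2$ the replicas are unconstrained, and one can then make $v$ \emph{and} $w$ both strictly mixed, violating the PURE constraint. The fix is to have $v$ and $w$ read from the \emph{same} three replicas $f_1,f_2,f_3$ of $u$ (with $a_v=a_w=1/3$), wired so that $v$'s payoff difference depends on $2s_1+s_2+s_3$ (with $s_j=\Pr[f_j=\one]$) and forces it into $[1-3\eps,\,1+3\eps]$ if $v$ is strictly mixed, while $w$'s payoff difference forces the same quantity into $[3-3\eps,\,3+3\eps]$; these intervals are disjoint exactly when $\eps<1/3$, so at least one of $v,w$ must be pure — this mirrors the overlapping-intervals argument in \cref{thm: polymatrix wsne}. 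One then reads a satisfying assignment of the \pcircuit instance off any $\eps$-WSNE of the resulting win-lose, normalized, bipartite, two-action, degree-$7$ polymatrix game, which proves the theorem.
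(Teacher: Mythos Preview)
Your approach is essentially the paper's: it too replaces the \AND and \PURE gadgets by versions that route through auxiliary copy players, and the crucial insight you correctly isolate---that for \PURE the two outputs must read from the \emph{same} replicas so that the ``at least one is pure'' argument goes through via disjoint intervals---is exactly what the paper does. In the paper's wiring the payoff difference for $v$ and for $w$ is an affine function of $s_1+s_2+2s_3$ (your $2s_1+s_2+s_3$ up to relabeling), with precisely the centers $1$ and $3$ you state.

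There is, however, a concrete slip in your \AND gadget: the ``zero-detecting'' edges must reward $w$ for playing $\zero$ when the replica plays $\zero$, not $\one$. With the wiring you describe, when both inputs encode $1$ all six replicas play $\one$, so $w$ earns $4\cdot\tfrac16=\tfrac23$ for $\zero$ from the zero-detecting edges against only $2\cdot\tfrac16=\tfrac13$ for $\one$ from the copy-like edges, and $w$ is forced to $\zero$---the gate fails. The $1/2-1/6$ gap you quote only comes out with the corrected wiring (which is what the paper uses). As a smaller point, the paper handles the degree/bipartiteness bookkeeping differently from your copy-tree idea: it connects each input directly to its three copy players, then inserts a chain of two \NOT gates before every original node (and an extra copy player inside each \NOT gadget) to cap degrees at $7$ and restore bipartiteness. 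Your copy-tree route can presumably be made to work, but you need all root-to-leaf paths to share the same parity, which a naive binary tree with three leaves does not give you.
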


\begin{proof}
We begin by explaining how to modify the payoff matrices described in
\cref{sec:WSNE-polymatrix}. Observe that the payoff matrix for \NOT gates is already
win-lose, so we only need to convert \AND gates and \PURE gates into
win-lose payoff matrices. The high level idea for these two gates is to create a
number of intermediate players, who all copy the input player's strategy, and
whose payoffs to the output player sum to the same payoffs that we used in the
$\eps$-WSNE hardness construction. \cref{fig:win-lose-gadgets} depicts
the interaction graphs for these two gates.

\begin{figure}
	\centering
	\begin{tikzpicture}[roundnode/.style={circle, draw=black, inner sep=0, minimum size=2.5mm}]

\node[roundnode,label=above left:{$u$}] (ANDu) at (0,1.5) {};
\node[roundnode,label=below left:{$v$}] (ANDv) at (0,-0.5) {};
\node[roundnode,label={[label distance=-0.1cm]above right:{$u_1$}}] (ANDu1) at (1.5,2.2) {};
\node[roundnode,label={[label distance=-0.1cm]above right:{$u_2$}}] (ANDu2) at (1.5,1.5) {};
\node[roundnode,label={[label distance=-0.1cm]above right:{$u_3$}}] (ANDu3) at (1.5,0.8) {};
\node[roundnode,label={[label distance=-0.1cm]below right:{$v_1$}}] (ANDv1) at (1.5,0.2) {};
\node[roundnode,label={[label distance=-0.1cm]below right:{$v_2$}}] (ANDv2) at (1.5,-0.5) {};
\node[roundnode,label={[label distance=-0.1cm]below right:{$v_3$}}] (ANDv3) at (1.5,-1.2) {};
\node[roundnode,label={above right:{$w$}}] (ANDw) at (5,0.5) {};

\draw (ANDu) -- (ANDu1);
\draw (ANDu) -- (ANDu2);
\draw (ANDu) -- (ANDu3);
\draw (ANDv) -- (ANDv1);
\draw (ANDv) -- (ANDv2);
\draw (ANDv) -- (ANDv3);
\draw (ANDu1) -- (ANDw);
\draw (ANDu2) -- (ANDw);
\draw (ANDu3) -- (ANDw);
\draw (ANDv1) -- (ANDw);
\draw (ANDv2) -- (ANDw);
\draw (ANDv3) -- (ANDw);

\node[roundnode,label={above left:{$u$}}] (PUREu) at (8,0.5) {};
\node[roundnode,label={above left:{$u_1$}}] (PUREu1) at (9.5,2.2) {};
\node[roundnode,label={above left:{$u_2$}}] (PUREu2) at (9.5,0.5) {};
\node[roundnode,label={below left:{$u_3$}}] (PUREu3) at (9.5,-1.2) {};
\node[roundnode,label={above right:{$v$}}] (PUREv) at (11.5,1.5) {};
\node[roundnode,label={above right:{$w$}}] (PUREw) at (11.5,-0.5) {};

\draw (PUREu) -- (PUREu1);
\draw (PUREu) -- (PUREu2);
\draw (PUREu) -- (PUREu3);
\draw (PUREu1) -- (PUREv);
\draw (PUREu2) -- (PUREv);
\draw (PUREu3) -- (PUREv);
\draw (PUREu1) -- (PUREw);
\draw (PUREu2) -- (PUREw);
\draw (PUREu3) -- (PUREw);

\node at (2.2,-2.2) {\large \AND gadget};
\node at (9.8,-2.2) {\large \PURE gadget};

\end{tikzpicture}
	\caption{The gadgets used for win-lose polymatrix games.}
	\label{fig:win-lose-gadgets}
\end{figure}
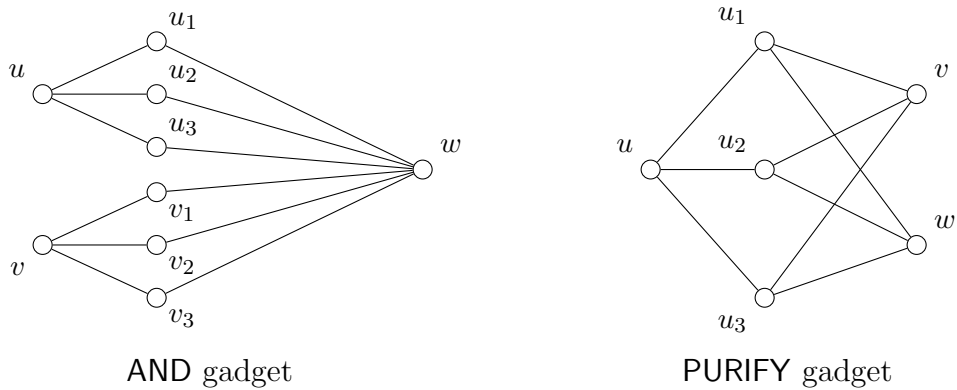

\paragraph{\bf \AND gates.}
For a gate $g = (\AND, u, v, w)$, we will create the \AND gadget from \cref{fig:win-lose-gadgets}. 
The player $w$, that represents the output variable, will play against the auxiliary players $u_1, u_2, u_3$, and $v_1, v_2, v_3$ who in turn will ``copy'' the values, represent the input variables. The payoff matrices for this gadget will be defined as follows, where $X \in \{u_1, v_1\}$, $Y \in \{u_2, u_3, v_2, v_3\}$, and when $Q=v$, then $P \in \{v_1, v_2, v_3\}$ and when $Q=u$, then $P \in \{u_1, u_2, u_3\}$.
\begin{center}
\bimatrixgame{3.7mm}{2}{2}{$w$}{$X$}%
{{$\zero$}{$\one$}}%
{{$\zero$}{$\one$}}%
{
\payoffpairs{1}{{$\frac{1}{6}$}{$0$}}{{$0$}{$0$}}
\payoffpairs{2}{{$0$}{$\frac{1}{6}$}}{{$0$}{$0$}}
}
\hskip 1.5cm
\bimatrixgame{3.7mm}{2}{2}{$w$}{$Y$}%
{{$\zero$}{$\one$}}%
{{$\zero$}{$\one$}}%
{
\payoffpairs{1}{{$\frac{1}{6}$}{$0$}}{{$0$}{$0$}}
\payoffpairs{2}{{$0$}{$0$}}{{$0$}{$0$}}
}
\hskip 1.5cm
\bimatrixgame{3.7mm}{2}{2}{$P$}{$Q$}%
{{$\zero$}{$\one$}}%
{{$\zero$}{$\one$}}%
{
\payoffpairs{1}{{$1$}{$0$}}{{$0$}{$0$}}
\payoffpairs{2}{{$0$}{$1$}}{{$0$}{$0$}}
}
\end{center}
We claim that this gate will work for all $\eps < 1/3$.
Firstly, it is not hard to observe that for the game played between $P$ and $Q$, in every \eps-WSNE of the game: if $Q$ plays $\zero$, then $P$ plays $\zero$; if $Q$ plays $\one$, then $P$ plays $\one$.
Hence, we have the following for the gadget. 
\begin{itemize}
\item If both $u$ and $v$ play $\one$ as a pure strategy, then every auxiliary player $u_1, u_2, u_3$ and $v_1, v_2, v_3$ plays $\one$. 
Hence, the payoff to $w$ for playing $\zero$ is $0$, and the payoff to $w$ for playing $\one$ is $1/6 + 1/6 = 1/3$. So, if $\eps < 1/3$, then in all $\eps$-WSNEs the only $\eps$-best response for $w$ is $\one$, as required by the constraints of the \AND gate.

\item If at least one of $u$ and $v$ play $\zero$ as a pure strategy, say that $u$ plays $\zero$, then the auxiliary players $u_1, u_2, u_3$ will play $\zero$ as well. 
Thus, the payoff to $w$ for playing $\zero$ is at least $1/2$, while the payoff to $w$ for playing $\one$ is at most $1/6$. 
So, if $\eps < 1/3$, then in all $\eps$-WSNEs the only $\eps$-best response for $w$ is $\zero$, as required by the constraints of the \AND gate.

\item The \AND gate places no other constraints on the variable $w$, so we can ignore all other cases, e.g., the case where both $u$ and $v$ play strictly mixed
strategies.
\end{itemize}

\paragraph{\bf \PURE gates.}
For a gate $g = (\PURE, u, v, w)$, we will create a \PURE gadget from \cref{fig:win-lose-gadgets}. 
Players $v$ and $w$, that represent the output variables, will play against the auxiliary players $u_1, u_2, u_3$, who in turn will ``copy'' the values of the input variable that corresponds to player $u$. The payoff matrices for this gadget will be defined as follows, where $P \in \{u_1, u_2, u_3\}$, $Q \in \{v, w\}$, and $X \in \{u_1, u_2\}$.
\begin{center}
\hskip 0.2cm
\bimatrixgame{3.4mm}{2}{2}{$v$}{$X$}%
{{$\zero$}{$\one$}}%
{{$\zero$}{$\one$}}%
{
\payoffpairs{1}{{$0$}{$0$}}{{$0$}{$0$}}
\payoffpairs{2}{{$0$}{$\frac{1}{3}$}}{{$0$}{$0$}}
}
\hskip 0.5cm
\bimatrixgame{3.4mm}{2}{2}{$w$}{$X$}%
{{$\zero$}{$\one$}}%
{{$\zero$}{$\one$}}%
{
\payoffpairs{1}{{$\frac{1}{3}$}{$0$}}{{$0$}{$0$}}
\payoffpairs{2}{{$0$}{$0$}}{{$0$}{$0$}}
}
\hskip 0.5cm
\bimatrixgame{3.4mm}{2}{2}{$Q$}{$u_3$}%
{{$\zero$}{$\one$}}%
{{$\zero$}{$\one$}}%
{
\payoffpairs{1}{{$\frac{1}{3}$}{$0$}}{{$0$}{$0$}}
\payoffpairs{2}{{$0$}{$\frac{1}{3}$}}{{$0$}{$0$}}
}
\hskip 0.5cm
\bimatrixgame{3.4mm}{2}{2}{$P$}{$u$}%
{{$\zero$}{$\one$}}%
{{$\zero$}{$\one$}}%
{
\payoffpairs{1}{{$1$}{$0$}}{{$0$}{$0$}}
\payoffpairs{2}{{$0$}{$1$}}{{$0$}{$0$}}
}
\end{center}
We claim that this gate will work for all $\eps < 1/3$. Similarly to the \AND gadget, it is not hard to observe that for the game played between $u$ and $P$, in every \eps-WSNE of the game: if $u$ plays $\zero$, then $P$ plays $\zero$; if $u$ plays $\one$, then $P$ plays $\one$.

\begin{itemize}
\item If $u$ plays $\zero$ as a pure strategy, then auxiliary players $u_1, u_2, u_3$ play $\zero$ as well.
Thus, strategy $\zero$ gives payoff $1/3$ to both $v$ and $1$ to $w$, while strategy $\one$ gives payoff $0$ to both $v$ and $w$, so in any $\eps$-WSNE with 
$\eps < 1/3$ we have that $\zero$ is the only $\eps$-best response for both $v$ and $w$.

\item If $u$ plays $\one$ as a pure strategy, then auxiliary players $u_1, u_2, u_3$ play $\one$ as well.
Thus, strategy $\one$ gives payoff $1$ to $v$ and  and payoff $1/3$ to $w$, while strategy $\zero$ gives payoff $0$ to both $v$ and $w$. So, in any $\eps$-WSNE 
with $\eps < 1/3$ we have that $\one$ is the only $\eps$-best response for $v$ and $w$.

\item Assume now that $u$ mixes between pure strategies $\zero$ and $\one$. Then, the auxiliary players $u_1, u_2, u_3$ might mix between their pure strategies as well; for $i \in \{1, 2, 3\}$ let $p_i$ be the probability player $u_i$ places on pure strategy $\zero$. 
Then, we have that the following hold. 
\begin{itemize}
    \item The payoff for $v$ is $\frac{1}{3}\cdot p_3$ for playing $\zero$, and $1-\frac{1}{3}\cdot (p_1+p_2+p_3)$ for playing $\one$.
    \item The payoff for $w$ is $\frac{1}{3}\cdot (p_1+p_2+p_3)$ for playing $\zero$, and $\frac{1}{3}-\frac{1}{3}\cdot p_3$ for playing $\one$.
\end{itemize}
Now assume that in an \eps-WSNE, with $\eps < 1/3$, $v$ mixes between $\zero$ and $\one$. Then, it must hold that the difference of the payoffs between the two pure strategies is bounded by 1/3, i.e., 
\begin{align*}
    \left|\frac{1}{3}\cdot p_3 - \left(1-\frac{1}{3}\cdot (p_1+p_2+p_3)\right) \right| <  \frac{1}{3}.
\end{align*}
Hence, we get that
\begin{align*}
    \frac{2}{3} - \frac{1}{3}\cdot p_3 <  \frac{1}{3}\cdot (p_1+p_2+p_3).
\end{align*}
We claim that in this case $w$ will have to play $\zero$ as a pure strategy.
Indeed, observe that the payoff of $w$ from $\zero$ is $\frac{1}{3}\cdot (p_1+p_2+p_3) > \frac{2}{3} - \frac{1}{3}\cdot p_3$, where the inequality follows from above.
On the other hand, the payoff from $\one$ is $\frac{1}{3} - \frac{1}{3}\cdot p_3$. Hence, the difference between the payoffs of the two pure strategies is larger than 1/3. Thus, in any \eps-WSNE, with $\eps < 1/3$, if $v$ plays a mixed strategy, then \zero is the only \eps-best response for $w$.

For the second case, assume that in an \eps-WSNE, with $\eps < 1/3$, $w$ mixes between $\zero$ and $\one$. Then, it must hold that the difference of the payoffs between the two pure strategies is bounded by 1/3, i.e., 
\begin{align*}
     \left|\frac{1}{3}\cdot (p_1+p_2+p_3) - \left(\frac{1}{3}-\frac{1}{3}\cdot p_3\right) \right| <  \frac{1}{3}.
\end{align*}
Hence, we get that
\begin{align*}
     \frac{1}{3}\cdot p_3 < \frac{2}{3} -  \frac{1}{3}\cdot (p_1+p_2+p_3).
\end{align*}
We claim that in this case $v$ will have to play $\one$ as a pure strategy.
Indeed, observe that the payoff of $v$ from $\zero$ is $\frac{1}{3}\cdot p_3 < \frac{2}{3} - \frac{1}{3}\cdot (p_1+p_2+p_3)$, where the inequality follows from above.
On the other hand, the payoff from $\one$ is $1-\frac{1}{3}\cdot (p_1+p_2+p_3)$. Hence, the difference between the payoffs of the two pure strategies is larger than 1/3. Thus, in any \eps-WSNE, with $\eps < 1/3$, if $w$ plays a mixed strategy, then \one is the only \eps-best response for $v$.
\end{itemize}

\paragraph{\bf Hardness result.}

As we have argued above, each of the gate constraints from the \pcircuit
instance is enforced correctly in any $\eps$-WSNE of the  win-lose polymatrix game with $\eps < 1/3$. 
Thus, given such an $\eps$-WSNE, we can produce a satisfying assignment to the
\pcircuit instance using the same mapping as we have used before.
In addition, all of the games that we have presented are win-lose games and are already normalized, and in addition every player has two strategies.

Now, observe that the reduction described above does not immediately yield a bipartite graph and in addition the maximum degree of the constructed graph is at most nine (this can happen if the vertex that represents the output of an \AND gadget, is the input to a different \AND gadget as well). 
However, it is not difficult to tweak the construction and get these two properties too.

Firstly, we perform the following preprocessing step on the \pcircuit instance, which results in having a graph with maximum degree seven in the win-lose game constructed here. We replace every node $v$ of the \pcircuit instance with three nodes $v_1, v_2, v_3$ where: $v_3$ is the output of a \NOT gate with input $v_2$; $v_2$ in turn is the output of a \NOT gate with input $v_1$; $v_3$ is the input of the gate(s) that had as input the vertex $v$; $v_1$ is the output of the gate that had as output $v$. It is not difficult to see that this transformation does not change the solutions of the original instance. In addition, observe that under this preprocessing step and \cref{cor:pcircuit-restricted}, the maximum degree of the constructed win-lose game is now seven.

Next, in order to get a bipartite graph we modify the gadget for \NOT gates as follows. For a gate $g = (\NOT, u, v)$ we create an auxiliary player $u'$, where $u'$ ``copies'' the strategy of $u$ and $v$ ``negates'' the strategy of $u'$. So, overall $v$ ``negates'' the strategy of $u$. To achieve this, the game played between $u'$ and $v$ corresponds to the game constructed for the \NOT gates for general polymatrix games and the game between $u$ and $u'$ corresponds to the rightmost game constructed for the \AND gate for win-lose polymatrix games. Observe now that the final graph is bipartite: one side contains only the vertices from the preprocessed \pcircuit instance and the other side contains only the auxiliary vertices we have created for the gadgets.
\end{proof}

\subsection{Graphical Games}
\label{sec:graphical}

In this section we give results for graphical games (defined in
Section~\ref{sec:gamedefs}) for the two well-studied
notions of approximate equilibrium: $\eps$-NEs and $\eps$-WSNEs. 

We start with results for $\eps$-WSNEs, where we first give a straightforward
algorithm for finding a \wn{\left(1 - \frac{2}{2^d + 1}\right)} in polynomial
time in a two-action graphical game with maximum in-degree
$d$~(\cref{thm:graphical-wsne-UB}), and we then show that finding an \wn{\eps}
in two-action graphical games with maximum in-degree $d \geq 2$ is
\ppad/-complete for any $\eps < 1 - \frac{2}{2^d + 1}$ (\cref{thm:wsne-graph}).
Combined, these two results give a complete characterisation of the computational
complexity of finding an $\eps$-WSNE in a graphical game with maximum in-degree
$d$.

We then move on to $\eps$-NEs, where we first give a straightforward algorithm
that finds a $1/2$-NE in a two-action graphical game
(\cref{thm:eacy-graphical-ne}).
For the lower bound, we then show that it is \ppad/-hard to find an $\eps$-NE in a two-action graphical game for any constant $\eps < 1/2$
(\cref{thm:graphical-wsne-hardness}).
Combined, these results give a complete characterisation of the computational
complexity of finding an $\eps$-NE in a two-action graphical game.

\subsubsection{Upper and Lower Bounds for Well-Supported Nash Equilibria}

We now show results for $\eps$-WSNEs in graphical games.

\paragraph{An easy upper bound.}

We present a polynomial time algorithm to compute a \wn{\left(1 - \frac{2}{2^d
+ 1}\right)} in any two-action graphical game with maximum in-degree $d \geq
2$. The algorithm relies on a simple and natural approach that has been used
for similar problems by Liu et al.~\cite{LiuLD21-sparse-winlose-polymatrix}.

The algorithm proceeds in two steps. In the first step, it iteratively checks for a player that has an action that is {\em \eps-dominant}; an action which, if played with probability 1, will satisfy the \wn{\eps} conditions no matter what strategies the in-neighbours play.
Here, we will set $\eps =  1 - \frac{2}{2^d + 1}$, where $d$ is the maximum in-degree of the graph.
If such a player with an \eps-dominant action exists, the algorithm fixes the strategy of that player, updates the game accordingly, and iterates until there is no such player left.
In the second step, the algorithm lets all remaining players mix uniformly, i.e., every player without an \eps-dominant action plays each of its two actions with probability $1/2$.

\begin{theorem}
	\label{thm:graphical-wsne-UB}
	There is a polynomial-time algorithm that finds a \wn{\left(1 - \frac{2}{2^d + 1}\right)} in a two-action graphical game with maximum in-degree $d$.
\end{theorem}
\begin{proof}
	It is easy to see that the algorithm described above runs in polynomial time. In particular, we can check if a player has an \eps-dominant action by simply going over all possible action profiles of its in-neighbours. We will prove it computes an \wn{\eps} for $\eps = 1 - \frac{2}{2^d + 1}$.
	By definition of \eps-dominance, the players whose actions were fixed in the first step satisfy the constraints of \eps-WSNE.
	Notice that after fixing any such player to play an \eps-dominant action, we get a smaller graphical game. Thus, it suffices to consider the graphical game we obtain after the end of the first step, and to show that if all (remaining) players mix uniformly, this is an \wn{\eps}.
	
	So, consider a player $i$ in the graphical game we obtain after the end of the first step, and denote its in-degree by $k := |\innei{i}| \leq d$. Since all in-neighbours of player $i$ are mixing uniformly, the expected payoff of player $i$ for playing action $0$ is $\sum_{\vba \in A(\innei{i})} R_{i}(0 ; \vba)/2^k$, and for playing action $1$, it is $\sum_{\vba \in A(\innei{i})} R_{i}(1 ; \vba)/2^k$. The constraints of \wn{\eps} for player $i$ are satisfied if both actions are $\eps$-best responses, i.e., if
	\begin{align*}
	    \left|\sum_{\vba \in A(\innei{i})} R_{i}(0 ; \vba)/2^k - \sum_{\vba \in A(\innei{i})} R_{i}(1 ; \vba)/2^k\right| \leq \eps.
	\end{align*}
	This can be rewritten as $ \frac{1}{2^k} \left| \sum_{\vba \in A(\innei{i})} f_{i}(\vba) \right| \leq \eps $, where, for any action profile $\vba \in A(\innei{i})$, we let
	\begin{align*}
		f_{i}(\vba) := R_{i}(0 ; \vba) - R_{i}(1 ; \vba).
	\end{align*}
	Note that since all payoffs lie in $[0,1]$, we always have $f_{i}(\vba) \in [-1,1]$.
	
	Let $M := \sum_{\vba \in A(\innei{i})} f_{i}(\vba)$. To prove the correctness of the algorithm, it suffices to prove that $\left| M \right| \leq 2^k - 1 - \eps$; since then 
	$ \frac{1}{2^k} \cdot |M| \leq 1- \frac{1+\eps}{2^k} \leq 1- \frac{1+\eps}{2^d} = \eps$. To see why this is indeed the case, observe the following. Since player $i$ does not have an \eps-dominant action (otherwise, it would have been removed in the first step), it means that there exist $\vba, \vba' \in A(\innei{i})$  such that 
	\begin{align}
		\label{eq:graphical-ub}
		f_{i}(\vba) > \eps \quad \text{and} \quad  f_{i}(\vba') < -\eps.
	\end{align}
	Given that $M$ is the sum of $2^k$ terms, each of them upper bounded by $1$, and at least one of them upper bounded by $-\eps$ by \eqref{eq:graphical-ub}, it follows that $M \leq 2^k-1 -\eps$. Similarly, since each term is also lower bounded by $-1$, and one of them is lower bounded by $\eps$, we also obtain that $M \geq -2^k-1 + \eps$. Thus, $\left| M \right| \leq 2^k - 1 - \eps$, as desired, and this completes the proof of correctness.
\end{proof}

\paragraph{\bf The lower bound.}
In this section we prove a matching lower bound for \cref{thm:graphical-wsne-UB},
which essentially proves that computing an $\eps$-WSNE in two-action graphical games is \ppad/-complete for every constant $\eps \in (0,1)$.

We will prove our result by a reduction from \pcircuit. For the remainder of this section, we fix $\eps < 1 - \frac{2}{2^d+1}$.
Given a \pcircuit instance, we build a two-action graphical game, where the two actions will be named $\zero$ and $\one$. For any given $d \geq 2$, the game will have in-degree at most $d$.
Each node $v$ of the \pcircuit instance will correspond to a player in the 
game -- the game will have some additional auxiliary players too -- whose strategy in any \eps-WSNE will encode a solution to the \pcircuit problem as follows. 
Given a strategy $s_v$ for the player that corresponds to node $v$, we define the assignment $\valonly$ for \pcircuit such that:
\begin{itemize}
	\item if $s_{v}(\zero) = 1$, then $\val{v} = 0$;
	\item if $s_{v}(\one) = 1$, then $\val{v} = 1$;
	\item otherwise, $\val{v} = \garbo$.
\end{itemize}

We now give implementations for \NOT, \AND, and \PURE gates. We note that, in
all three cases, the payoff received by player $v$ is only affected by the
actions chosen by the players representing the inputs to the (unique) gate $g$
that outputs to $v$. Thus, we can argue about the equilibrium condition at $v$
by only considering the players involved in gate $g$, and we can ignore all
other gates while doing this.

\paragraph{\bf \NOT gates.}
For a gate $g = (\NOT, u, v)$ -- where recall that $u$ is the input variable and $v$ is the output variable -- we create a gadget involving players $u$ and $v$, where player $v$ has a unique incoming edge from $u$. The payoffs of $v$ are defined as follows.
\begin{itemize}
	\item If $u$ plays $\zero$, then $v$ gets payoff 0 from playing $\zero$ and payoff
	1 from playing  $\one$.
	\item If $u$ plays $\one$, then $v$ gets 1 from $\zero$ and 0 from $\one$.
\end{itemize}
We claim that this gadget works correctly.
\begin{itemize}
	\item[-] If $s_{u}(\zero) = 1$, i.e., $u$ encodes 0, observe that for player $v$ action $\zero$ yields payoff 0, while action $\one$ yields payoff 1. Hence, by the constraints imposed by \eps-WSNE it must hold that	$s_v(\one)=1$, and thus $v$ encodes $1$.
	\item[-] Using identical reasoning, we can prove that if $s_{u}(\one) = 1$, then $s_{v}(\one) = 0$ in any $\eps$-WSNE.
\end{itemize}

\paragraph{\bf \AND gates.}
For a gate $g = (\AND, u, v, w)$ we create the following gadget with players $u, v$ and $w$, where $u$ and $v$ are the in-neighbors of $w$. The payoffs of $w$ are as follows.
\begin{itemize}
	\item If $s_u(\one) = 1$ and $s_v(\one) = 1$, then $w$ gets payoff 0 from playing $\zero$ and payoff 1 from playing $\one$.
	\item For any other action profile of $u$ and $v$, player $w$ gets 1 from $\zero$ and 0 from $\one$.
\end{itemize}
Next we argue that this gadget works correctly.
\begin{itemize}
	\item[-] If $s_u(\one) = 1$ and $s_v(\one) = 1$, i.e. both $u$ and $v$ encode 1, observe that for player $w$ action \zero yields payoff 0, while action \one yields payoff 1. Hence, by the constraints imposed by \eps-WSNE it must hold that	$s_w(\one)=1$, and thus $w$ encodes $1$. 
	\item[-] If at least one of $u$ or $v$ encodes 0, then for player $w$ action \zero yields expected payoff 1 while action \one yields expected payoff 0. Hence, by the constraints imposed by \eps-WSNE it must hold that	$s_w(\zero)=1$, and thus $w$ encodes $0$.
\end{itemize}

\paragraph{\bf \PURE gates.}
For a gate $g = (\PURE, u, v, w)$ we create the following gadget with $d+3$ players.
We introduce auxiliary players $u_1, u_2, \ldots, u_d$. Each player $u_i$ has a unique incoming edge from $u$. The idea is that in any \eps-WSNE, every player $u_i$ ``copies'' the strategy of player $u$.
\begin{itemize}
	\item If $u$ plays \zero, then $u_i$ gets 1 from \zero and 0 from \one.
	\item If $u$ plays \one, then $u_i$ gets 0 from zero and 1 from \one.
\end{itemize}

\begin{lemma}\label{lem:graphical-wsne-aux}
	At any \eps-WSNE the following hold for every $i \in [d]$: if $s_u(\zero)=1$, then $s_{u_i}(\zero)=1$;
	if $s_u(\one)=1$, then $s_{u_i}(\one)=1$.
\end{lemma}
\begin{proof}
	If $s_u(\zero)=1$, then for player $u_i$ action \zero yields payoff 1, while action \one yields payoff 0. Thus, the constraints of \eps-WSNE dictate that $s_{u_i}(\zero)=1$.
	If $s_u(\one)=1$, then for player $u_i$ action \zero yields payoff 0, while action \one yields payoff 1. Thus, the constraints of \eps-WSNE dictate that $s_{u_i}(\one)=1$.
\end{proof}
Next, we describe the payoff tensors of players $v$ and $w$; each one of them has in-degree $d$ with edges from all $u_1, u_2, \ldots, u_d$. In what follows, fix $\lambda := 1- \frac{2}{2^d+1}$. The payoffs of player $v$ are as follows.
\begin{itemize}
	\item If $v$ plays \zero and at least one of $u_1,\ldots, u_d$ plays \zero, then the payoff for $v$ is 1.
	\item If $v$ plays \zero and every one of $u_1,\ldots, u_d$ plays \one, then the payoff for $v$ is 0.
	\item If $v$ plays \one and at least one of $u_1,\ldots, u_d$ plays \zero, then the payoff for $v$ is 0.
	\item If $v$ plays \one and every one of $u_1,\ldots, u_d$ plays \one, then the payoff for $v$ is $\lambda$.
\end{itemize}
The payoffs of player $w$ are as follows.
\begin{itemize}
	\item If $w$ plays \zero and every one of $u_1,\ldots, u_d$ plays \zero, then the payoff for $w$ is $\lambda$.
	\item If $w$ plays \zero and at least one of $u_1,\ldots, u_d$ plays \one, then the payoff for $w$ is 0.
	\item If $w$ plays \one and every one of $u_1,\ldots, u_d$ plays \zero, then the payoff for $w$ is 0.
	\item If $w$ plays \one and at least one of $u_1,\ldots, u_d$ plays \one, then the payoff for $w$ is $1$.
\end{itemize}
We are now ready to prove that this construction correctly simulates a \PURE gate. We consider the different cases that arise depending on the value encoded by $u$.
\begin{itemize}
	\item[{\bf --}] $s_u(\zero)=1$, i.e., $u$ encodes 0. From \cref{lem:graphical-wsne-aux} we know that $s_{u_i}(\zero)=1$ for every $i \in [d]$. Then, we have the following for players $v$ and $w$.
	\begin{itemize}
		\item[-] Player $v$ gets payoff 1 from action \zero and payoff 0 from action \one. Hence, in an \eps-WSNE we get that $s_v(\zero)=1$, and thus $v$ encodes 0.
		\item[-] Player $w$ gets payoff $\lambda$ from action \zero and payoff 0 from action \one. Hence, since $\eps < \lambda$, in an \eps-WSNE we get that $s_w(\zero)=1$, and thus $w$ encodes 0.
	\end{itemize}
	\item[{\bf --}] $s_u(\one)=1$, i.e., $u$ encodes 1. From \cref{lem:graphical-wsne-aux} we know that $s_{u_i}(\one)=1$ for every $i \in [d]$. Then, we have the following for players $v$ and $w$.
	\begin{itemize}
		\item[-] Player $v$ gets payoff 0 from action \zero and payoff $\lambda$ from action \one. Hence, since $\eps < \lambda$, in an \eps-WSNE we get that $s_v(\one)=1$, and thus $v$ encodes 1.
		\item[-] Player $w$ gets payoff 0 from action \zero and payoff 1 from action \one. Hence, in an \eps-WSNE we get that $s_w(\one)=1$, and thus $w$ encodes 1.
	\end{itemize}
	\item[{\bf --}] $s_u(\one) \in (0,1)$, i.e., $u$ encodes \garbo. Then each one of the auxiliary players $u_1, \ldots, u_d$ can play a different strategy. For each $i \in [d]$, denote $s_{u_i}(\one) = p_i$, i.e., $p_i \in [0,1]$ is the probability player $u_i$ assigns on action \one. Let $P:= \prod_{i \in [d]} p_i$ and $Q := \prod_{i \in [d]} (1-p_i)$.
	Then, we have the following two cases.
	\begin{itemize}
		\item[-] $P \leq 2^{-d}$. Then we focus on player $v$: action \zero yields
		expected payoff $1-P \geq 1-2^{-d}$, while action \one yields expected payoff
		$\lambda \cdot P \leq \lambda \cdot 2^{-d}$. Then, since $\eps < \lambda$, we
		have
        \begin{align*}
        \lambda \cdot 2^{-d} & < \lambda \cdot 2^{-d} + \lambda - \eps  \\
        &= 1-2^{-d} - \eps,
        \end{align*}
        where the equality on the second line is true because our choice of
        $\lambda = 1 - \frac{2}{2^d + 1}$ satisfies the equation
        $\lambda \cdot 2^{-d}  + \lambda = 1 - 2^{-d}$. Thus, the expected payoff
        of \zero is strictly more than the expected payoff of \one plus \eps.
        This means that in any \eps-WSNE, $s_v(\one) = 0$, or equivalently,
        $s_v(\zero) = 1$, i.e., $v$ encodes~0.
        
		\item[-] $P > 2^{-d}$. Then, it holds that $Q < 2^{-d}$, because
		$P\cdot Q = \prod_{i \in [d]} p_i \cdot (1-p_i) \leq (1/4)^d$.
		In this case we focus on player $w$: action \zero yields payoff $\lambda \cdot
		Q < \lambda \cdot 2^{-d}$, while action \one yields payoff $1-Q > 1-2^{-d}$.
        We can then use the same argument as the previous case to obtain 
		  $\lambda \cdot 2^{-d} < 1-2^{-d} - \eps$.
		Therefore, the expected payoff
        of \one is strictly more than the expected payoff of \zero plus \eps. This implies
		that in any \eps-WSNE, $s_w(\zero) = 0$, or equivalently, $s_w(\one) = 1$,
		i.e., $w$ encodes 1.
	\end{itemize}
    Hence, either player $v$ encodes $0$, or $w$ encodes $1$. So when $s_u(\one) \in (0,1)$, we cannot have that both $v$ and
    $w$ encode $\garbo$, as required. 
\end{itemize}

From the above analysis, we have that in any \wn{\eps} of the
graphical game, with $\eps < 1 - \frac{2}{2^d + 1}$, the players correctly encode a solution to the \pcircuit instance.
\begin{theorem}
	\label{thm:wsne-graph}
	Computing an \wn{\eps} in two-action graphical games with maximum in-degree $d \geq 2$ is \ppad/-complete for any $\eps < 1 - \frac{2}{2^d + 1}$.
\end{theorem}

We can see that the constructed game is not win-lose since there is a payoff $\lambda \notin \{0,1\}$ in the gadget that simulates \PURE gates. However, if we set $\lambda = 1$, and use verbatim the analysis from above, we will get \ppad/-hardness for \eps-WSNE with  $\eps < 1 - \frac{1}{2^{d-1}}$.

\begin{theorem}
	\label{thm:wsne-graph-win-lose}
	Computing an \wn{\eps} in two-action  win-lose graphical games with maximum in-degree $d \geq 2$ is \ppad/-complete for any $\eps < 1 - \frac{1}{2^{d-1}}$.
\end{theorem}

Since for every constant $\eps < 1$ we can find a constant $d$ such that \cref{thm:wsne-graph-win-lose} holds, we get the following corollary.

\begin{corollary}
	\label{cor:wsne-PPAD-any-eps}
	For any constant $\eps < 1$, computing an \eps-WSNE in two-action win-lose graphical games is \ppad/-complete.
\end{corollary}

\subsubsection{Upper and Lower Bounds for Approximate Nash Equilibria}

We now show results for $\eps$-NEs in graphical games. 

\paragraph{\bf A straightforward upper bound.}

We first show that a $1/2$-NE can easily be 
found in any two-action graphical game. 

\begin{theorem}
    \label{thm:eacy-graphical-ne}
	There is a polynomial-time algorithm that finds a $1/2$-NE in a two-action
	graphical game.
\end{theorem}
\begin{proof}
	Let $\vbs$ be the strategy profile in
	which all players mix uniformly over their two actions.
	Then, for each player $i$ we have
	\begin{align*}
		u_i(\vbs) &\ge 1/2 \cdot \br_i(\vbsi) \\
        &= \br_i(\vbsi) - 1/2 \cdot \br_i(\vbsi)     \\
		&\ge \br_i(\vbsi) - 1/2,
	\end{align*}
	where the final inequality used the fact that $\br_i(\vbsi) \in [0, 1]$. 
	Thus, $\vbs$ is a $1/2$-NE.
\end{proof}

\paragraph{\bf The lower bound.}

We will show that computing an $\eps$-NE of a graphical game is \ppad/-hard
for any constant $\eps < 1/2$ by a reduction from \pcircuit. 

Fix any $\eps < 1/2$ and let $\delta \in (0,1/2]$ be such that $1/2-\delta = \eps$. Given a \pcircuit instance, 
we build a two-action graphical game, where the two actions will be named 
$\zero$ and $\one$.
Each node $v$ of the \pcircuit instance will be represented by a set of $k$
players in the game, named $v_1, v_2, \dots, v_k$, where we fix $k$ to be an odd 
number satisfying $k \ge \ln (3/\delta) \cdot 18/\delta^2$. 
Since $\delta$ is constant, we have that $k$ is also constant.

The strategies of these players will
encode a solution to the \pcircuit problem in the following way. Given a
strategy profile \vbs, we define the assignment $\valonly$ such that
\begin{itemize}
	\item If $s_{v_i}(\zero) \ge 1/2 + \delta/3$ for all $i$, then $\val{v} = 0$.
	\item If $s_{v_i}(\one) \ge 1/2 + \delta/3$ for all $i$, then $\val{v} = 1$.
	\item In all other cases, $\val{v} = \garbo$.
\end{itemize}

We now give implementations for \NOT, \AND, and \PURE gates. We note that, in
all three cases, the payoff received by player $v_i$ is only affected by the
actions chosen by the players representing the inputs to the (unique) gate $g$
that outputs to $v$. Thus, we can argue about the equilibrium condition at $v_i$
by only considering the players involved in gate $g$, and we can ignore all
other gates while doing this.

\paragraph{\bf \NOT gates.}

For a gate $g = (\NOT, u, v)$, we use the following construction, which
specifies the games that will be played between the set of players that represent
$u$ and the set of players that represent $v$. 

Each player $v_i$ has incoming edges from all players $u_1, u_2, \dots, u_k$, and $v_i$'s payoff tensor is set as follows. 
\begin{itemize}
	\item If strictly more than\footnote{Since $k$ is odd, it is not possible for
		exactly $k/2$ players to play $\zero$.} $k/2$ of the players $u_1$ through $u_k$ play
	$\zero$, then $v_i$ receives payoff $0$ for strategy $\zero$ and payoff 1 for
	strategy $\one$.
	\item If strictly less than $k/2$ of the players $u_1$ through $u_k$ play
	$\zero$, then $v_i$ receives payoff $1$ for strategy $\zero$ and payoff 0 for
	strategy $\one$.
\end{itemize}

We now show the correctness of this construction. We start with a technical
lemma that we will use repeatedly throughout the construction.

\begin{lemma}
	\label{lem:payoff2prob}
	Suppose that player $p$ has two actions $a$ and~$b$. 
	In any $(1/2 - \delta)$-NE,
	if the payoff of
	action $a$ is at most $\delta/3$, and the payoff of action $b$ is at least $1
	- \delta/3$, then player~$p$ must play action $b$ with probability at least $1/2 + \delta/3$. 
\end{lemma}
\begin{proof}
	Since action $b$ has payoff at least $1 - \delta/3$, we have that the best
	response payoff to $p$ is also at least $1 - \delta/3$. Hence, in any strategy
	profile \vbs that is an $(1/2 -
	\delta)$-NE, we have
	\begin{align*}
		u_{p}(\vbs) &\ge \br_{p}(\vbs) - 1/2 + \delta \\
		&\ge 1 - \delta/3 - 1/2 + \delta \\
		&= 1/2 + 2\delta/3.
	\end{align*}
	Since the payoff of $a$ is bounded by $\delta/3$, and the payoff of $b$ is bounded
	by $1$ (since all payoffs are in the range $[0, 1]$), we obtain
	\begin{align*}
		u_{p}(\vbs) &\le s_{p}(a) \cdot \delta/3 + s_{p}(b) \cdot 1 \\
		&=  (1 - s_{p}(b)) \cdot \delta/3 + s_{p}(b) \cdot 1 \\
		&= s_{p}(b) (1 - \delta/3) + \delta/3 \\
		&\le s_{p}(b) + \delta/3.
	\end{align*}
	Joining the two previous inequalities gives 
	$s_{p}(b) + \delta/3 \ge 1/2 + 2\delta/3$,
	and therefore 
	$s_{p}(b) \ge 1/2 + \delta/3$. 
\end{proof}

The following bound will also be used repeatedly in our proofs.

\begin{lemma}\label{prop:hoeffding}
    If $k \geq \ln (6/\delta) \cdot 9/2\delta^2$ and $N$ is a random variable
    distributed binomially according to $B(k, 1/2 + \delta/3)$, then $\Pr(N \le
    k/2) \leq \delta/6$. 
\end{lemma}

\begin{proof}
    We can use the standard Hoeffding bound~\cite{hoeffding1994probability} for the binomial distribution. This gives
    \begin{align*}
        \Pr(N \le k/2) &\leq \exp\left(-2k\left(1/2 + \delta/3 - \frac{k/2}{k}\right)^2\right) \\
        &= \exp(-\frac{2}{9} k \delta^2 ) \\
        &\leq \exp(-\ln(6/\delta) ) \\
        &= \delta/6.
    \end{align*}
\end{proof}

We can now prove that the \NOT gadget operates correctly. 

\begin{lemma}
	In every $(1/2-\delta)$-NE, the following properties hold.
	\begin{itemize}
		\item If the players representing $u$ encode $0$, then the players
		representing $v$ encode $1$.
		\item If the players representing $u$ encode $1$, then the players
		representing $v$ encode $0$.
	\end{itemize}
\end{lemma}
\begin{proof}
	Let $\vbs$ be a $(1/2-\delta)$-NE. 
	We begin with the first claim. Since the players representing $u$ encode a $0$,
	we have that $s_{u_j}(\zero) \ge 1/2 + \delta/3$ for all~$j \in [k]$. 
	
	We start by proving an upper bound on the payoff of action \zero for player
	$v_i$. The payoff of this action increases as the players $u_j$ place
	less probability on action \zero, so we can assume 
	$s_{u_j}(\zero) = 1/2 + \delta/3$, since this minimizes the payoff of \zero to
	$v_i$. 
	
	Under this assumption, let $N$ be the number of players $u_j$ that play action \zero, and observe that $N \sim B(k, 1/2 + \delta/3)$. 
	Since $k \geq \ln(3/\delta) \cdot 18/\delta^2 \geq \ln (6/\delta) \cdot 9/2\delta^2$, from \cref{prop:hoeffding} we have $\Pr(N \le k/2) \leq \delta/6 \leq \delta/3$.
	Hence the payoff of action $\zero$ to player $v_i$ is at most
	$\delta/3$, and therefore the payoff of action $\one$ to $v_i$ is at least $1 -
	\delta/3$.

	So we can apply \cref{lem:payoff2prob} to argue that $s_{v_i}(\one) \ge 1/2
	+ \delta/3$. Since this holds for all~$i$, we have that the players representing
	$v$ encode the value $1$ in the \pcircuit instance, as required.
	
	The second case can be proved in an entirely symmetric manner.
\end{proof}

\paragraph{\bf \AND gates.} 

For a gate $g = (\AND, u, v, w)$ we use the following construction. Each player
$w_i$ has in-degree $2k$ and has incoming edges from all of the players $u_1, u_2, \dots, u_k$, and all of the players $v_1, v_2, \dots, v_k$. 
The payoff tensor of $w_i$ is as follows.

\begin{itemize}
	\item If strictly more than $k/2$ of the players $u_1$ through $u_k$ play \one,
	and strictly more than $k/2$ of the players $v_1$ through $v_k$ play \one, 
	then $w_i$ receives payoff $0$ from action $\zero$ and payoff 1 from
	action $\one$.
	
	\item If this is not the case, then 
	$w_i$ receives payoff $1$ from action $\zero$ and payoff 0 from action $\one$. 
\end{itemize}

The correctness of this construction is shown in the following pair of lemmas.

\begin{lemma}
	In every $(1/2-\delta)$-NE of the game, if the players representing $u$ encode value
	$1$, and the players representing $v$ encode value $1$, then the players
	representing $w$ will encode value~$1$.
\end{lemma}
\begin{proof}
	Let $\vbs$ be a $(1/2-\delta)$-NE. From the assumptions about $u$ and $v$, we have that
	$s_{u_j}(\one) \ge 1/2 + \delta/3$ for all $j$, and $s_{v_j}(\one) \ge 1/2 +
	\delta/3$ for all $j \in [k]$.
	
	We start by proving an upper bound on the payoff of $\zero$ to $w_i$. Since this payoff
	decreases as the players $u_j$ and $v_j$ place more probability on $\one$, we
	can assume that 
	$s_{u_j}(\one) = 1/2
	+ \delta/3$ for all $j$, and $s_{v_j}(\one) = 1/2 + \delta/3$ for all $j$, since
	this maximizes the payoff of $\zero$ to $w_i$. 
	
	Let $N$ be the number of players $u_j$ that play \one, and observe that $N \sim B(k, 1/2 + \delta/3)$. Likewise, the number of players $v_j$ that
	play \one are distributed according to the same distribution. Using \cref{prop:hoeffding} and the fact that $k
	\geq \ln(3/\delta) \cdot 18/\delta^2 \geq \ln (6/\delta) \cdot 9/2\delta^2$, we get
	$\Pr(N \le k/2) \leq \delta/6$. Then, using the union bound, we can derive an
	upper bound of $\delta/3$ for the probability that strictly less than $k/2$ of
	the players $u_1$ through $u_k$ play \one or strictly less than $k/2$ of the
	players $v_1$ through $v_k$ play \one.
	
	Hence, the payoff of $\zero$ to player $w_i$ is at most $\delta/3$, and so the
	payoff of $\one$ to player $w_i$ is at least $1 - \delta/3$. We can then apply
	\cref{lem:payoff2prob} to argue that $w_i$ must play $\one$ with
	probability at least $1/2 + \delta/3$. Since this holds for all $i$, we have that
	$w_1, w_2, \ldots w_k$ correctly encode value $1$.
\end{proof}

\begin{lemma}
	In every $(1/2-\delta)$-NE of the game, if the players representing $u$ encode value
	$0$, or the players representing $v$ encode value $0$, then the players
	representing $w$ will encode value $0$.
\end{lemma}

\begin{proof}
	We will provide a proof for the case where the players representing $u$ encode
	value $0$, since the other case is entirely symmetric. 
	
	Let $\vbs$ be an $(1/2-\delta)$-NE. By assumption we have that $s_{u_j}(\zero) \ge 1/2 +
	\delta/3$ for all $j$.
	We start by proving an upper bound on the payoff of $\one$ to $w_i$. Since the
	payoff of this strategy decreases as $u_j$ places more probability on $\zero$,
	we can assume that 
	$s_{u_j}(\zero) = 1/2 + \delta/3$ for all $j$, since this minimizes the payoff of
	$\one$ to $w_i$. 
	
	Let $N$ be the number of players $u_j$ that play \one and observe that $N \sim B(k, 1/2 + \delta/3)$. By \cref{prop:hoeffding}, given that $k \geq \ln(3/\delta) \cdot 18/\delta^2 \geq \ln (6/\delta) \cdot 9/2\delta^2$, we get $\Pr(N \le k/2) \leq \delta/3$.
	Hence, the payoff of $\one$ to player $w_i$ is at most $\delta/3$, and so the
	payoff of $\zero$ to player $w_i$ is at least $1 - \delta/3$. Thus, we can apply
	\cref{lem:payoff2prob} to argue that $w_i$ must play $\zero$ with
	probability at least $1/2 + \delta/3$. Since this holds for all $i$, we have that
	$w_1, w_2, \ldots w_k$ correctly encode value~$0$.
\end{proof}

\paragraph{\bf \PURE gates.}

For a gate $g = (\PURE, u, v, w)$, we use the following construction. Each
player $v_i$ will have incoming edges from all players $u_1, u_2, \dots, u_k$, with 
their payoff tensor set as follows.
\begin{itemize}
	\item If strictly more than $(1/2 - \delta/6) \cdot k$ of the players $u_1$
	through $u_k$ play $\one$, then $v_i$ receives payoff $0$ from
	action $\zero$ and payoff $1$ from action $\one$. 
	
	\item If this is not the case, then $v_i$ receives payoff $1$ from action
	$\zero$ and payoff $0$ from action $\one$.
	
\end{itemize}
Each player $w_i$ 
will have incoming edges from all players $u_1, u_2, \dots, u_k$, with their payoff tensor set as follows.
\begin{itemize}
	\item If strictly more than $(1/2 + \delta/6) \cdot k$ of the players $u_1$
	through $u_k$ play $\one$, then $w_i$ receives payoff $0$ from 
	action $\zero$ and payoff $1$ from action $\one$. 
	
	\item If this is not the case, then $w_i$ receives payoff $1$ from action
	$\zero$ and payoff $0$ from action $\one$.
	
\end{itemize}
The following lemma analyzes the behavior of the players representing $v$.

\begin{lemma}
	\label{lem:nepurev}
	Let $\vbs$ be a $(1/2 - \delta)$-NE, and let $N$ be a random variable that denotes the number of
	players $u_i$ playing strategy $\one$ under $\vbs$. 
	\begin{itemize}
		\item If $E[N] \le (1/2 - \delta/3) \cdot k$, then the players representing $v$
		will encode value $0$. 
		
		\item If $E[N] \ge 1/2 \cdot k$, then the players representing $v$ will encode
		value $1$. 
	\end{itemize}
\end{lemma}
\begin{proof}
	For the first part of the statement, notice that $N$ follows the binomial distribution with $E[N] \le (1/2 - \delta/3) \cdot k$, by applying Hoeffding's
	inequality~\cite{hoeffding1994probability}, we get
	\begin{align*}
		\Pr(N \ge (1/2 - \delta/6) \cdot k) 
		&\le \Pr(N - E[N] \ge \delta/6 \cdot k) \\
		& \le \exp\left( \frac{-2 (k \cdot \delta/6)^2}{k} \right) \\
		& = \exp \left( -k \cdot \delta^2/18 \right) \\
		& \le \delta/3,
	\end{align*}
    where the last inequality holds due to the fact that $k \ge \ln (3/\delta) \cdot 18/\delta^2$.
	Therefore, the payoff of strategy $\one$ to $v_i$ is at most $\delta/3$, and so
	the payoff of strategy $\zero$ to $v_i$ is at least $1 - \delta/3$. Thus we can
	apply \cref{lem:payoff2prob} to argue that 
	$s_{v_i}(\zero) \ge 1/2 + \delta/3$. Since this holds for all $i$, we have that
	the players representing $v$ encode value $0$, as required.

    For the second part of the statement, we get 
    \begin{align*}
		\Pr(N < (1/2 - \delta/6) \cdot k) 
		&\le \Pr(N - E[N] < - \delta/6 \cdot k) \\
		& \le \exp\left( \frac{-2 (k \cdot \delta/6)^2}{k} \right) \\
		& \le \delta/3,
	\end{align*}
    where the argument is the same as in the previous case.
    Then, also similarly to the previous case, from \cref{lem:payoff2prob} we can deduce that for every $i$, $s_{v_i}(\one) \ge 1/2 + \delta/3$. Therefore, the players representing $v$ encode value $1$, as required.
\end{proof}

\noindent
The next lemma analyzes the behavior of the players representing $w$; its proof is omitted since it is entirely symmetric to the proof of \cref{lem:nepurev}.

\begin{lemma}
	Let $\vbs$ be a $(1/2 - \delta)$-NE, and let $N$ be a random variable that denotes the number of
	players $u_i$ playing strategy $\one$ under $\vbs$. 
	\begin{itemize}
		\item If $E[N] \le 1/2 \cdot k$, then the players representing $w$
		will encode value $0$. 
		
		\item If $E[N] \ge (1/2 + \delta/3) \cdot k$, then the players representing $w$ will encode
		value $1$. 
	\end{itemize}
\end{lemma}

\noindent
Combining the two previous lemmas, we can see that the construction correctly
simulates a \PURE gate.

\begin{itemize}
	\item If the players representing $u$ encode value $0$, then $E[N] \le 1/2 -
	\delta/3$, and so both the players representing $v$ and those representing $w$ encode value $0$. 
	\item If the players representing $u$ encode value $1$, then $E[N] \ge 1/2 +
	\delta/3$, and so both the players representing $v$ and those representing $w$ encode value $1$. 
	\item In all other cases we can verify that either the players representing $v$ or the players representing $w$ encode
	a $0$ or a $1$. Specifically, if $E[N] \le 1/2 \cdot k$ then the players representing $w$ encode value $0$,
	while if $E[N] \ge 1/2 \cdot k$, then the players representing $v$ encode value $1$.
\end{itemize}

\paragraph{\bf The hardness result.}

From the arguments given above, we have that in an $(1/2 - \delta)$-NE of the
graphical game, the players correctly encode a solution to the \pcircuit
instance.
Note also that, since \cref{cor:pcircuit-restricted} gives hardness for
\pcircuit even when the total degree of each node is 3, the graphical game that we
have built has total degree at most $3k$. Thus, the game can be built in polynomial time.

\begin{theorem}
    \label{thm:graphical-wsne-hardness}
	It is \ppad/-hard to find an $\eps$-NE in a 
	two-action graphical game for any constant $\eps < 1/2$.
\end{theorem}

In fact, the payoff entries in all gadgets are 0 or 1. Thus, our \ppad/-hardness result holds for win-lose games too.

\begin{corollary}
	For any constant $\eps < 1/2$, it is \ppad/-hard to find an $\eps$-NE in a 
	two-action win-lose graphical game.
\end{corollary}

\subsection{Threshold Games}
{\em Threshold games} were introduced by Papadimitriou and Peng~\cite{PapadimitriouP21-threshold-games} as an intermediate problem that was used to prove \ppad/-hardness for public good games on directed networks. Since then, threshold games have been used to prove \ppad/-hardness for throttling equilibria in auction markets~\cite{ChenKK21-throttling} and for the famous Hylland-Zeckhauser scheme~\cite{ChenCPY22-HZ-hardness}. 

\paragraph{\bf Threshold Game.} A threshold game $\mathcal{G}(V,E)$ is defined on a directed graph $G=(V,E)$. 
Every node $v \in V$ represents a player with strategy space $\val{v} \in
[0,1]$. For every $v\in V$, we use $N_v: = \{u \in V: (u,v) \in E \}$ to denote the set of nodes with outgoing edges towards $v$. Then, we say that $\valonly:=(\val{v} : v \in V) \in [0,1]^{|V|}$ is an $\eps$-approximate equilibrium if every $\val{v}$ satisfies
\begin{align*}
    \val{v} \in 
    \begin{cases} 
    [0, \eps], & \text{if}~ \sum_{u \in N_v} \val{u} > 1/2 + \eps; \\
    [1- \eps, 1], & \text{if}~ \sum_{u \in N_v} \val{u} < 1/2 - \eps;\\
    [0, 1], & \text{if}~ \sum_{u \in N_v} \val{u} \in [1/2 - \eps, 1/2 + \eps].
    \end{cases}
\end{align*}

\begin{definition}[\eps-\threshold]
Let $\eps \in [0,1]$. An instance of the \eps-\hspace{0pt}\threshold problem consists of a threshold game $\mathcal{G}(V,E)$. The task is to find an \eps-approximate equilibrium for $\mathcal{G}(V,E)$.
\end{definition}

Papadimitriou and Peng~\cite{PapadimitriouP21-threshold-games} proved that
there exists a constant $\eps' > 0$ such that $\eps'$-\threshold is
\ppad/-complete, via a reduction from \eps-\gcircuit, where $\eps' < \eps/5$.
Hence, from our hardness result for \gcircuit (\cref{thm:gcircuithard}) we
obtain that \eps-\threshold is \ppad/-complete for some \eps upper bounded by
$1/50$.

Using a direct reduction from \pcircuit we can show that the problem is in fact \ppad/-complete for any $\eps < 1/6$. Furthermore, this is tight, since we provide a simple algorithm solving the problem in polynomial time for $\eps = 1/6$. We first present the algorithm achieving the upper bound, and then the improved lower bound through a direct reduction from \pcircuit.

\subsubsection{Computing a \texorpdfstring{$\boldsymbol{1/6}$}{1/6}-Approximate Equilibrium}

\begin{theorem}
The $1/6$-\threshold problem can be solved in polynomial time.
\end{theorem}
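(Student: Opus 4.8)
The plan is to build a $\tfrac16$-approximate equilibrium directly and combinatorially, and the first observation is that it suffices to look for an assignment that uses only the three values $0,\tfrac12,1$. Indeed, under such an assignment the incoming load $\sum_{u\in N_v}\val{u}$ at a node $v$ equals $h_v+m_v/2$, where $h_v$ and $m_v$ are the numbers of in-neighbours of $v$ carrying value $1$ and $\tfrac12$ respectively; a quick case check shows this load is $<\tfrac13$ iff $h_v=m_v=0$, is $>\tfrac23$ iff $h_v\ge 1$ or $m_v\ge 2$, and otherwise lies in $[\tfrac13,\tfrac23]$. Writing $L,M,H$ for the three values, such an assignment is a $\tfrac16$-approximate equilibrium precisely when: every $L$-node has a non-$L$ in-neighbour; every $M$-node has exactly one $M$-in-neighbour and no $H$-in-neighbour; and every $H$-node has at most one $M$-in-neighbour and no $H$-in-neighbour. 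So the task reduces to computing such a ternary labelling in polynomial time, and the existence of an equilibrium (needed for the statement to make sense) will fall out of the construction.

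I would produce the labelling in phases. First a \emph{forced} phase: every source becomes $H$; then, repeatedly, a node all of whose in-neighbours are already labelled $L$ becomes $H$, and a node that already has an in-neighbour labelled $H$ becomes $L$. These rules cannot conflict --- a node forced to $H$ this way has load $0$, and a node forced to $L$ this way has load $\ge 1$ --- and they terminate, leaving an unlabelled set $W$ in which every node has an in-neighbour in $W$ while all its in-neighbours outside $W$ are labelled $L$; the induced instance on $W$ is thus again an instance of the problem, on a digraph of minimum in-degree at least $1$. Next, process the strongly connected components of $W$ in topological order: a singleton component is labelled according to its now-determined load (using that a ternary load is never in $(\tfrac16,\tfrac13)$, a self-loop can always be resolved consistently), and after each component is labelled one re-runs the forced phase on the remainder. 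What is left to handle is a strongly connected digraph $S$ on at least two vertices, with no in-edges from outside, in which every vertex has in-degree and out-degree at least $1$, and this is the heart of the matter.

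The main obstacle is labelling such an $S$. The delicate constraint throughout is that an $L$-label is valid only if its vertex keeps a non-$L$ in-neighbour, and this already defeats the naive greedy idea of choosing a vertex, declaring it $L$, and propagating: the resulting cascade ($L$ forces some successor to $H$, which forces \emph{its} successors to $L$, and so on) can end up stripping every in-neighbour of an $L$-labelled vertex --- a small five-vertex example already exhibits this, and the same example shows that labelling all of $S$ with $M$ also fails. The real content of the proof is therefore a lemma: every digraph of minimum in-degree at least $1$ admits a valid ternary labelling, computable in polynomial time. I expect to establish it by induction, reducing $S$ to a strictly smaller instance --- for example by seeding with an induced cycle $C$ of $S$ (a shortest cycle is chordless, hence induced), labelling $C$ with $M$, observing that every vertex having an in-neighbour on $C$ now has a permanently-$M$ in-neighbour and can be safely set to $L$, and recursing on the rest; alternatively by contracting an in-degree-$1$ vertex into its predecessor. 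The technical heart is making the induction go through while maintaining the invariant that every $L$-labelled vertex retains a non-$L$ in-neighbour and that no $M$- or $H$-labelled vertex --- in particular no vertex of the seed cycle --- later acquires a forbidden in-neighbour; once this is in place the polynomial running time is immediate, and the earlier steps (the reduction to ternary labelling, the forced phase, and the topological processing of components) are routine bookkeeping of loads.
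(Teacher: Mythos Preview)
Your characterization of $\{0,\tfrac12,1\}$-valued equilibria is correct, but the proposal has a genuine gap: the central lemma --- that every digraph of minimum in-degree at least $1$ admits a valid ternary labelling --- is never proved. You say yourself that ``the technical heart is making the induction go through'' and then stop. The concrete inductive idea you sketch (seed an induced cycle $C$ with $M$, set out-neighbours of $C$ to $L$, recurse on the rest $R$) does not close: a vertex $u\in R$ can be an \emph{in}-neighbour of some $c\in C$ without itself having an in-neighbour on $C$, so $u$ lands in $R$ and the recursion is free to label it $M$ or $H$; either choice violates the $M$-constraint at $c$ (an $M$-node must have exactly one $M$ in-neighbour and no $H$ in-neighbour). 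Nothing in your description prevents this, and pushing the constraint ``in-neighbours of $C$ must be $L$'' into the recursion destroys the clean inductive structure, since those new $L$-vertices in turn need non-$L$ in-neighbours. The alternative idea of contracting an in-degree-$1$ vertex is not developed at all.

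The deeper issue is that your self-imposed restriction to the palette $\{0,\tfrac12,1\}$ is unnecessary and is precisely what creates the difficulty. The paper's algorithm uses the value $\tfrac16$ and becomes almost trivial: assign $\tfrac16$ to every node of in-degree $\ge 2$ and delete its in-edges; in the remaining graph (in-degree $\le 1$) assign $\tfrac12$ along each directed cycle; process what is left topologically, assigning $1$ when the incoming load is at most $\tfrac23$ and $\tfrac16$ otherwise. The point is that $\tfrac16$ simultaneously lies in $[0,\eps]$ (so it is valid whenever the ``high load'' case does not force a large value) \emph{and} is large enough that any node with two or more in-neighbours automatically receives load $\ge 2\cdot\tfrac16=\tfrac13=0.5-\eps$, so the ``low load'' case never triggers for it. This single observation replaces your entire ternary-labelling lemma. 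With the palette $\{0,\tfrac12,1\}$ there is no value playing this dual role, which is why your route requires a structural lemma you have not established.
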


\begin{proof}
Let $G=(V,E)$ be the threshold game graph. The algorithm proceeds as follows.
\begin{enumerate}
    \item For every node $v \in V$ with in-degree at least two, i.e., $|N_v| \geq 2$, set $\val{v} := 1/6$. Remove all incoming edges of $v$ from the graph. Note that every node in the graph now has in-degree at most one. 
    \item Repeat until there are no more directed cycles: Pick a directed cycle, and for each node $v$ on the cycle, set $\val{v} := 1/2$. Remove all the edges on the cycle from the graph.
    \item Repeat until all nodes have been assigned a value: Since the graph is now acyclic, there exists a node $v \in V$ with unassigned value such that $v$ has in-degree zero, or such that its incoming neighbour $w \in N_v$ has already been assigned a value. If $\sum_{u \in N_v} \val{u} > 1/2 + 1/6$, set $\val{v} := 1/6$. Otherwise, set $\val{v} := 1$.
\end{enumerate}
The algorithm clearly runs in polynomial time. It remains to prove that all nodes satisfy the $\eps$-approximate equilibrium condition for $\eps = 1/6$. Note that when we assign a value to a node, all of its original incoming edges are still present in the graph. Furthermore, we only assign a value to a node once. We proceed by considering the nodes in each step separately.

For nodes which get assigned in Step 3, the $1/6$-approximate equilibrium condition is immediately satisfied because of the way in which we pick the value to assign. For a node which gets assigned in Step 2, note that it must have in-degree one, and thus its unique incoming neighbour also lies on the same directed cycle, and also gets assigned value $1/2$, which implies that the equilibrium condition is satisfied. Finally, note that the algorithm only assigns values $\geq 1/6$ to nodes. As a result, for any node $v$ which gets assigned in Step 1, we have $\sum_{u \in N_v} \val{u} \geq |N_v|/6 \geq 1/3$, since $|N_v| \geq 2$. In particular, assigning value $1/6$ to $v$ satisfies the $1/6$-approximate equilibrium condition, because $1/3 \geq 1/2-1/6$.
\end{proof}

\subsubsection{Hardness of \texorpdfstring{$\boldsymbol{\eps}$}{ε}-Approximate Equilibrium for any \texorpdfstring{$\boldsymbol{\eps < 1/6}$}{ε < 1/6}}

\begin{theorem}
\label{thm:threshold-hard}
\eps-\threshold is \ppad/-complete for every $\eps < 1/6$, even when the in- and out-degree of each node is at most two.
\end{theorem}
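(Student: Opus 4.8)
The plan is to reduce from \pcircuit, starting from the restricted variant of \cref{cor:pcircuit-restricted} with gate set $\{\PURE,\NOT,\NOR\}$, since the bound of $2$ on the out-degree of every node there will be inherited by the threshold game we build; membership of \eps-\threshold in \ppad/ is standard (threshold games have equilibria by a Brouwer argument, and one may also simply invoke \cref{thm:gcircuithard}), so the work lies in the hardness direction. Given an instance $(V,G)$ of \pcircuit, I would build a threshold game containing one node for every variable of $V$ plus a constant number of auxiliary nodes per gate gadget, and translate values via the encoding $0\leftrightarrow[0,\eps]$, $1\leftrightarrow[1-\eps,1]$, $\garbo\leftrightarrow(\eps,1-\eps)$, so that any \eps-approximate equilibrium yields (in polynomial time) a satisfying assignment. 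It then suffices to implement each gate type by a gadget whose every \eps-equilibrium respects the truth table of the corresponding \pcircuit gate.

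The logical gates are immediate from the definition of a threshold game. For a gate $(\NOT,u,v)$ make $u$ the unique in-neighbour of $v$: if $\val u\le\eps$ then the incoming sum equals $\val u$ and is below $0.5-\eps$, forcing $\val v\ge1-\eps$, while if $\val u\ge1-\eps$ the incoming sum exceeds $0.5+\eps$, forcing $\val v\le\eps$; this is valid for all $\eps<1/4$, and nothing is required of $\val v$ when $\val u$ is garbage. For a gate $(\NOR,u,v,w)$ make $u$ and $v$ the two in-neighbours of $w$: if $\val u,\val v\le\eps$ the incoming sum is at most $2\eps$, which is below $0.5-\eps$ exactly when $\eps<1/6$, so $\val w\ge1-\eps$; and if $\val u\ge1-\eps$ or $\val v\ge1-\eps$ the incoming sum is at least $1-\eps>0.5+\eps$, so $\val w\le\eps$. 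This single calculation is precisely where the bound $1/6$ in the statement comes from, matching the algorithm of the previous subsection.

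The \PURE gadget is the crux and I expect it to be the main obstacle. One needs a gadget with input node $u$ and two output nodes $v,w$ such that in every \eps-equilibrium: $\val u\le\eps$ forces $\val v,\val w\le\eps$; $\val u\ge1-\eps$ forces $\val v,\val w\ge1-\eps$; and for \emph{every} value of $\val u$ at least one of $\val v,\val w$ lands in $[0,\eps]\cup[1-\eps,1]$. The obstruction is that a threshold node is completely unconstrained whenever its incoming sum lies in the band $[0.5-\eps,0.5+\eps]$; because every edge has weight one and the comparison point is fixed at $0.5$, this ``free band'' tends to occur at the same value of $\val u$ for any single branch built out of $u$, and its slack propagates downstream, so a naive construction would leave \emph{both} outputs unconstrained over the same sub-interval of $(\eps,1-\eps)$. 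The plan is to design the two branches so that their ``undetermined ranges'' of $\val u$ are disjoint sub-intervals of $(\eps,1-\eps)$ — exploiting the constant nodes a threshold game provides (a source node is forced to value $\ge1-\eps$, and its negation to value $\le\eps$) together with a few auxiliary nodes to separate the effective decision points of the two branches — while still arguing that a pure input drives both branches to the matching pure value. Proving that no \eps-equilibrium of this gadget evades the three implications, for all $\eps<1/6$, is the technical heart of the argument.

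Finally I would assemble the pieces and verify the structural claim. The \NOT and \NOR gadgets have in-degree at most two by construction, and, since we reduce from \cref{cor:pcircuit-restricted}, each variable node has out-degree at most two (as a \PURE input its two out-edges already use up this budget, consistently with that corollary); the auxiliary nodes of the \PURE gadget are laid out with in- and out-degree at most two each, and any residual violation of the degree bound within a gadget can be removed, as in the win-lose polymatrix reduction, by splicing in pairs of \NOT nodes, which do not change the encoded value. Composing the gadgets then yields a threshold game with all in- and out-degrees at most two, together with a polynomial-time map from its \eps-approximate equilibria to solutions of $(V,G)$, establishing \ppad/-hardness of \eps-\threshold for every $\eps<1/6$.
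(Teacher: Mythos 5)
Your reduction source, encoding, and the \NOT and \NOR gadgets coincide exactly with the paper's (a single edge $u\to v$ for \NOT, two edges into $w$ for \NOR, with the constraint $2\eps<0.5-\eps$ giving the $1/6$ threshold), and you correctly locate the crux at the \PURE gadget. But that is precisely where the proposal has a genuine gap: you never exhibit the gadget, and you explicitly defer ``the technical heart of the argument.'' Worse, the mechanism you sketch --- separating the two branches' undetermined ranges of $\val{u}$ by splicing in constant nodes --- cannot work in this model. A source node is pinned to $[1-\eps,1]$, so adding it as a second in-neighbour saturates the incoming sum above $0.5+\eps$ and erases all dependence on $u$; adding a constant-$0$ in-neighbour (value somewhere in $[0,\eps]$, not under your control) shifts the decision band by at most $\eps$, while disjointness of two bands of width $2\eps$ would require a shift exceeding $2\eps$. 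With unit edge weights, a fixed comparison point of $0.5$, and in-degree at most two, there is no way to manufacture an intermediate constant, so the two branches' free bands necessarily overlap under your scheme.

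The paper's gadget separates the decision points by a different device: it feeds the \emph{sum of two copies} of the input into a \NOR. Concretely, $w$ is reached from $u$ by a chain of four \NOT edges $u\to a\to b\to\cdot\to w$ (so $b$ is a double negation of $u$), and $v$ is the \NOT of a node $c$ whose two in-neighbours are $u$ and $b$. Pure inputs propagate correctly down both branches ($\val{u}+\val{b}\le 2\eps$ forces $\val{c}\ge 1-\eps$, and $\val{u}+\val{b}\ge 2(1-\eps)$ forces $\val{c}\le\eps$). For the totality condition one argues in the contrapositive: if $\val{w}\in(\eps,1-\eps)$, then walking back up the \NOT chain pins both $\val{b}$ and $\val{u}$ into $[0.5-\eps,0.5+\eps]$, whence $\val{u}+\val{b}\ge 1-2\eps>0.5+\eps$ (again using $\eps<1/6$), forcing $\val{c}\le\eps$ and $\val{v}\ge 1-\eps$. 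In effect the second branch's decision band sits at $\val{u}\approx 0.25$ rather than $0.5$ because the relevant sum has slope $2$ in $\val{u}$ --- this is the idea your proposal is missing. Your degree bookkeeping at the end is fine once such a gadget is in hand, but without it the hardness claim is not established.
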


\begin{proof}
In order to prove that \eps-\threshold is \ppad/-hard for $\eps < 1/6$, we will reduce from \pcircuit that uses the gates \NOT, \NOR, and \PURE (we include the \NOT gate because we will later make use of \cref{cor:pcircuit-restricted} to argue about the degrees of nodes). We will encode $0$ values in the \pcircuit problem as values in the range $[0,\eps]$ in \threshold, while $1$ values will be encoded as values in the range $[1-\eps, 1]$. Then, each gate of \pcircuit will be simulated by the corresponding gadget depicted in \cref{fig:threshold-gadgets}.

\begin{figure}
	\centering
	\begin{tikzpicture}[roundnode/.style={circle, draw=black, inner sep=0, minimum size=3mm}]

\node[roundnode,label=above left:{$u$}] (NOTu) at (-3.5,0.5) {};
\node[roundnode,label=above right:{$v$}] (NOTv) at (-2,0.5) {};

\draw[-{Latex[length=2mm]}] (NOTu) -- (NOTv);

\node[roundnode,label=above left:{$u$}] (NORu) at (1,1.5) {};
\node[roundnode,label=below left:{$v$}] (NORv) at (1,-0.5) {};
\node[roundnode,label=above right:{$w$}] (NORw) at (2.5,0.5) {};

\draw[-{Latex[length=2mm]}] (NORu) -- (NORw);
\draw[-{Latex[length=2mm]}] (NORv) -- (NORw);

\node[roundnode,label=above left:{$u$}] (PUREu) at (5.5,2) {};

\node[roundnode,label=left:{$a$}] (PUREa) at (5.5,0.5) {};
\node[roundnode,label=below left:{$b$}] (PUREb) at (5.5,-1) {};
\node[roundnode] (PUREbw) at (7,-1) {};
\node[roundnode,label=below right:{$w$}] (PUREw) at (8.5,-1) {};

\node[roundnode,label=above right:{$c$}] (PUREc) at (7,2) {};
\node[roundnode,label=above right:{$v$}] (PUREv) at (8.5,2) {};

\draw[-{Latex[length=2mm]}] (PUREu) -- (PUREa);
\draw[-{Latex[length=2mm]}] (PUREa) -- (PUREb);
\draw[-{Latex[length=2mm]}] (PUREb) -- (PUREbw);
\draw[-{Latex[length=2mm]}] (PUREbw) -- (PUREw);

\draw[-{Latex[length=2mm]}] (PUREu) -- (PUREc);
\draw[-{Latex[length=2mm]}] (PUREb) -- (PUREc);
\draw[-{Latex[length=2mm]}] (PUREc) -- (PUREv);

\node at (-2.75,-2.2) {\large \NOT gadget};
\node at (1.8,-2.2) {\large \NOR gadget};
\node at (7,-2.2) {\large \PURE gadget};

\end{tikzpicture}
	\caption{The gadgets used for \threshold.}\label{fig:threshold-gadgets}
\end{figure}

\paragraph{\bf \NOT gates.} A $(\NOT, u, v)$ gate will be simulated by the \NOT
gadget depicted in \cref{fig:threshold-gadgets}. We argue that the gadget works
for any $\eps < 1/4$, and thus in particular for any $\eps < 1/6$.
\begin{itemize}
    \item If $\val{u} \leq \eps < 1/2 - \eps$, then according to the definition of threshold games in any \eps-approximate equilibrium it must hold that $\val{v} \geq 1-\eps$. Hence, when the input encodes a $0$, the output value will encode a $1$.
    \item If $\val{u} \geq 1 - \eps > 1/2 + \eps$, then according to the definition of threshold games in any \eps-approximate equilibrium it must hold that $\val{v} \leq \eps$. Hence, when the input encodes a $1$, the output value will encode a $0$.
\end{itemize}

\paragraph{\bf \NOR gates.} A $(\NOR, u, v, w)$ gate will be simulated by the \NOR gadget depicted in \cref{fig:threshold-gadgets}. We claim that the gadget works for any $\eps < 1/6$.
\begin{itemize}
    \item If both $\val{u}$ and $\val{v}$ are in $[0, \eps]$, then $\val{u} + \val{v} \leq 2 \eps < 1/2 - \eps$. Hence, in any \eps-approximate equilibrium it must hold that $\val{w} \geq 1- \eps$. In other words, when both input values encode $0$s, then the output value of the \NOR gadget will encode $1$.
    \item If at least one of $\val{u}$ and $\val{v}$ is in $[1-\eps, 1]$, then $\val{u} + \val{v} \geq 1-\eps > 1/2 + \eps$. Hence, in any \eps-approximate equilibrium it must hold that $\val{w} \leq \eps$. In other words, when at least one input value encodes a $1$, then the output value of the \NOR gadget will encode $0$.
\end{itemize}

\paragraph{\bf \PURE gates.} A $(\PURE, u, v, w)$ gate will be simulated by the
\PURE gadget depicted in \cref{fig:threshold-gadgets}. We argue that the gadget works for any $\eps < 1/6$.
\begin{itemize}
    \item If $\val{u} \leq \eps$, then, in particular, $\val{u} < 1/2 - \eps$, and thus $\val{a} \geq 1-\eps$. Then, since $\val{a} \geq 1-\eps > 1/2+\eps$, it follows that $\val{b} \leq \eps$. By applying the same arguments again, we obtain that $\val{w} \leq \eps$. Furthermore, since $\val{u}+\val{b} \leq 2\eps < 1/2-\eps$, it follows that $\val{c} \geq 1-\eps$, and thus $\val{v} \leq \eps$. In other words, when the input value encodes a $0$, then both output values of the \PURE gadget will encode $0$s.
    \item If $\val{u} \geq 1-\eps$, then it is not hard to see that $\val{b} \geq 1-\eps$, and thus $\val{w} \geq 1-\eps$. In addition, since $\val{u} + \val{b} \geq 2(1-\eps) > 1/2+\eps$, it follows that $\val{c} \leq \eps$, and thus $\val{v} \geq 1-\eps$. In other words, when the input value encodes a $1$, then both output values of the \PURE gadget will encode $1$s.
    
    \item Finally, it remains to prove that no matter what the input $u$ is, at least one of the two output values $v, w$ of the \PURE gadget will encode a pure bit value. Assume that $w$ does not encode a pure bit value, i.e., $\val{w} \in (\eps, 1-\eps)$. We will show that in that case it must be that $\val{v} \notin (\eps, 1-\eps)$. Since $\val{w} \in (\eps, 1-\eps)$, it follows that $\val{b} \in [1/2-\eps,1/2+\eps]$. Similarly, since $\val{b} \in [1/2-\eps,1/2+\eps] \subseteq (\eps,1-\eps)$, it also follows that $\val{u} \in [1/2-\eps,1/2+\eps]$. As a result, we obtain that $\val{u} + \val{b} \geq 2(1/2-\eps) > 1/2+\eps$, which implies that $\val{c} \leq \eps$, and thus $\val{v} \geq 1-\eps$, as desired.
\end{itemize}

\paragraph{\bf The reduction.}
Given a \pcircuit instance over variables $V$, we produce a \threshold instance by replacing each gate with the corresponding gadget depicted in \cref{fig:threshold-gadgets}. Then, given a solution $\valonly$ to the \eps-\threshold instance with $\eps < 1/6$, we can produce a solution $\valtwoonly$ for \pcircuit as follows. For every $v \in V$:
\begin{itemize}
    \item if $\val{v} \leq \eps$, then we set $\valtwo{v} = 0$;
    \item if $\val{v} \geq 1-\eps$, then we set $\valtwo{v} = 1$;
    \item else we set $\valtwo{v} = \garbo$.
\end{itemize}
The correctness of the reduction follows from the arguments above. Finally, by using \cref{cor:pcircuit-restricted}, the constructed instances of \threshold will all satisfy that the in- and out-degree of every node is at most two. Indeed, the new ``auxiliary'' nodes introduced by the \PURE gadget all satisfy this, and the contribution of each gadget to the in- and out-degree of ``original'' nodes is exactly the same as the contribution of the corresponding \pcircuit gates in the interaction graph.
\end{proof}

\appendix

\section{Further Details on \pcircuit}

\subsection{On the Definition of \pcircuit}\label{app:sec:pcircuit-discussion}

In this section, we explore different ways to weaken the definition of \pcircuit, and show how, in each case, the problem is no longer \ppad/-hard.

\paragraph{\bf No \PURE gate.}
If we allow all gates, except the \PURE gate (so only \NOT, \COPY, \OR, \AND, \NOR, \NAND), then the problem becomes polynomial-time solvable. Indeed, it suffices to assign value $\garbo$ to all the nodes.

\paragraph{\bf No Negation.}
If we allow all gates, except the ones that perform some kind of negation (so only \PURE, \COPY, \OR, \AND), then the problem becomes polynomial-time solvable. Indeed, assigning the value $1$ to each node (or, alternatively, the value $0$ to each node) always yields a solution. More generally, we can make the following observation: for any set of gates that can be implemented by monotone functions, the problem lies in the class \pls/, and is thus unlikely to be \ppad/-complete. Indeed, as already mentioned in \cref{sec:pcircuitdef}, we can view any \pcircuit instance with $n$ nodes as a function $F: [0,1]^n \to [0,1]^n$, where each gate is replaced by a continuous function that is consistent with the gate-constraint. Then any fixed point of $F$ yields a solution to the \pcircuit instance. If each of the gates can be replaced by a continuous \emph{monotone} function, then the problem of finding a fixed point of $F$ is an instance of Tarski's fixed point theorem, which is known to lie in \pls/~\cite{EtessamiPRY20-Tarski}.

\paragraph{\bf No robustness.}
If we allow all gates (\PURE, \NOT, \COPY, \OR, \AND, \NOR, \NAND), but we drop the robustness requirement from the logical gates \OR, \AND, \NOR, \NAND, then the problem can be solved in polynomial time.

Construct the interaction graph $G$ of the \pcircuit instance, as defined in \cref{sec:pcircuit-more-gates}. In the first stage of the algorithm, as long as there exists a directed cycle in graph $G$, we do the following:
\begin{enumerate}
    \item Pick an arbitrary directed simple cycle of $G$.
    \item For each node $u$ on the simple cycle $C$, assign value $\garbo$ to it, i.e., $\val{u} := \garbo$.
    \item Remove all nodes on the simple cycle $C$ from the graph $G$, including all their incident edges.
\end{enumerate}
At the end of this procedure, $G$ no longer contains any cycles. In the second stage of the algorithm we then repeat the following, until $G$ is empty:
\begin{enumerate}
    \item Pick any source $u$ of $G$ (which must exist, since $G$ contains no cycles).
    \item Let $g$ be the (unique) gate that has $u$ as output. Since $u$ does not have incoming edges in $G$, all inputs of $g$ have already been assigned a value. If $u$ is the only output of $g$, then assign a value to $u$ that satisfies the gate, and remove $u$ and its edges from $G$. If the gate $g$ also has another output $v$, then $g$ is a \PURE gate, and there are two cases:
    \begin{itemize}
        \item If $v$ has not been assigned a value yet, then assign values to both $u$ and $v$ such that the gate is satisfied. Remove $u$ and $v$ and their edges from $G$.
        \item If $v$ has already been assigned a value, then this happened in the first stage of the algorithm, and both $v$ and the input to $g$ were assigned value $\garbo$ (if $v$ lies on a simple cycle $C$, then so does the input of gate $g$, because $v$ has a single incoming edge in the original interaction graph). In that case, we assign value $0$ or $1$ to $u$ and $g$ is satisfied. Remove $u$ and its edges from $G$ ($v$ was already removed in the first stage).
    \end{itemize}
\end{enumerate}

Since a node is removed from $G$ only when it is assigned a value, all nodes have been assigned a value at the end of the algorithm. We argue that all gates are satisfied. Clearly, any gate that has an output node that is still present in $G$ after the end of the first stage will be satisfied (by construction of the second stage). Thus, it remains to consider any gate $g$ such that all its output nodes are removed in the first stage. There are three cases:
\begin{itemize}
    \item $g$ is a \NOT or \COPY gate: if the output lies on a simple cycle $C$, then so does the input. Both are thus assigned value $\garbo$ and the gate is satisfied.
    \item $g$ is a (non-robust) \OR, \AND, \NOR, or \NAND gate: if the output lies on a simple cycle $C$, then so does at least one of its inputs. Thus, at least one input is also assigned value $\garbo$, and the gate is satisfied. Note that we crucially used the non-robustness of the gate here.
    \item $g$ is a \PURE gate: if an output $v$ of $g$ lies on a simple cycle $C$, then the input $u$ of $g$ also lies on $C$, and so both are removed at the same time from $G$. However, the other output $w$ of $g$ cannot lie on that same simple cycle $C$, and after $u$ is removed, $w$ does not have an incoming edge anymore and will thus not be removed in the first stage. Thus, $g$ cannot be a \PURE gate.
\end{itemize}

\paragraph{\bf Adding Constant Gates.}
It is a natural idea to try to add constant gates in an attempt to make the problem hard for a set of gates for which the problem is not \ppad/-hard. By constant gates, we mean a \textsf{0}-gate which has one output and no input, and which enforces that the value of its output node always be $0$, and a \textsf{1}-gate defined analogously. No matter which subset of gates $S \subseteq \{\PURE, \NOT, \COPY, \OR, \AND, \NOR, \NAND\}$ we use, adding the constant gates \textsf{0} and \textsf{1} does not change the complexity of the problem. Indeed, it is easy to see that the constants can be ``propagated'' through the circuit. If a constant is an input to a \PURE, \NOT, or \COPY gate, then we can replace the output(s) of that gate by constants that satisfy the gate-constraint. If a constant is an input to an \textsf{X} gate, where $\textsf{X} \in \{\OR, \AND, \NOR, \NAND\}$, then there are two cases: either we can replace the output by a constant, or we can replace the gate by an \textsf{X} gate with the same input twice (namely, the other input). In order to create a copy of the other input, we can either use the \COPY gate, or the \NOT gate, or the \PURE gate. If none of these three gates lies in $S$, then \pcircuit with gates $S \cup \{\textsf{0}, \textsf{1}\}$ is polynomial-time solvable, since it is polynomial-time solvable with gates $S \cup \{\NOT, \textsf{0}, \textsf{1}\}$ (by the propagation argument, and the lack of \PURE gate).

\subsection{More Structure: Proof of Corollary~\ref*{cor:pcircuit-restricted}}\label{sec:app:proof-pcircuit-restricted}

Let $\X$ and $\Y$ be as required by the statement of \cref{cor:pcircuit-restricted}. By \cref{cor:pcircuit-gates} it immediately follows that \pcircuit with gates $\{\PURE, \X, \Y\}$ is \ppad/-complete. Now consider an instance of \pcircuit with gates $\{\PURE, \X, \Y\}$. We will explain how to turn it into an instance that satisfies the three conditions of \cref{cor:pcircuit-restricted}. We proceed in three steps, where each step adds more structure to the instance, without destroying any of the structure introduced in a previous step.

\textbf{Step 1.} First of all, for every node $u$ that is used as an input to $k$ gates $g_1, \dots, g_k$, we construct a binary tree consisting of \PURE gates that is rooted at $u$ and has leaves $u_1, \dots, u_k$. Then, we modify each gate $g_i$ so that it uses $u_i$ as input instead of $u$. As a result, the instance now satisfies that every node is the input of at most one gate. Note that for any solution $\valonly$, if $\val{u} \in \{0,1\}$, then $\val{u_i} = \val{u}$ for all $i$. On the other hand, if $\val{u} = \garbo$, then the $\val{u_i}$ could have different values. Despite this, we argue that this is a valid reduction, namely that restricting $\valonly$ to the original nodes yields a solution to the original instance. To see this, it suffices to notice that, no matter what the type $T$ of gate $g_i$ is, if $u$ satisfies the condition on the left of the ``$\implies$'' sign in the definition of $T$, then the condition is still satisfied if we replace $u$ by $u_i$.

\textbf{Step 2.} After the first step, the instance now satisfies that every node is the input of at most one gate, in particular $(d_{in},d_{out}) \in \{1,2\} \times \{0,1,2\}$. In the next step, we will further enforce that $(d_{in},d_{out}) \neq (2,2)$ and make the interaction graph bipartite. To do this, we replace every node $u$ by a small gadget that consists of nodes $u_a$, $u_b$, and some other auxiliary nodes. If $u$ was the output of some gate $g$, then $g$ now has output $u_a$ instead. Similarly, if $u$ was the input of some gate $g'$, then $g'$ has input $u_b$ instead. The gadget will ensure that $\val{u_a} \in \{0,1\} \implies \val{u_b} = \val{u_a}$. Thus, by the same argument as in the previous paragraph, restricting $\valonly$ to $\{u_a: u \in V\}$ yields a solution to the original instance. In the case $\X = \COPY$, the gadget is simply implemented by using a \COPY gate with input $u_a$ and output $u_b$, i.e., a $\COPY(u_a,u_b)$ gate. Applying this transformation to every node $u$ achieves two things: (i) the interaction graph only contains nodes with $(d_{in},d_{out}) \in (\{1,2\} \times \{0,1,2\}) \setminus \{(2,2)\}$, (ii) the interaction graph is bipartite ($A = \{u_a: u \in V\}$, $B = \{u_b: u \in V\}$). In the case $\X = \NOT$, the gadget uses auxiliary (new) nodes $v, w, w'$ and consists of the gates $\NOT(u_a,v)$, $\PURE(v,w,w')$, and $\NOT(w,u_b)$. It is easy to check that this gadget satisfies the same properties as before, in particular (i) and (ii). For (ii), note that we add nodes $w,w'$ to set $A$ and node $v$ to set $B$ to make the graph bipartite.

\textbf{Step 3.} It remains to enforce that every node is the input of \emph{exactly} one gate. Currently, this only holds with ``at most'' instead of ``exactly''. We begin by introducing a new node $v^*$. Note that $v^*$ is not used as an input of any gate, and it is also the only node that is not used as an output by any gate. Let $V_{sink}$ be the set of all nodes that are not used as an input by any gate. In particular, $v^* \in V_{sink}$. Using a binary tree of $\Y$-gates which takes the nodes in $V_{sink}$ as inputs, we can ensure that the root $u_{sink}$ of the binary tree is the only node that is not used as input by any gate. By using additional $\X$-gates at the leaves (when needed) we can ensure that the graph remains bipartite. Note that $v^*$ is still the only node that is not used as an output of any gate. We finish the construction by adding an \X-gate with input $u_{sink}$ and output $v^*$. If that makes the graph non-bipartite, then we can instead introduce a new node $v$ and gates $\X(u_{sink},v)$ and $\X(v,v^*)$ instead. The instance now satisfies all three conditions of the statement.

\section{Bimatrix Games}\label{app:sec:bimatrix}

In this section we show lower bounds for finding \emph{relative} approximate
well-supported equilibria in bimatrix games. Unlike the \emph{additive}
approximations we have studied so far, in a relative approximate equilibrium
$\eps$ is not measuring the difference between the current strategy's payoff and a deviation's payoff, but instead, the \emph{ratio} of that difference over the deviation's payoff. 

Daskalakis~\cite{Daskalakis13-approximate-Nash} proved that computing a \relwn{\eps} in
bimatrix games with payoffs in $[-1,1]$ is \ppad/-complete for any constant
$\eps \in [0,1)$. This left open the question of inapproximability of bimatrix
games with non-negative payoffs\footnote{Relative \eps-WSNE (as well as relative
\eps-NE) are scale invariant. Therefore, multiplying all payoffs by a positive
constant does not affect the equilibria for a given \eps.}. It
is known that for bimatrix games with payoffs in
$[0,1]$ there is a polynomial time algorithm for finding a \relwn{1/2}~\cite{FederNS07-Nash-small-supp}. The best
hardness result for games with payoffs in $[0, 1]$ was given by
Rubinstein~\cite{Rubinstein18-Nash-inapproximability},
who showed that there exists a constant $\eps$ such that \relwn{\eps}
is \ppad/-hard.

Here we show that the problem is hard for any $\eps \leq 1/57$. 
A bimatrix game is a special case of a polymatrix game (see \cref{sec:polymatrix}) where we only have two players connected by an edge. Hence, we use the same notation as that of the aforementioned section.
Before presenting this section's main result, let us define the equilibrium notion that will be studied in this section.

\begin{definition}[\relwn{\eps}]\label{def:rel wsne}
    The strategy profile $\vb{s}$ is a relative \eps-well-supported Nash equilibrium (\relwn{\eps}) if for every player, each action in the support of her strategy is at least as good as any other action up to a fraction \eps of the latter action's absolute expected utility. In particular, for non-negative utilities we have 
    \begin{align*}
\forall i \in [n], \forall k \in \supp(s_i), \quad u_i(k,\vbsi) \geq (1-\eps) \cdot \br_{i}(\vbsi).
    \end{align*}
\end{definition}

\begin{theorem}\label{thm:bimatrix-rel-WSNE}
    Computing a \relwn{\eps} in a bimatrix game with non-negative payoffs is \ppad/-complete for any $\eps \leq 1/57$.
\end{theorem}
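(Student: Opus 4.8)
The plan is to reduce from the \ppad/-hardness of \wn{\eps'} in polymatrix games established in \cref{thm: polymatrix wsne}, which holds for every $\eps' < 1/3$ and, crucially, for games that are two-action, bounded-degree, and bipartite. Membership of the relative-WSNE search problem in \ppad/ is immediate: an exact Nash equilibrium of a bimatrix game always exists, can be found by a problem in \ppad/~\cite{DaskalakisGP09-Nash,ChenDT09-Nash}, and is in particular a \relwn{\eps} for every $\eps \ge 0$. Every payoff that will appear in the gadgets below is non-negative, so the constructed game indeed has non-negative payoffs. It thus remains to give the hardness reduction and to check that it tolerates relative error up to $\eps = 1/57$.

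The reduction is the polymatrix-to-bimatrix reduction of Rubinstein~\cite{Rubinstein18-Nash-inapproximability}, carried out with an eye towards optimising the constant. Using the bipartition $V = L \sqcup R$ of the polymatrix game, the row player Alice ``is'' the set $L$ and the column player Bob ``is'' the set $R$: Alice's pure strategies are the pairs $(v,a)$ with $v \in L$ and $a$ an action of $v$, and symmetrically for Bob. For a mixed strategy of Alice, the marginal over the first coordinate tells us ``which player'' and, conditioned on a player $v$, the induced distribution over $v$'s actions is a mixed strategy $s_v$ of $v$; we read off a polymatrix strategy profile \vbs this way. The payoffs are a sum of two components: (a) a rescaled copy of the polymatrix payoff matrices, placed on the entries where Alice's player and Bob's player are adjacent in $(V,E)$, so that --- conditioned on playing player $v$ --- Alice's incentives over $v$'s actions mirror player $v$'s incentives in the polymatrix game; and (b) an auxiliary coordination component (a ``who is who'' gadget between the two players), whose purpose is to force, in any \relwn{\eps}, both first-coordinate marginals to be supported on \emph{all} players and to be balanced enough that component~(a) faithfully reproduces the polymatrix utilities $u_v$ up to a controlled slack. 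The free parameters are the relative scalings of (a) and (b), which we choose jointly with $\eps$.

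For correctness, fix a \relwn{\eps} and extract \vbs. Component~(b) guarantees that every polymatrix player is represented (its first-coordinate marginal is bounded away from $0$) and that the marginals are balanced; component~(a) then translates, via the \relwn{\eps} inequalities, into action-level constraints which --- after undoing the rescaling, and using that in the games of \cref{thm: polymatrix wsne} every player's best-response payoff lies in a fixed range bounded away from $0$ --- are exactly the inequalities defining an \wn{\eps'} of the polymatrix game, with $\eps' < 1/3$. Hence \vbs is an \wn{\eps'}, which by \cref{thm: polymatrix wsne} encodes a solution to the underlying \pcircuit instance, and this can all be carried out in polynomial time. Combined with the \ppad/-membership above, this gives \ppad/-completeness.

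\textbf{The main obstacle} lies entirely in component~(b), and is also what pins the threshold down to $1/57$ rather than to $1/3$: there is a genuine tension between making the first-coordinate marginals balanced enough that component~(a) does not introduce spurious slack into the action constraints (which pushes towards making~(b) large relative to~(a)), and keeping the relative-$\eps$ constraint meaningful at the level of individual actions, since whatever fraction of Alice's payoff is accounted for by the coordination component~(b) is, in effect, ``free'' and dilutes the relative force of the component-(a) signal that actually governs her action choices. A naive coordination gadget collapses this trade-off and only yields hardness for $\eps \to 0$; the point of the (improved) construction is to balance the two components --- and the two roles played by $\eps$ and $\eps'$ --- as tightly as possible, which, starting from $\eps' < 1/3$ and using that best-response payoffs in the source games never drop below a fixed positive constant, yields exactly $\eps \le 1/57$.
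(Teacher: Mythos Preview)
Your plan is the paper's: reduce from the $\eps'<1/3$ hardness of \cref{thm: polymatrix wsne} via Rubinstein's bipartite polymatrix-to-bimatrix embedding (a $\lambda$-scaled polymatrix component plus a block-identity ``imitation'' component), prove a balanced-marginals lemma, and optimise $\lambda$. Two small corrections. First, there is no ``improved'' coordination gadget --- the imitation game is exactly Rubinstein's; the $1/57$ comes solely from plugging in the sharper source bound $\eps'<1/3$ and optimising the scaling (the paper ends up with $\lambda\approx 0.1383$, giving $\eps'=\beta\eps$ with $\beta\approx 18.9$). Second, the relative-to-additive conversion is \emph{not} driven by the polymatrix best-response payoffs being bounded away from $0$; rather, the balanced-marginals lemma pins every bimatrix payoff $u_R((i{:}a),y)$ into the window $\bigl[\tfrac{1-\eps-2\lambda}{n},\tfrac{(1+2\lambda)(1-\eps-2\lambda)^{-1}}{n}\bigr]$, and since the imitation part is identical for $(i{:}0)$ and $(i{:}1)$, a relative-$\eps$ gap in the bimatrix game becomes an additive polymatrix gap of at most $\eps(1+2\lambda)\big/\bigl(\lambda(1-\eps-2\lambda)^{2}\bigr)$ --- it is this expression, set below $1/3$ and minimised over $\lambda$, that yields $1/57$.
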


The remainder of this section is devoted to the proof of this theorem.

\paragraph{\bf Hardness.}

We will reduce from the \ppad/-complete problem of computing an (additive) \wn{\eps'} in a polymatrix game with $2n$ players (nodes) and two actions per player for $\eps' < 1/3$ (as shown in \cref{thm: polymatrix wsne}) to the problem of computing a \relwn{\eps} in a $2n \times 2n$ bimatrix game for $\eps = \eps'/\beta$, where $\beta = 18.9$. Our reduction closely follows that of Rubinstein's \cite{Rubinstein18-Nash-inapproximability}, but instead of reducing from (additive) $\eps'$-NE in polymatrix games, we reduce from their \wn{\eps'} counterpart for which we obtained hardness for significantly higher $\eps'$. 
    
The construction of the bimatrix game is identical to that of the aforementioned paper, but we optimize the involved parameters so that we get the highest \eps possible. In particular, since again we reduce from a \emph{bipartite} polymatrix game (see \cref{thm: polymatrix wsne}) we consider the two sides of the graph, namely the $R$-side and the $C$-side. Without loss of generality, we consider that each side of the graph has $n$ nodes, since we can add ``dummy'' nodes to the smallest of the two sides. We then consider two super-players, namely the \emph{row player} $R$ and the \emph{column player} $C$ for the two sets of nodes. Each super-player ``represents'' her nodes in the new bimatrix game, meaning that her action set consists of the union of the individual nodes' action sets. We embed the polymatrix game into the bimatrix game, and multiply all the corresponding payoffs with a positive constant $\lambda < 1$ (to be defined later) so that all payoffs are normalized in $[0,\lambda]$. Then, to this game we add an ``imitation game'' as follows. To the payoff matrix of the row player we add a ``block identity matrix'' $\mathbb{1}_{b(2 \times 2)}$ which is a $2n \times 2n$ matrix with its diagonal $2 \times 2$ blocks having 1's and the rest 0's. To the payoff matrix of the column player we add a modified ``block identity matrix'' which is like $\mathbb{1}_{b(2 \times 2)}$ but with its entries shifted by two columns to the right (modulo $2n$). 

For simplicity, given a strategy of a player in the bimatrix game, we refer to the total probability mass of the two actions corresponding to the polymatrix node $i \in [n]$ as the \emph{mass of node $i$} and we denote it as $x(i), y(i)$ for the row and column player, respectively. The actions 0, 1 of node $i$ and their probability masses are denoted by $(i:a)$, and $x(i:a)$, $y(i:a)$, respectively, for $a \in \{0,1\}$. For ease of presentation in what follows, we will refer to the players of the polymatrix game as ``nodes'' and reserve the word ``players'' for the bimatrix game's participants. We denote by $N_{R}(i)$ the neighbourhood of node $i$ that belongs to the $R$-side of the polymatrix game (and respectively $N_{C}(i)$ for a node $i$ on the $C$-side).
    
First, we present modified versions of the respective results of \cite{Rubinstein18-Nash-inapproximability} tailored to the needs of our reduction. Since their proofs are almost identical to those of the aforementioned paper, we have preserved most of their notation.

\begin{lemma}[Modified Lemma 9.3 of \cite{Rubinstein18-Nash-inapproximability}]
    In every $(x,y)$ \relwn{\eps}, for $\lambda < \frac{1-\eps}{2}$, $x(i), y(i) \in \left[ \frac{1 - \eps - 2 \lambda}{n}, \frac{(1 - \eps - 2 \lambda)^{-1}}{n} \right]$.
\end{lemma}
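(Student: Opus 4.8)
The plan is to follow Rubinstein's proof of his Lemma~9.3, adapting it to the relative approximation notion and to our scaling constant $\lambda$. First I would record the decomposition of a player's payoff. When the row player plays any pure action belonging to one of her nodes~$i$, her payoff is the sum of the embedded polymatrix payoff of that action---which, after the scaling by $\lambda$ and because of the normalization with $U_i=1$, $L_i=0$, lies in $[0,\lambda]$---and an ``imitation'' term equal to $y(i)$, the total mass the column player puts on node~$i$ (because the block identity $\mathbb{1}_{b(2\times 2)}$ contributes $1$ exactly when the two players' node indices agree). Hence the row player's payoff for any action of node~$i$ lies in $[y(i),\,y(i)+\lambda]$. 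Symmetrically, the column player's payoff for any action of node~$j$ lies in $[x(\sigma(j)),\,x(\sigma(j))+\lambda]$, where $\sigma$ is the one-node cyclic shift on $[n]$ induced by the two-column shift of the block identity in the column player's matrix. Since $\sum_i x(i)=\sum_i y(i)=1$, the pigeonhole principle gives best-response payoffs $\br_R\ge\max_i y(i)\ge 1/n$ and $\br_C\ge\max_i x(i)\ge 1/n$, and the decomposition also yields $\br_R\le\max_i y(i)+\lambda$ and $\br_C\le\max_i x(i)+\lambda$.

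The crux is to show that in any relative $\eps$-WSNE both super-players place positive probability on (some action of) every node, i.e., $x(i)>0$ and $y(i)>0$ for all $i\in[n]$; here the cyclic structure of the imitation game is indispensable. Suppose the row player leaves node $i_0$ unplayed, so $x(i_0)=0$. Then the column player's node $\sigma^{-1}(i_0)$ earns only its polymatrix contribution, which is at most $\lambda$; since any played node of the column player must earn at least $(1-\eps)\br_C\ge (1-\eps)/n$, and $\lambda$ is small relative to this threshold, node $\sigma^{-1}(i_0)$ cannot be played, so $y(\sigma^{-1}(i_0))=0$. But then the row player's node $\sigma^{-1}(i_0)$ earns only its polymatrix contribution, at most $\lambda$, hence is unplayed as well; iterating around the cycle generated by $\sigma$ shows that the row player plays no node at all, contradicting $\sum_i x(i)=1$. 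The symmetric statement holds for the column player. Quantifying ``$\lambda$ small relative to $(1-\eps)/n$'' and carrying the resulting constants through is where I expect the real effort to lie, and it is where the hypothesis $\lambda<(1-\eps)/2$ enters.

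With both supports equal to $[n]$, the bounds follow by bookkeeping. For every $i$, since node~$i$ is played by the row player, the relative-best-response condition together with the fact that the polymatrix part is at most $\lambda$ gives $y(i)\ge(1-\eps)\br_R-\lambda\ge(1-\eps)\max_{i'}y(i')-\lambda$; symmetrically, after undoing the shift, $x(i)\ge(1-\eps)\max_{i'}x(i')-\lambda$ for every $i$. Combining these per-coordinate lower bounds with $\sum_i y(i)=1$ forces $\max_i y(i)$ to be close to $1/n$, and re-inserting this bound pins every $y(i)$---and by the same computation every $x(i)$---into $\left[\frac{1-\eps-2\lambda}{n},\,\frac{1}{(1-\eps-2\lambda)n}\right]$; the assumption $\lambda<(1-\eps)/2$ is exactly what keeps the left endpoint positive. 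The polymatrix term never contributes more than the single additive $\lambda$ already accounted for in each estimate.

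In short, the main obstacle is the middle step: making the ``unplayed node propagates around the $\sigma$-cycle'' argument rigorous and extracting precisely the constant $1-\eps-2\lambda$ rather than a lossier one. The remaining steps are elementary arithmetic with the payoff decomposition and the normalizations $\sum_i x(i)=\sum_i y(i)=1$.
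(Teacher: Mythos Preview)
Your overall architecture---decompose payoffs into an imitation part plus a polymatrix part, run a cycle-propagation argument to show every node is played, then do arithmetic with $\sum_i x(i)=\sum_i y(i)=1$---matches the paper. But there is a genuine gap in the quantitative estimate for the polymatrix contribution, and it is exactly the step you flagged as ``where I expect the real effort to lie.''

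You bound the polymatrix payoff of any pure action by the absolute constant $\lambda$. This is a valid upper bound, but it is too loose by a factor of order $n$. In the bimatrix embedding, the polymatrix payoff the column player receives from an action of node $j$ is not a normalized polymatrix payoff in $[0,\lambda]$; it is a sum over the neighbours $k\in N_R(j)$ of terms each bounded by $\lambda\cdot x(k)$. The paper uses that at most two neighbours contribute positive payoff and that $x(k)\le t_R:=\max_i x(i)$, giving the bound $2\lambda\,t_R$ rather than $\lambda$. This scaling with $t_R$ is the whole point: comparing $x(j)+2\lambda t_R$ with the guaranteed payoff $t_R$ yields the $n$-independent condition $x(j)<(1-\eps-2\lambda)t_R$, and for $x(j)=0$ this is implied by $\lambda<(1-\eps)/2$.

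With your absolute bound, the cycle step would require $\lambda<(1-\eps)\br_C$, and since $\br_C$ may be as small as $1/n$, you would need $\lambda<(1-\eps)/n$, which the hypothesis $\lambda<(1-\eps)/2$ does not give. Likewise, your bookkeeping inequality $y(i)\ge(1-\eps)\max_{i'}y(i')-\lambda$ yields only $y(i)\ge(1-\eps)/n-\lambda$, not $(1-\eps-2\lambda)/n$; these differ by a factor of $n$ in the $\lambda$-term. Replacing ``$\lambda$'' by ``$2\lambda\,t_R$'' (respectively $2\lambda\,t_C$) throughout fixes both issues and recovers the stated interval directly.
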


\begin{proof}
      Let $t_R = \max_{i} x(i)$ and $i^{*}_R = \arg \max_{i} x(i)$, and let us call $U_{R}((i:a), y)$ the (expected) payoff that the row player gets from playing action $(i:a)$, $a \in \{0,1\}$ (respectively, $U_{C}(x, (i:a))$ for the column player). We will show that, for any $j \in [n]$, if $x(j) < (1 - \eps - 2 \lambda) t_R$ then $y(j+1) = 0$. For any action $(j+1:a)$ that the column player picks, her payoff is at most $U_{C}(x, (j+1:a)) \leq x(j) + 2 \cdot \lambda \cdot t_R$. That is due to the imitation game that gives her payoff at most $x(j)$ and the payoffs induced due to her neighbours $k \in N(j+1)$ in the polymatrix game that give her positive payoff\footnote{From the \ppad/-hard \wn{\eps} instances we construct for polymatrix games in \cref{sec:WSNE-polymatrix}, observe that at most two neighbours (out of three) give positive payoff to the node. Namely, only its parents can give it positive payoff while its children always give it 0 payoff.}, combined with the fact that $x(k) \leq t_R$. Therefore, $U_{C}(x, (j+1:a)) < (1 - \eps) t_R$, but the column player can guarantee a payoff of at least $t_R$ by playing action $(i^{*}_R + 1 : a)$ for any $a \in \{ 0, 1 \}$. This would contradict the \wn{\eps} condition, therefore $y(j+1) = 0$. Similarly, for any $j \in [n]$, if $y(j) < (1 - \eps - 2 \lambda) t_C$ then $x(j) = 0$, where $t_C = \max_{i} y(i)$.
      
      Now we claim that for every $i \in [n]$, $x(i), y(i) > 0$. For the sake of contradiction assume that $x(i) = 0$ without loss of generality. Then, since $\lambda < (1-\eps)/2$ and $t_{R} \geq 1/n > 0$, we get $x(i) < (1 - \eps - 2 \lambda) t_R$, and therefore $y(i+1) = 0$. With a similar argument for $y(i+1)$ we deduce from the previous paragraph that $x(i+1) = 0$, and this inductively yields that for all $i \in [n]$, $x(i) = y(i) = 0$, a contradiction. Therefore, for all $i \in [n]$, $x(i), y(i) > 0$. This, combined with the above paragraph yields that for all $i \in [n]$, $x(i) \geq (1 - \eps - 2 \lambda) t_R \geq (1 - \eps - 2 \lambda) /n$.
      
      The upper bound is deduced by the fact that if $t_R > (1 - \eps - 2 \lambda)^{-1}/{n}$. Since there exists a node $i$ with $x(i) \leq 1/n$ (otherwise $\sum_{j \in [n]} x(j) > 1$), we have that $x(i) < (1 - \eps - 2 \lambda) t_R$, which cannot happen as shown in the first paragraph of the proof. Therefore, for any $i \in [n]$, $x(i) \leq t_R \leq (1 - \eps - 2 \lambda)^{-1}/{n}$. A similar argument holds for $t_C$ which yields the upper bound for $y(i)$.
\end{proof}

\begin{corollary}[Modified Corollary 9.4 of \cite{Rubinstein18-Nash-inapproximability}]\label{cor:util-bounds-relWSNE}
    In every $(x,y)$ \relwn{\eps}, for $\lambda < \frac{1-\eps}{2}$, the expected utilities of both players for playing any action against the strategy of the other player are in $\left[ \frac{1 - \eps - 2 \lambda}{n}, \frac{(1 + 2\lambda)(1 - \eps - 2 \lambda)^{-1}}{n} \right]$.
\end{corollary}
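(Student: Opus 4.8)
The plan is to write each super‑player's bimatrix payoff for a pure action as a sum of an ``imitation'' term (coming from the block identity matrix) and a scaled ``polymatrix'' term (coming from the embedded \polymatrix instance), and then bound each term separately using the node–mass estimates of the preceding lemma (Modified Lemma~9.3). Fix a \relwn{\eps} $(x,y)$ with $\lambda < \tfrac{1-\eps}{2}$ and consider an arbitrary pure action $(i:a)$ of the row player; the column player is symmetric. By construction of the game, $U_R((i:a),y) = y(i) + \lambda\cdot P$, where $y(i)$ is the imitation term (the row player's block identity is unshifted, so action $(i:a)$ collects exactly the column player's total mass on node $i$, independently of $a$) and $P\ge 0$ is the payoff that node $i$ earns in the (normalized, non‑negative) polymatrix game when it plays action $a$ against the node strategies induced by $y$.

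The next step is an upper bound on $P$. Here I would invoke the structural properties of the hard \polymatrix instances guaranteed by \cref{cor:pcircuit-restricted} and already used in the proof of the preceding lemma: every node has in‑degree at most two, only a node's parents contribute positive payoff (its children contribute $0$), and a single incident edge contributes at most the neighbouring node's total mass; since the neighbours of an $R$‑side node lie on the $C$‑side, this yields $P \le 2\,t_C$ with $t_C := \max_k y(k)$. Combining with the first step, $y(i) \le U_R((i:a),y) \le y(i) + 2\lambda t_C \le (1+2\lambda)\,t_C$. The preceding lemma now gives $y(i) \ge \tfrac{1-\eps-2\lambda}{n}$ and $t_C \le \tfrac{(1-\eps-2\lambda)^{-1}}{n}$, so $U_R((i:a),y)$ lies in the claimed interval $\bigl[\tfrac{1-\eps-2\lambda}{n},\ \tfrac{(1+2\lambda)(1-\eps-2\lambda)^{-1}}{n}\bigr]$.

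For the column player the only change is that its block identity is shifted by one node, so (matching the proof of the preceding lemma) $U_C(x,(j+1:a)) = x(j) + \lambda\cdot P'$ with $0 \le P' \le 2\,t_R$, and the same two inequalities of the preceding lemma, applied to $x$, close the bound. Finally, since the expected utility of a player against the opponent's fixed strategy for any own (possibly mixed) strategy is a convex combination of the pure‑action utilities just bounded, it lies in the same interval. I expect the only non‑routine point to be the upper bound $P \le 2\,t_C$ (resp.\ $P' \le 2\,t_R$) on the polymatrix contribution: this genuinely uses the degree bound and the parent/child payoff pattern of the instances constructed in \cref{sec:WSNE-polymatrix}, rather than merely the $[0,1]$ normalization of each player's payoff; everything else is direct substitution into the preceding lemma.
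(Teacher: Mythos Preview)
Your proposal is correct and follows essentially the same approach as the paper: decompose each super-player's payoff for a pure action into the imitation term $y(i)$ (resp.\ $x(j)$) plus a scaled polymatrix term, bound the latter by $2\lambda t_C$ (resp.\ $2\lambda t_R$) using the fact that at most two neighbours contribute positive payoff, and then plug in the node-mass bounds from the preceding lemma. The only cosmetic difference is that the paper phrases the upper bound on the polymatrix contribution per-neighbour rather than via $t_C$, but the resulting estimate $(1+2\lambda)(1-\eps-2\lambda)^{-1}/n$ is obtained the same way.
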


\begin{proof}
    The lower bound comes from the lower bound of the above lemma and the fact that the row player from any action $(i:a)$, $a \in \{ 0, 1 \}$ against $y$, gets at least the payoff of the imitation game, multiplied by $y(i:a) + y(i:1-a) = y(i)$. The upper bound comes from the upper bound of the above lemma and the fact that the row player from any action $(i:a)$, $a \in \{ 0, 1 \}$ against $y$ gets from each entry $((i:a), j)$, $j \in [n]$ of the matrix at most payoff $1 \cdot y(i)$ from the imitation game, and additional payoff of at most $\lambda \cdot y(i)$ from each of two neighbouring nodes in the polymatrix game (note again that only two out of maximum three neighbours can give her positive payoff). A symmetric argument holds for the column player's bounds.
\end{proof}

Fix $\beta = 18.9$. Now we will show that, given an $\eps'<1/3$, for any $(x,y)$ which is a \relwn{\eps'/\beta} in the constructed bimatrix game, the marginal distributions of the actions $(i:a)$, $a \in \{0,1\}$ for each node $i \in [n]$ constitute an (additive) $\eps'$-WSNE of the original polymatrix game. In particular, we will prove that the strategy profile where each node $i$ in the polymatrix game belonging to the $R$-side of the bipartite graph plays $(x(i:0)/x(i), x(i:1)/x(i))$, and similarly each node $j$ of the $C$ side of the bipartite graph plays $(y(j:0)/y(j), y(j:1)/y(j))$ is an \wn{\eps'}. 

For the sake of contradiction, assume that the marginal distributions of the node strategies are not such an \wn{\eps'}. Then, given the marginals induced by $(x,y)$ there is a node $i$ on the left side of the bipartite graph (without loss of generality) for whom playing action $(i:a)$ for some $a \in \{0,1\}$ against $y$ gives her additive $\eps'$ more expected utility than what the action $(i:1-a)$ gives. This discrepancy, translated in the bimatrix game (where we have multiplied all payoffs of the polymatrix game by $\lambda$) means that the difference in payoff would be at least $\eps' \cdot \lambda \cdot \frac{1 - \eps - 2 \lambda}{n}$ due to \cref{cor:util-bounds-relWSNE}, where $\eps = \eps'/\beta$. Then, again by \cref{cor:util-bounds-relWSNE} which bounds her maximum utility from playing any action $(i:a)$ against the mixed strategy $y$, we conclude that her relative increase in expected payoff is at least
\begin{align*}
    \frac{ u_{R}((i:1-a), y) + \eps' \cdot \lambda \cdot \frac{1 - \eps - 2 \lambda}{n} }{u_{R}((i:1-a), y)} &\geq 1 + \frac{ \eps' \cdot \lambda \cdot \frac{1 - \eps - 2 \lambda}{n} }{\frac{(1 + 2\lambda)(1 - \eps - 2 \lambda)^{-1}}{n}}  \quad \text{(by \cref{cor:util-bounds-relWSNE})} \\
  &= 1 + \eps' \cdot \frac{\lambda}{1 + 2\lambda} \cdot (1 - \eps - 2 \lambda)^{2}\\
  &= 1 + \beta \cdot \eps \cdot \frac{\lambda}{1 + 2\lambda} \cdot (1 - \eps - 2 \lambda)^{2} \quad \text{(since $\eps = \eps'/\beta$)} .
\end{align*}
We now find the value for $\lambda$ that maximizes the above expression, namely, $\lambda = \frac{-3 + \sqrt{17-8\eps}}{8}$. For this value of $\lambda$ and for every $\eps = \eps'/\beta \leq 1/57$ we have that 
\begin{align*}
    1 + \beta \cdot \eps \cdot \frac{\lambda}{1 + 2\lambda} \cdot (1 - \eps - 2 \lambda)^{2} > \frac{1}{1-\eps}
\end{align*}
Note that the optimum value of $\lambda$ is potentially irrational, and therefore, not suitable as input to the algorithm of our reduction. Nevertheless, the proof goes through for $\lambda = 0.1383$.
The above strict inequality shows that the action that the node deviated to gives the bimatrix player more than relative $1/(1-\eps)$ expected utility, and therefore by definition our initial profile $(x,y)$ is
not a \relwn{\eps} (for $\eps = \eps'/\beta$), a contradiction. This completes the proof of \cref{thm:bimatrix-rel-WSNE}.

\begin{remark}
    In fact, \cref{thm:bimatrix-rel-WSNE} holds for any $\eps<\frac{1}{3 \beta^*}$, where $\beta^*$ is the optimum value from the above proof. This $\beta^*$ is in $(18.86, 18.87)$, but we have picked $18.9$ for ease of presentation.
\end{remark}

\bigskip
\subsubsection*{Acknowledgements}
We thank the anonymous reviewers for comments and suggestions that helped improve the presentation of the paper. We are also very grateful to Steffen Schuldenzucker for feedback on an earlier version of the manuscript, and to Christian Ikenmeyer for pointing out the connection to hazard-free circuits.
The first author was supported by EPSRC Grant EP/X039862/1 ``NAfANE: New Approaches for Approximate Nash Equilibria''. The second author was supported by EPSRC grant EP/W014750/1 ``New Techniques for Resolving Boundary Problems in Total Search''.

\bibliographystyle{alphaurl}
\bibliography{references}

\end{document}